\newcommand{\minitab}[2][l]{\begin{tabular}{#1}#2\end{tabular}}
\definecolor{ocre}{rgb}{0.72,0,0} % Define the color BrickRed
\definecolor{newblue}{rgb}{0.0,0.0,0.5} % Define the color BrickRed
\newcommand{\N}{\mathbb{N}}
\newcommand{\ce}{\mathrm{e}}
\newcommand{\pat}{\mathcal{P}}
\newcommand{\dist}{\operatorname{dist}}
\newcommand{\disc}{\operatorname{disc}}
\newcommand{\cont}{\operatorname{cont}}
\newcommand{\poly}{\operatorname{poly}}
\newcommand{\polylog}{\operatorname{polylog}}
\renewcommand{\deg}{d}
\newcommand{\diam}{\operatorname{diam}}
\newcommand{\Odd}{\mathsf{Odd}}
\newcommand{\eps}{\epsilon}
\renewcommand{\leq}{\leqslant}
\renewcommand{\geq}{\geqslant}
\renewcommand{\le}{\leqslant}
\renewcommand{\ge}{\geqslant}
\newcommand{\Oh}{\mathcal{O}}
\newcommand{\Hmax}[2]{\mathsf{H}_{\max}}
\renewcommand{\P}{\mathbf{P}}
\newcommand{\bP}{\mathbf{P}}
\newcommand{\M}{\mathbf{M}}
\newcommand{\bM}{\mathbf{M}}
\newcommand{\coni}{\mbox{\textsc{Cond1}}}
\newcommand{\conii}{\mbox{\textsc{Cond2}}}
\newcommand{\coniii}{\mbox{\textsc{Cond3}}}
\newcommand{\thmref}[1]{Theorem~\ref{thm:#1}}
\newcommand{\thmmref}[1]{Theorem~\ref{thm:#1}}
\newcommand{\lemref}[1]{Lemma~\ref{lem:#1}}
\newcommand{\corref}[1]{Corollary~\ref{cor:#1}}
\newcommand{\obsref}[1]{Observation~\ref{obs:#1}}
\newcommand{\figref}[1]{Figure~\ref{fig:#1}}
\newcommand{\tabref}[1]{Table~\ref{tab:#1}}
\newcommand{\secref}[1]{Section~\ref{sec:#1}}
\newcommand{\eq}[1]{\eqref{eq:#1}}
\newcommand{\Pro}[1]{\mathbf{Pr} \left[\,#1\,\right]}
\newcommand{\Ex}[1]{\mathbf{E} \left[\,#1\,\right]}
\newcommand{\EXX}[2]{\mathbf{E}_{#1} \left[\,#2\,\right]}
\renewcommand{\diam}{\operatorname{diam}}
\renewcommand{\tilde}{\widetilde}
\renewcommand{\epsilon}{\varepsilon}
\def\mod{\operatorname{mod}}
\newtheorem{theorem}{Theorem}  %[chapter]
\newtheorem{lemma}[theorem]{Lemma}
\newtheorem{obs}[theorem]{Observation}
\newtheorem{corollary}[theorem]{Corollary}
\newtheorem{rem}[theorem]{Remark}
\theoremstyle{definition}
\newtheorem{defi}[theorem]{Definition}
\renewcommand{\tilde}{\widetilde}
\title{\textbf{Tight Bounds for Randomized Load Balancing on \\Arbitrary Network Topologies}\footnote{A preliminary version of this work appeared in Proceedings of the 53rd Annual IEEE Symposium on Foundations of Computer Science (FOCS), 2012.}}
\author{Thomas Sauerwald\\
University of Cambridge\\ Cambridge, UK\\
\texttt{thomas.sauerwald@cl.cam.ac.uk}
\and
He Sun\\
Max Planck Institute for Informatics\\
Saarbr\"{u}cken, Germany\\
\texttt{hsun@mpi-inf.mpg.de}\\
}
\begin{document}

\maketitle

\begin{abstract}
We consider the problem of balancing load items (tokens) in networks.
Starting with an arbitrary load distribution,
we allow  nodes to exchange tokens with their neighbors in each round.
The goal is to achieve a distribution where all nodes have nearly the same number of tokens.

For the continuous case where tokens are arbitrarily divisible, most load balancing schemes correspond to Markov chains, whose convergence is fairly well-understood in terms of their spectral gap. However, in many applications, load items cannot be divided arbitrarily, and we need to deal with the discrete case where the load is composed of indivisible tokens. This discretization entails a non-linear behavior due to its rounding errors, which makes this analysis much harder than in the continuous case.

We investigate several randomized protocols for different communication models in the discrete case. As our main result, we prove that for any regular network in the matching model, all nodes have the same load up to an additive constant in (asymptotically) the same number of rounds as required in the continuous case. This generalizes and tightens the previous best result, which only holds for expander graphs \cite{FS09}, and demonstrates that there is almost no difference between the discrete and continuous cases. Our results also provide a positive answer to the question of how well discrete load balancing can be approximated by (continuous) Markov chains, which has been posed by many researchers, e.g., \cite{RSW98,MGS98,LW95}.

\end{abstract}

\numberwithin{theorem}{section}
\numberwithin{equation}{section}
%
%\settocdepth{subsection}
%\newpage
%
%\thispagestyle{empty}
%
%\tableofcontents
%
%\setcounter{page}{0}
%
%\newpage

\newpage

\thispagestyle{empty}

\tableofcontents

\setcounter{page}{0}

\newpage

\section{Introduction}

Consider an application running on a parallel network with $n$ processors. Every processor has initially a certain amount of \emph{tokens} (jobs) and the processors are connected by an arbitrary graph. The goal of load balancing is to reallocate the tokens by transferring them along the edges so that eventually every processor has almost the same number of tokens.

Load balancing is a well-studied problem in distributed systems and has manifold applications in scheduling~\citep{Surana06}, hashing~\cite{M04},
routing~\citep{Cyb89},
numerical computation such as solving partial differential equations~\citep{Zhanga09,Wil91,SS94}, and simulation of dynamics \cite{Boi91}. This trend has been reinforced by the flattening of processor speeds, leading to an increasing usage of multi-core processors
%\cite{ATNW11,LZ07}
and the emergence of large decentralized networks like P2P networks. %\cite{KR04, Surana06, AHKV03}.
 Especially for large-scale networks, it is desirable to use local and iterative load balancing protocols, where every processor only needs to know its current and neighboring processors' loads, and based on that, decides how many tokens should be sent or received.

%A widely used approach is the so-called diffusion (i.e., the first-order-diffusion scheme \cite{Cyb89,MGS98}), where the amount of load sent along each edge in every round is proportional to the load difference between the incident nodes. The alternative is the matching model where in each round only edges of a specified matching can be used for averaging the load. \textbf{Here, the matching in each round could be either generated randomly (random matching model), or determined by a (deterministic) sequence of matchings being applied periodically (balancing circuit model).
%}
%\textbf{While the diffusion model may seem more natural at first sight, the matching model offers nearly the same convergence rate while reducing the amount of communication and concurrency issues.
%}
%\textcolor{red}{The other is the matching model, where every round associates with a (randomly chosen or deterministic) matching, and
%every node balances its load with the neighbor which connected by a matching edge. Comparing with the diffusion model, the matching model reduces the communication cost.}

A common classification of iterative load balancing schemes is the distinction between \emph{diffusion} and \emph{matching} models. In the case of diffusion, every node balances its load concurrently with \emph{all} neighbors in each round.
In the matching model, by contrast, every node balances its load with at most \emph{one} neighbor in each round.

We measure the smoothness of the load distribution by the so-called {\em discrepancy}, which is the difference between the maximum and minimum load among all nodes. In view of more complex scenarios where jobs may be removed or injected, the discrepancy
is a more precise measure than the {\em makespan}, which only takes the maximum load into account.

Many studies on load balancing assume that the load is arbitrarily divisible. In this so-called {\em continuous case}, the diffusion scheme corresponds to a Markov chain on the graph, and one can resort to a wide range of techniques to analyze the convergence rate \cite{Boillat90,GLM99,MGS98}. In particular, the {\em spectral gap} captures the time to reach a small discrepancy fairly accurately~\cite{SJ89,RSW98}. This relation continues to hold for the matching model, even if the sequence of matchings is generated randomly, which might be necessary for graphs with no canonical sequence~\cite{BGPS06,MG96}.

However, in many applications a processor's load may consist of jobs which are not further divisible, which is why the continuous case is also referred to as the ``idealized case''~\cite{RSW98}. A common way to model indivisible jobs is the {\em unit-size token model} where one assumes a smallest load entity, the unit-size token, and load is always represented by a multiple of this smallest entity. In the following, we will refer to the unit-size token model as the {\em discrete case}. It is well-known that the discrete case can be roughly modeled by Markov chains, but this approximation may be quite inaccurate~\cite{RSW98}. Hence, the question of a precise quantitative relationship between Markov chains and load balancing algorithms has been posed in many references~\cite{RSW98,MGS98,MG96,LW95,GLM99,SS94}. Unfortunately, the discrete case is much more difficult to analyze due to its nonlinearity that is caused by the roundings to whole tokens in each round.

\citet{MGS98} proved the first rigorous result for the discrete case in the diffusion model. They assume that the number of tokens sent along each edge is
obtained by rounding down the amount of load that would be sent in the continuous case. Using this approach, they established that the discrepancy is at most $\Oh\big(\frac{dn}{1-\lambda}\big)$ after $\Oh\big(\frac{\log (Kn)}{1-\lambda}\big)$ rounds, where $d$ is the degree, $K$ is the discrepancy of the initial load vector, and $1-\lambda$ is the spectral gap of this diffusion matrix.
Similar results for the matching model were shown by \citet{MG96}.
%For the matching model, \citet{MG96} proved comparable results where the upper bounds also grow linearly in the number of nodes $n$.

Further progress was made by \citet{RSW98} who introduced the so-called {\em local divergence}, which is a natural parameter that essentially aggregates the sum of load differences over all edges in all rounds. For both the diffusion and matching models, they proved that the local divergence yields an upper bound on the maximum deviation between the continuous and discrete case for the aforementioned rounding down approach. They also computed the local divergence for different networks~(e.g., torus graphs) and proved a general upper bound  that translates into a discrepancy bound of $\Oh\big(\frac{ d \log n}{1-\lambda}\big)$ after $\Oh\big(\frac{ \log (Kn)}{1-\lambda}\big)$ rounds for any $d$-regular graph.

While always rounding down may lead to quick stabilization, the discrepancy can be substantial, i.e., as large as the diameter of the graph (in case of diffusion, it can even be even the diameter times the degree). Therefore, \citet{RSW98} suggested to use randomized rounding in order to get a better approximation of the continuous case.

\citet{FS09} derived several results for this randomized protocol in the matching model. By analyzing the $\ell_2$-version of the local divergence, the so-called {\em local $2$-divergence}, they proved that on many networks, the randomized protocol yields a square root improvement in terms of the achieved discrepancy compared to the deterministic protocol from \cite{RSW98}. Later \citet{BCFFS11} extended some of these results to the diffusion model.

\citet{AKU05} studied the stability of the random matching model on arbitrary networks. In addition to balancing operations in each round, an adversary can place jobs on the processors, and every processor can execute one job in each round. The main result of \cite{AKU05} states that unless the adversary trivially overloads the network, the system is stable, regardless of how the adversary injects new jobs and the balancing operations are rounded.

Closely related to our problem are balancing networks \cite{AHS94}, which are siblings of sorting networks with comparators replaced with balancers. \citet{KP92} gave the first construction of  a balancing network of depth $\Oh( \log n)$ that achieves a discrepancy of one.  \citet{RSW98} established results
for other networks, but these involve a much larger depth. All of these results~\cite{AHS94,KP92,RSW98} require each balancer to be initialized in a special way, while our randomized protocols do not require any specific initialization, and,
therefore making them more robust and practical.

There are also studies in which nodes are equipped with additional abilities. For instance, \citet{EM05} analyzed a load balancing model where every node knows the average load. \citet{ES10} analyzed an algorithm
that uses random-walk based routing of positive and negative tokens to minimize the makespan. Finally, in \citet{ABS12} nodes simulate the continuous process in parallel to decide how many tokens should be sent.

While all aforementioned load balancing protocols can send an arbitrarily large number of tokens along edges,
several studies considered an alternative model in which only a single token can traverse an each edge in each round. Obviously, the convergence is much slower and at least linear in the initial discrepancy (for concrete results, see \citet{AAMR93,GLM99}).

%\textbf{I suggest to drop this paragraph.}
%Finally, many (re-)allocation schemes have been analyzed which are based on the famous power-of-two choices paradigm for balls-and-bins models (cf.~\citet{M96}). While there are results in distributed settings (e.g., \citet{ACMR98,LW11}), most of them assume a complete graph as the underlying network. One notable exception is the work of \citet{KP06}. However, they only considered the sequential allocation of $n$ tokens  and  required the degree to be at least polynomial in $n$ in order to achieve a discrepancy of $\Oh(\log \log n)$.

\paragraph{Our Results} We analyze several natural randomized protocols for indivisible, unit-size tokens. All protocols use the idea of randomized rounding in order to ``imitate'' the behavior of the continuous case in each round.

Our main result for the matching model is as follows:

\begin{theorem}\label{thm:main}
Let $G$ be a regular graph with $n$ nodes and $K$ be the discrepancy of the initial load vector. There are constants $c_1, c_2 > 0$ independent of $G$ and $K$ so that with probability at least $1-\exp(-(\log n)^{c_1})$, the discrepancy is at most $c_2$ in $\Oh\left(\frac{\log (Kn)}{1-\lambda(\bP)}\right)$ rounds in the random matching model. This also holds in $\Oh\left(\frac{\log (Kn)}{1-\lambda(\bM)}\right)$ rounds in the balancing circuit model if $d$ is constant. Furthermore,  the expected discrepancy is constant in the same number of rounds in both models
\footnote{For precise definitions of both models, $\lambda(\bP)$, $\lambda(\M)$, and $d$, see \secref{matching}.}.
\end{theorem}
%\NOTE{Thomas: Added this paragraph because of the second reviewer.}
%Before discussing the contribution of this result, we point out that the probability space of the random matching model is taken over
%the randomly chosen matchings and the randomized rounding, while for the balancing circuit model the matchings are deterministic and the probability space is taken over the randomized rounding only. Moreover, the bound on the discrepancy is a non-explicit constant which cannot be chosen arbitrarily small.

The two bounds on the runtime in \thmref{main} match the ones from the continuous case up to a constant factor (see \thmref{circuitcontinuous}, \thmref{contmatching}, and the lower bound in \thmref{continuouslowermatching}). The previous best result for this protocol holds only for expander graphs and the number of rounds is a factor $(\log \log n)^3$ larger than ours \cite{FS09}. For expander graphs and $K=\poly(n)$, our algorithm needs only $\Theta(\log n)$ rounds, which would be necessary for any centralized algorithm. For general graphs, all previous bounds on the discrepancy include the spectral gap $1-\lambda$. Therefore, especially for graphs which have small expansion like Torus graphs, our main result represents a vast improvement (\tabref{discrepancy}).

We further analyze the matching model on non-regular graphs. In \thmref{logg}, we show that a discrepancy of $\Oh( \log \log n)$ can be achieved within a number of rounds only an $\Oh(\log \log n)$ factor larger than in the continuous case. Together with \thmref{main}, these results demonstrate that for {\em arbitrary} networks, there is essentially no difference between the discrete and continuous cases.

Finally, we also study two natural diffusion-based protocols in the discrete case \cite{BCFFS11,FGS10}. For these protocols our discrepancy bounds depend only polynomially on the maximum degree $\Delta$ and logarithmically on $n$, while again all previous results include the spectral gap or are restricted to special graph classes~\cite{RSW98,BCFFS11,MGS98,FGS10}. For the precise results, we refer the reader to \secref{diffusion}.

\begin{table}
\caption{Comparison with Previous Work\label{tab:discrepancy}}{%
\begin{tabular*}{1\textwidth}{p{1.4cm}p{4.5cm}p{3cm}p{3cm}p{2.5cm}}
\toprule \textbf{\footnotesize ~~Graphs} & {\textbf{\footnotesize ~\;Rounds}}  &\textbf{\footnotesize \!\!Discrepancy}
&\textbf{\footnotesize \!\!Model}
    &\textbf{\footnotesize \!\!Reference}   \\\hline
        \multirow{4}{*}{\minitab[c]{\footnotesize Expander} } &
         \multirow{4}{*}{\minitab[c]{\footnotesize $\Oh(\log (Kn))$} }   & $\Oh(\log n)$   & det. (BC) & \cite{RSW98} \\\cmidrule(l){3-5}
         & & $\Oh(\log \log n)$ &{rand. (BC \& RM)} & {\cite{FS09}}  \\
         & & $\Oh(1)$ & rand. (BC \& RM) &  \thmmref{main} \\
         \midrule
                  \multirow{6}{*}{\minitab[c]{\footnotesize $r$-dim. \\ Torus} } &
                  \multirow{6}{*}{\minitab[c]{\footnotesize $\Oh\big(\log (Kn) \, n^{2/r}\big)$ } }
                  &  $\Oh(n^{1/r})$   & det. (BC) & \cite{RSW98} \\\cmidrule(l){3-5}
         &    & $\Oh(n^{1/(2r)} \sqrt{\log n})$ & rand. (BC) &  \cite{FS09}  \\
             &    & $\Oh(n^{1/(2r)} \log n)$ & rand. (RM) & \cite{FS09}  \\
         &    & $\Oh(1)$ & rand. (BC \& RM) & \thmmref{main}
                  \\
            \midrule
               \multirow{8}{*}{\minitab[c]{\footnotesize $d$-Regular \\  Graphs} } &
                \multirow{8}{*}{\minitab[c]{\footnotesize $\Oh\left(\frac{\log (Kn)}{1-\lambda}\right)$ } }
               & $\Oh\left(\frac{d \log n}{1-\lambda}\right)$   & det. (BC) & \cite{RSW98} \\\cmidrule(l){3-5}
     &    &  {$\Oh\left( \sqrt{\frac{d \log n}{1-\lambda}}\right)$} & rand. (BC) &  \cite{FS09}  \\
         &    & $\Oh(1)$ & rand. (BC) \ &  \thmmref{main}  \\\cmidrule(l){3-5}
          &  & $\Oh\left( \sqrt{\frac{(\log n)^3}{1-\lambda}}\right)$ & {rand. (RM)} &  \cite{FS09}  \\
                         & & $\Oh(1)$ & rand. (RM) &  \thmmref{main}  \\
            \midrule
            \multirow{6}{*}{\minitab[c]{\footnotesize  Arbitrary \\  Graphs} } &
            \multirow{4}{*}{\minitab[c]{\footnotesize  $\Oh\left(\frac{d\cdot \log (Kn)}{1-\lambda}\right)$ }}
             & $\Oh\left(\frac{d \cdot \log n}{1-\lambda}\right)$   & det. (BC) &  \cite{RSW98} \\\cmidrule(l){3-5}
            & & $\Oh\left( \sqrt{\frac{d \cdot \log n}{1-\lambda}}\right)$   & rand. (BC) &  \cite{FS09} \\\cmidrule(l){2-5}
                    &~~$\Oh(\tau_{\cont}(K,n^{-2}))$  & $\Oh( (\log n)^{\eps} )$ & rand. (BC \& RM) & \thmmref{logg}  \\
         &  ~~$\Oh(\tau_{\cont}(K,n^{-2}) \cdot \log \log n)$& $\Oh(\log \log n)$ & rand. (BC \& RM) &  \thmmref{logg}  \\
        \bottomrule
       \end{tabular*}}
       %\end{center}
            %\vspace{-15pt}
            \caption{
            A comparison of our results for the matching model with the previous best results.
       The initial discrepancy is denoted by $K$, and $1-\lambda$ denotes the spectral gap.
       Here, det.~and rand.~refer to deterministic and randomized rounding, respectively. BC stands for the balancing circuit model, and RM stands for the random matching model.
       Note that $\tau_{\cont}(K,n^{-2})$ is the time for the continuous process to reach a discrepancy of $n^{-2}$ with probability $1-n^{-1}$. For the precise definitions, see~\secref{matching}.}
       %\vspace{-25pt}
\end{table}

\paragraph{Our Techniques} Our main results are based on the combination of two novel techniques, which may have further applications to other problems. First, instead of analyzing the rounding errors for each edge directly \cite{RSW98,MGS98,MG96,FS09,BCFFS11}, we adopt a token-based viewpoint and relate the movements of tokens to parallel random walks. This establishes a nice analogy between the distribution of tokens and the well-studied balls-and-bins model (see \corref{verysparse}), and constitutes an important application of distributed random walks \cite{AKK11,ANP13} to load balancing.
Second, we employ potential functions to reduce the task of balancing an arbitrary load vector to the task of balancing a {\em sparse} load vector, i.e., a load vector that contains much fewer tokens than $n$. Especially for these sparse load vectors, the token-based viewpoint yields much stronger concentration inequalities than the previously known results \cite{FS09,BCFFS11}.

All our discrepancy bounds make use of the  local $2$-divergence, which is one of the most important tools to quantify the deviation between the continuous and the discrete cases~\cite{RSW98,FS09,BCFFS11}. We prove that for {\em any} graph and {\em any} sequence of matchings, the local $2$-divergence is between $1$ and $\sqrt{2}$, while all previous bounds include graph parameters such as the spectral gap or the (maximum) degree. For the diffusion model, the local $2$-divergence is basically given by $\Theta( \sqrt{\Delta})$, where $\Delta$ is the maximum degree. All previous bounds on the local $2$-divergence in the diffusion model depend on the size and expansion of the graph, or are restricted to certain graph classes.

%We study the discrete process of load balancing in our paper. Following \cite{RSW98}, we shall adopt the view that the required rounds for the continuous process to reach a balanced state is well understood. Our aim  is to analyze the balance in the discrete case.

%We first address the upper bound of
%local divergence.  Local divergence is used to bound the deviation between the
%discrete and idealized continuous process and was studied by many references.
%For the matching model, we show that local $2$-divergence is upper bounded by $\sqrt{2}$,
% which improves the previous best result
% $\Oh\big(\sqrt{d\cdot(1-\lambda(\mathcal{M}))^{-1}}\big)$~\cite{FS09}.
% While  all of previous upper bounds depend on some expansion properties, degree and/or the vertex number of the graphs (c.f.  \cite{FS09,RSW98,BCFFS11}), we demonstrate that local $2$-divergence is essentially independent of graph expansion properties. Moreover, we present the first upper bound for local $2$-divergence in the diffusion model. Since both results match the lower bounds up to a constant factor, it essentially settles up our understanding of local $2$-divergence in both models.
  %Our results on local divergence are shown in Table \ref{tbl:divergence}.
%
%

\paragraph{Organization}
The remainder of this paper is organized as follows. \secref{matching} introduces the matching-based model
and provides some basic notations and definitions.
In \secref{divergence}, we derive a tight bound on the local $2$-divergence that is further applied to bound the maximum deviation between the continuous and discrete cases. In \secref{randomwalk}, we introduce a novel technique that relates the movement of tokens to independent random walks. Based on this method, we derive bounds on the discrepancy that hold for arbitrary graphs (see \secref{randomwalkresults}). The proof of our main result (\thmref{main}) is given in \secref{main}. Finally, in \secref{diffusion}, we apply similar techniques as in \secref{divergence} to bound the local $2$-divergence for the diffusion model.

\paragraph{Basic Notations}
We assume that  $G=(V,E)$ is an undirected, connected  graph with  $n$ nodes, indexed from $1$ to $n$.  For any node $u$, let $N(u)$ be the set of neighbors of node $u$ and $\deg(u):=|N(u)|$ its degree.
The maximum degree of $G$ is denoted by $\Delta:=\max_{u}\deg(u)$. By $\diam(G)$, we denote the diameter of $G$. Following \cite{RSW98}, we use the notation $[u:v]$ for an edge $\{u,v\} \in E$ with $u < v$.
For any vector $x=(x_1,\dots,x_n)$, the $p$-norm of $x$ is defined by $\|x\|_p:=\left(\sum_{i=1}^n |x_i|^p\right)^{1/p}$. In particular,  $\|x\|_{\infty}:=\max_{1\leq i\leq n}|x_i|$. We also use $x_{\max}:=\max_{1\leq i \leq n} x_i$, $x_{\min}:=\min_{1 \leq i \leq n} x_{i}$, and $\disc(x):=\max_{i,j}|x_i-x_j|=x_{\max}-x_{\min}$. The $n$-dimensional vector $(1,\ldots,1)$ is also denoted by $\mathbf{1}$ and,  similarly, $(0,\ldots,0)$ is denoted by $\mathbf{0}$.
Throughout the paper, all vectors are represented as row vectors.
The $n$ by $n$ identity matrix is denoted by $\mathbf{I}$.
For any $n$ by $n$ real symmetric matrix $\M$, let $\lambda_1(\M)\geq\dots\geq\lambda_n(\M)$ be the $n$ eigenvalues of matrix $\M$. For simplicity, let $\lambda(\M):=\max\{|\lambda_2(\M)|, |\lambda_n(\M)|\}$.
%Throughout this paper, the nodes of $G$ are expressed by $u, v, w$ and the tokens are expressed by $i,j,k$.
For a non-symmetric matrix $\M$, we define an associated symmetric matrix $\tilde{\M}:=\M\cdot\M^{\mathrm{T}}$, and let $\lambda(\M):=\max\{|\lambda_2(\tilde{\M})|,|\lambda_n(\tilde{\M})|\}$.

By $\log(\cdot)$, we denote the natural logarithm. For any condition $\mathcal{E}$, let $\chi_{\mathcal{E}}$ be the indicator variable which is $1$ if $\mathcal{E}$ holds and $0$, otherwise.

Several inequalities in this paper require that $n$ is
lower bounded by a sufficiently large constant.

%
%Denote by $\mathbf{A}$ the adjacency matrix of $G$,
%i.e. $\mathbf{A}_{u,v}=1$ if $\{u,v\}\in E$, and $\mathbf{A}_{u,v}=0$ otherwise.
%
%Let $\mathbf{L}:=\mathbf{D}-\mathbf{A}$ be the Laplacian matrix of $G$, where $\mathbf{D}$ is the diagonal matrix with
%degree $\deg(1),\dots,\deg(n)$ as diagonal entries.

\section{The Matching Model}\label{sec:matching}

In the {\em matching model} (also known as  {\em dimension exchange model} reflecting its seminal application to hypercubes),  every two matched nodes in round $t$ balance their loads as evenly as possible. This can be expressed by a symmetric $n$ by $n$ matching matrix $\M^{(t)}$, where we,  with slight abuse of notation, use the same symbol for the matching and the corresponding matching matrix. Matrix $\M^{(t)}$ is defined as follows. For every $\{u,v\} \in \M^{(t)}$, we have
$\M_{u,u}^{(t)}:=1/2$, $\M_{v,v}^{(t)}:=1/2$, and $\M_{u,v}^{(t)}=\M_{v,u}^{(t)}:=1/2$. If $u$ is not matched, then $\mathbf{M}^{(t)}_{u,u}=1$, and $\mathbf{M}^{(t)}_{u,v}=0$ for $u \neq v$. We will frequently consider the product of consecutive matching matrices and denote this by $\mathbf{M}^{[t_1,t_2]} = \prod_{s=t_1}^{t_2} \mathbf{M}^{(s)}$ for two rounds $t_1 \leq t_2$. If $t_1 = t_2 + 1$, then $\mathbf{M}^{[t_1,t_2]}=\mathbf{I}$. For any matrix $\mathbf{M}$, let $\M_{u,\cdot}$ be the row of $\M$ corresponding to node $u$ and $\M_{\cdot,u}$ be the column of $\M$ corresponding to node $u$. Note that, in general, the matrix $\mathbf{M}^{[t_1,t_2]}$ may not be symmetric, but as a product of doubly-stochastic matrices, it is doubly-stochastic.

\subsection{Balancing Circuit and Random Matching Model} In the {\em balancing circuit model}, a certain sequence of matchings is applied periodically.
More precisely, let $\mathbf{M}^{(1)},\dots,\mathbf{M}^{(d)}$ be a sequence of $d$ matching matrices\footnote{Traditionally, the variable $d$ is
used for the number of matchings \cite{RSW98,FS09}. There may not exist a direct relation between $d$ and the (maximum) degree of the underlying graph $G$. However, the graph {\em induced} by the union of the $d$ matchings has maximum degree at most $d$.}.
Then in round $t \geq 1$, we apply the matching matrix  \[\mathbf{M}^{(t)} := \mathbf{M}^{(((t-1) \mod d)+1)}.\]
  Following \cite{RSW98}, we define the {\em round matrix} $\M:= \prod_{s=1}^{d} \M^{(s)}$. A natural choice for the $d$ matching matrices is given by an edge coloring of graph $G$. There are various efficient distributed
edge coloring algorithms (see, for example, \cite{PS92,PS97}).

  We notice that the round matrix $\M$ is the product of doubly stochastic matrices and, therefore, also doubly stochastic. Since we further assume that $\M$ is ergodic, the uniform distribution is the (unique) stationary distribution of the associated Markov chain. However, the matrix $\M$ may not be symmetric. As discussed in \cite{RSW98},  we can relate the convergence of $\M$ in this case to that of an associated symmetric matrix $\widetilde{\M}:=\M \cdot \M^{\mathrm{T}}$, the so-called multiplicative reversibilization \cite{F91}. Since $\M$ is ergodic, we have $\lambda(\M) < 1$ in either case.

%For instance, the algorithm of~\citet{PS92} computes
%an edge-coloring using at most $\Delta+1$ colors in $\Oh(\polylog(n))$ time. Secondly,
%the algorithm of~\cite{PS97} by the same authors computes
%for  any fixed $\delta > 0$ an edge-coloring using at most $1.6 \Delta + \Oh( \log^{1+\delta} n)$ colors in
%$\Oh(\log n)$ time.

The alternative to  the balancing circuit model is the \emph{random matching model}, where one generates a random matching in each round. There are several simple and distributed randomized protocols to generate such matchings. For instance, \citet{MG96} analyzed a two-stage protocol for $d$-regular graphs where in the first stage every node picks an incident edge independently with probability $\Theta(1/d)$. In the second stage, we consider the matching formed by all edges that are not incident to any other edge chosen in the first stage. A similar protocol  studied in \cite{BGPS06}  also works for non-regular graphs.
%Note that to implement this algorithm, all nodes need to know the maximum degree (or at least an estimate of this), which can be achieved easily by a simple gossiping procedure~(cf.~\cite{BGPS06}, whose running time is smaller than the time for the load balancing).
These protocols have two natural properties  sufficient for our analysis. First, for \[
p_{\min} := \min_{t \in \mathbb{N}} \min_{\{u,v\} \in E} \Pro{ \{u,v\} \in \M^{(t)} },\]
 we have $p_{\min} \geq c_{\min} \cdot \frac{1}{\Delta}$ for some constant $c_{\min} > 0$. Second, random matchings generated in different rounds are mutually independent.

%It is not difficult to prove that every edge will be part of the matching after the second stage with probability at least $p_{\min}:=\Theta(1/\Delta)$ (cf.~\cite{MG96,BGPS06}).

%In this model, we generate a random matching in each round, independently of all previous rounds.
%Our results are based on $\lambda:=\max\left\{|\lambda_2|,|\lambda_n| \right\}$, where $\lambda_1\geq\dots \geq\lambda_n$ are the eigenvalues of the matrix $\M := \mathbf{I} - \frac{1}{2 \Delta} \mathbf{L}$.

\subsection{The Continuous Case}
In the continuous case, load is arbitrarily divisible. Let $\xi^{(0)} \in \mathbb{R}^n$ be the initial load vector, and two matched nodes balance their loads perfectly in every round. It is easy to see that  this process corresponds to a linear system and the load vector $\xi^{(t)},$ $t\geq 1$, can be expressed as
$\xi^{(t)} = \xi^{(t-1)} \,\M^{(t)}$,
which results in $\xi^{(t)} = \xi^{(0)} \, \M^{[1,t]}$. Moreover,
\begin{align*}
\xi_u^{(t)} &=
    \xi_u^{(t-1)} +   \sum_{v \colon \{u,v\} \in E} \left( \xi_v^{(t-1)} \M_{v,u}^{(t)} -
    \xi_u^{(t-1)} \M_{u,v}^{(t)} \right) \\
    & = \xi_u^{(t-1)} +  \hspace{-3pt}  \sum_{v \colon \{u,v\} \in \M^{(t)}} \left( \frac{1}{2} \xi_v^{(t-1)} - \frac{1}{2}
    \xi_u^{(t-1)} \right)
    \ .
\end{align*}
We define the average load by $\overline{\xi}:=\sum_{w \in V}\xi^{(0)}_w/n$, which is invariant of the round $t$. Note that the convergence in the continuous case  depends on the randomly chosen matchings in the random matching model, while it is ``deterministic'' in the balancing circuit model (for any fixed initial load vector).

\begin{defi}Let $G$ be any graph. Fix any pair $(K,\epsilon)$ with $K \geq \epsilon > 0$.
For any pair of integers $0 \leq t_1 < t_2$, we call a time-interval $[t_1,t_2]$ associated with a sequence
of matchings $\langle \M^{(t_1+1)},\ldots,\M^{(t_2)} \rangle$
$(K,\epsilon)$--smoothing if for any $\xi^{(t_1)} \in \mathbb{R}^n$ with
$
\disc\left(\xi^{(t_1)}\right) \leq K  $, the load vector at the end of round $t_2$ satisfies
 $ \disc\left(\xi^{(t_2)}\right) \leq \epsilon$.
\begin{itemize}
 \item For the balancing circuit model, define
\[
  \tau_{\cont}(K,\epsilon) := \min \left\{  t \in \mathbb{N} \colon \mbox{$[0,t]$ is $(K,\epsilon)$--smoothing} \right\},
  \]
 i.e., $\tau_{\cont}(K,\epsilon)$ is the minimum number of rounds in the continuous case to reach discrepancy $\epsilon$ for any initial vector $\xi^{(0)}$ with discrepancy at most $K$.
\item
 For the random matching model, define
\[
                            \tau_{\cont}(K,\epsilon)
:= \min \left\{  t \in \mathbb{N} \colon \Pro{ \mbox{$[0,t]$ is $(K, \eps)$--smoothing}} \geq 1 - n^{-1} \right\},
\]
 i.e., $\tau_{\cont}(K,\epsilon)$ is the minimum number of rounds in the continuous case so that with probability at least $1-n^{-1}$, we reach a discrepancy of $\epsilon$ for any initial vector $\xi^{(0)}$ with discrepancy at most $K$. Note that the probability space is taken over  $t$ randomly chosen matchings $\M^{(1)},\ldots,\M^{(t)}$.
\end{itemize}
\end{defi}

We point out that in both the balancing circuit and the random matching models, $\tau_{\cont}(K,\epsilon)$ is {\em not} a random variable. In addition, by scaling the load vector $\xi^{(t)}$, it is immediate that $\tau_{\cont}(\alpha \cdot K, \alpha \cdot \eps)=\tau_{\cont}(K,\eps)$ for any $\alpha \in \mathbb{R}^{+}$.

The  next lemma provides several results that relate the convergence time $\tau_{\cont}(.,.)$ to the entries of the matrix $\M^{[1,\tau_{\cont}(.,.)]}$.
\begin{lemma}\label{lem:boundbymixinglemma}
Fix any sequence of matchings $\mathcal{M}=\langle \M^{(1)},\M^{(2)},\ldots \rangle$, and consider the continuous case.  Then, the following statements hold:
\begin{itemize}\itemsep 0pt
\item If the time-interval $[0,t]$ is $(K,\epsilon)$--smoothing, then for any non-negative vector $y$ with $\|y\|_1=1$, it holds that  \[\left| \sum_{w \in V} y_w\xi^{(t)}_w - \overline{\xi} \right| \leq \eps.\]
\item If the time-interval $[0,t]$ is $(1,\epsilon)$--smoothing, then for every $u,v \in V$, we have \[\left| \M_{u,v}^{[1,t]} - \frac{1}{n} \right| \leq \epsilon.\]
\item Conversely, if $\left| \M_{u,v}^{[1,t]} - \frac{1}{n} \right| \leq \epsilon$ for every $u,v \in V$, then the time-interval $[0,t]$ is $(1,2 \epsilon n)$-smoothing.
\end{itemize}
\end{lemma}

\begin{proof}
For the first statement, let $\xi^{(0)}$ be any initial load vector with discrepancy at most $K$. Since $[0,t]$ is $(K,\eps)$-smoothing, $\mathrm{disc}(\xi^{(t)})\leq \eps$. Hence, for all nodes $w \in V$,
$
 \left| \xi_{w}^{(t)} - \overline{\xi} \right| \leq \epsilon.
$
Using this along with the triangle inequality, we get
\begin{align*}
 \left| \sum_{w \in V} y_w\xi_w^{(t)} - \overline{\xi} \right| &= \left| \sum_{w \in V} \left( y_w\xi_w^{(t)} - y_w\overline{\xi} \right) \right| \leq  \sum_{w \in V} \left| y_w \xi_w^{(t)} - y_w\overline{\xi} \right| = \sum_{w \in V} y_w \cdot  \left| \xi_{w}^{(t)} - \overline{\xi} \right| \leq \eps.
\end{align*}
For the second statement, let $y$ be the unit-vector that is one at vertex $v$ and $\xi^{(0)}$ be the unit-vector that is one at vertex $u$. Then, $\overline{\xi}=\frac{1}{n}$, and
\[ \xi_{v}^{(t)} = \sum_{w \in V} \xi_{w}^{(0)} \M_{w,v}^{[1,t]} = \M_{u,v}^{[1,t]}.\] Hence,
$ \sum_{w \in V} y_w \xi_w^{(t)} = \xi_v^{(t)} = \M_{u,v}^{[1,t]}$, and by the first statement, $\left| \M_{u,v}^{[1,t]} - \frac{1}{n} \right| \leq \epsilon$.

To prove the third statement, assume that it holds for all nodes $u,v \in V$ that $\left| \M_{v,u}^{[1,t]} - \frac{1}{n} \right| \leq \epsilon$ and satisfies $\xi^{(0)}$ satisfies $\disc(\xi^{(0)}) \leq 1$. Then,
\begin{align*}
 \xi_u^{(t)} &= \sum_{v \in V} \xi_v^{(0)} \cdot \M_{v,u}^{[1,t]} \\
           &= \sum_{v \in V}  \overline{\xi} \cdot \M_{v,u}^{[1,t]} + \sum_{v \in V} (\xi_v^{(0)} - \overline{\xi}) \cdot \M_{v,u}^{[1,t]} \\
           &= \overline{\xi} + \sum_{v \in V}  (\xi_v^{(0)} - \overline{\xi}) \cdot  \frac{1}{n} + \sum_{v \in V} (\xi_v^{(0)} - \overline{\xi}) \cdot \left( \M_{v,u}^{[1,t]} - \frac{1}{n} \right) \\
           &\leq \overline{\xi} + 0 +\sum_{v \in V} 1 \cdot \epsilon \leq \overline{\xi} + \epsilon \cdot n.
\end{align*}
Similarly, $\xi_{u}^{(t)} \geq \overline{\xi} - \epsilon n$. Hence, the discrepancy of $\xi^{(t)}$ is at most $2\epsilon n$.
\end{proof}

 Following previous works \cite{MGS98,RSW98}, we adopt the view that the continuous case ($\tau_{\cont}(K,\epsilon)$) is well-understood, and our goal is to analyze the discrete case. For the balancing circuit model, there is indeed a natural bound on $\tau_{\cont}(K,\epsilon)$ depending on  $\lambda(\M)$. Recall that if $\M$ is not symmetric, then $\lambda(\M)$ is defined in terms of $\tilde{\M}$.

\begin{theorem}[{\cite[Theorem~1]{RSW98}}]\label{thm:circuitcontinuous}
    Consider the balancing circuit model with $d$  matchings $\M^{(1)},\ldots,\M^{(d)}$. Then, for any $\epsilon > 0$, it holds that
   \[
     \tau_{\cont}(K,\epsilon) \leq d \cdot \frac{ 4}{1-\lambda(\M)}\cdot \log\left( \frac{Kn}{\epsilon}  \right).
    \]
\end{theorem}

The next lemma is easily derived from the results in \cite{F91}.
\begin{lemma}\label{lem:wellknown}
Consider the balancing circuit model for $d$ matchings $\M^{(1)},\ldots,\M^{(d)}$.
Let $\M=\prod_{i=1}^{d} \M^{(i)}$. Then, it holds
 for any pair of nodes $u,v \in V$ that
\[
  \left| \M_{u,v}^{t} - \frac{1}{n} \right|
  \leq \left( \lambda(\M) \right)^{t/2}.
\]
\end{lemma}
\begin{proof}
We first look at the case where $\M$ is symmetric. Let $v_1,\ldots v_n$ be orthonormal eigenvectors of the matrix $\M$ with corresponding eigenvalues $\lambda_1,\ldots,\lambda_n$. Let $\xi^{(0)}$ be the unit vector which is one for vertex $u$, and zero, otherwise. We  decompose $\xi^{(0)}$ as $\sum_{i=1}^n \pi_i$, where $\pi_i$ is a constant multiplicity of $v_i$. Since $\xi^{(0)}$ is a probability distribution, we have $\pi_1=(1/n,\ldots, 1/n)$.
Therefore,
\begin{align*}
\left\| \xi^{(1)}-\pi_1\right\|^2_2 &=\left\| \xi^{(0)}\M-\pi_1\right\|^2_2 \\
&= \left\| \pi_1\M+\ldots +\pi_n\M-\pi_1\right\|^2_2 \\
& = \left\| \lambda_2\pi_2+\cdots + \lambda_n\pi_n\right\|^2_2 \\
& \leq \lambda(\M)^2\cdot \left\|\xi^{(0)}-\pi_1\right\|^2_2,
\end{align*}
and it holds by induction that
\[
\left\| \xi^{(t)}-\pi_1\right\|^2_2 \leq \lambda(\M)^{2t}\cdot \left\| \xi^{(0)}-\pi_1\right\|^2_2 \leq \lambda(\M)^t.
\]
Since $\xi_v^{(t)}=\mathbf{M}_{u,v}^{t}$, we have
\[
\left(\xi_v^{(t)} -\frac{1}{n} \right)^2 =
\left(\mathbf{M}_{u,v}^{t} -\frac{1}{n} \right)^2
\leq
\lambda(\M)^t,
\]
and
hence, the claim holds.

For the case of a non-symmetric matrix $\M$, let $\xi^{(0)}$ again be the unit vector which is one for vertex $u$ and zero, otherwise.
By \cite[Eq.~2.11]{F91}, we have that
\begin{align*}
  \sum_{v \in V} \left( \xi_v^{(t)} - \frac{1}{n}  \right)^2 &\leq \left(  \max_{2\leq i\leq n} |\lambda_i(\M)| \right)^{t} \cdot \sum_{v \in V} \left( \xi_v^{(0)} - \frac{1}{n}  \right)^2 \leq \lambda(\M)^{t}.
\end{align*}
Since $\xi_v^{(t)}=\mathbf{M}_{u,v}^{t}$, the claim follows. Combining the two cases completes the proof.
\end{proof}

For matchings with $\lambda(\M)$ close to $1$, the following lemma gives a better bound.

\begin{lemma}\label{lem:circuitlocalexpansion}
Consider the balancing circuit model with $d=\Oh(1)$. Then, for any $\sigma$ with $\frac{2}{n} \leq \sigma^{-1} \leq 1$, there is a constant $c=128 d^2$ such that for any $u,v \in V$, \begin{align*}
  \bM_{u,v}^{c \sigma^{4}} \leq \frac{1}{n} + \sigma^{-1/2}.
\end{align*}
%\begin{align*}
%   \left\| \bM_{u,\cdot}^{(c \sigma^{4})}  \right\|_{2}^2 \leq \sigma^{-1}.
%\end{align*}
\end{lemma}
\begin{proof}
Fix any node $u \in V$. We first look at the case where there is no number $s$ satisfying $\| \bM_{u, \cdot}^{s} \|_2^2 \geq \sigma^{-1}$. Then, for any node $v$ we have that
\[ \left(\M_{u,v}^{c\sigma^4} -\frac{1}{n}\right)^2\leq\left\|\M_{u,.}^{c\sigma^4} \right\|^2_2\leq \sigma^{-1},
\]
which implies that $\M^{c\sigma^4}_{u,v}\leq \frac{1}{n}+\sigma^{-1/2}$.

Next, assume that $s$ is  any number for which $\| \bM_{u, \cdot}^{s} \|_2^2 \geq \sigma^{-1}$.  Then,
\[
  \left\| \bM_{u,\cdot}^{s}   \right\|_{\infty} \geq \frac{\left\| \bM_{u,\cdot}^{s}   \right\|_{2}^2}{ \left\|\bM_{u,\cdot}^{s}   \right\|_{1}   } \geq \sigma^{-1}.
\]
Let $v$ be any node with $\M_{u,v}^{s} \geq \sigma^{-1}$.
Since $\frac{1}{2} \sigma^{-1} \geq \frac{1}{n}$, there is at least one node $w \neq v$ with $\M_{u,w}^{s} \leq \frac{1}{2} \sigma^{-1}$.
Next, define $G'$ as the undirected (multi-)graph formed by the union of the $d$ matchings $\M^{(1)},\M^{(2)},\ldots,\M^{(d)}$. Since the round matrix $\M$ is ergodic, the graph $G'$ is connected.
Hence, there is a shortest path $P=(u_1=v,\ldots,u_{\ell}=w)$ from $v$ to such a node $w \in V$. In particular, we can find a node $w \in V$ (and associated path $P$) with the additional property that $w$ is the {\em only} node on $P$ with $\M_{u,w}^{s} \leq \frac{1}{2} \sigma^{-1}$. Therefore, $\ell-1 \leq 1 / (\frac{1}{2} \sigma^{-1}) = 2 \sigma$. Consequently, there must be at least one edge $\{f,g\} \in E(G')$ along the path $P$ so that
 \[
   \bM_{u,f}^{s} - \bM_{u,g}^{s} \geq
   \frac{ \bM_{u,v}^{s} - \bM_{u,w}^{s}}{ \ell - 1 } \geq
   \frac{ \frac{1}{2} \sigma^{-1} }{ 2 \sigma } = \frac{1}{4} \sigma^{-2}.
 \]
Let $\M^{(j)}$ with $1 \leq j \leq d$ be the first matching to include the edge $\{f,g\} \in E(G)$.
Now, observe that in at least one of the rounds $t$ with $d \cdot s + 1 \leq t \leq d \cdot s + j$, node $f$ sends a load amount of at least $\frac{1}{16d} \sigma^{-2}$ to one of its neighbors, or node $g$ receives a load amount of
at least $\frac{1}{16d} \sigma^{-2}$ from one of its neighbors.
(To see this, assume that the statement does not hold in all rounds $t$ with $d\cdot s+1\leq t\leq d\cdot s+j-1$. ) Then, the load of $f$ at the end of  round $d \cdot s+j-1$ would be at least
\[
\bM_{u,f}^{[1,d \cdot s+j-1]} \geq \bM_{u,f}^{[1,d \cdot s]} - (j-1) \cdot \frac{1}{16d} \sigma^{-2} \geq \bM_{u,f}^{[1,d \cdot s]} - \frac{1}{16} \sigma^{-2},
\]
and similarly, $\bM_{u,g}^{[1,d \cdot s+j-1]} \leq \bM_{u,g}^{[1,d \cdot s]} + \frac{1}{16} \sigma^{-2}$, which would imply
\[
\bM_{u,f}^{[1,d \cdot s+j-1]} - \bM_{u,g}^{[1,d \cdot s+j-1]} \geq \frac{1}{4} \sigma^{-2} - 2 \cdot \frac{1}{16} \sigma^{-2} = \frac{1}{8} \sigma^{-2}.
\]
 Hence in round $d \cdot s+ j$ node $f$ would send a load amount of at least
\[
 \frac{1}{2} \cdot \left( \bM_{u,f}^{[1,d \cdot s+j-1]} - \bM_{u,g}^{[1,d \cdot s+j-1]} \right) \geq \frac{1}{16} \sigma^{-2}
 \]
 to node $g$). Hence, there exists a pair of nodes $p$ and $q$ (noticing that at least one of these nodes is $f$ or $g$), which satisfies
 \[
 \bM_{u,p}^{[1,t-1]} - \bM_{u,q}^{[1,t-1]} \geq \frac{1}{8d} \cdot \sigma^{-2}.
 \]
Since $\left\| \M_{u,\cdot}^{[1,s]} \right\|_2^2$ is non-increasing in $s$, we conclude that
\begin{align}
\left\| \bM_{u,\cdot}^{[1,d \cdot s +d]} \right\|_{2}^{2} &\leq
\left\| \bM_{u,\cdot}^{[1,t]} \right\|_{2}^{2} \notag \\
&\leq \sum_{k \in V, k \not\in \{p,q\} } \left( \bM_{u,k}^{[1,t-1]} \right)^2 + 2 \cdot \left( \frac{ \bM_{u,p}^{[1,t-1]} + \bM_{u,q}^{[1,t-1]}}{2} \right)^2   \notag \\
 &=\left\| \bM_{u,.}^{[1,t-1]}  \right\|_{2}^2
 - \left( \bM_{u,p}^{[1,t-1]} \right)^2 - \left(  \bM_{u,q}^{[1,t-1]} \right)^2 + \frac{1}{2} \left( \bM_{u,p}^{[1,t-1]}
 + \bM_{u,q}^{[1,t-1]} \right)^2  \notag \\
&= \left\| \bM_{u,.}^{[1,t-1]}  \right\|_{2}^2
-  \frac{1}{2} \left( \bM_{u,p}^{[1,t-1]} - \bM_{u,q}^{[1,t-1]} \right)^2  \notag \\
&\leq \left\| \bM_{u,.}^{[1,t-1]}  \right\|_{2}^2
-  \frac{1}{2} \left( \frac{1}{8d} \sigma^{-2} \right)^2  \notag.
 \end{align}
Hence,
\begin{align*}
  \left\| \M_{u,\cdot}^{[1,d \cdot s + d]} \right\|_2^2 \leq \left\| \M_{u,\cdot}^{[1,t-1]} \right\|_2^2 - \frac{1}{2} \left( \frac{1}{8d} \sigma^{-2}   \right)^{2} \leq \left\| \M_{u,\cdot}^{[1,d \cdot s]} \right\|_2^2 - \frac{1}{128 d^2} \sigma^{-4}.
\end{align*}
To summarize,  as long as $\| \M_{u,\cdot}^{s} \|_2^2 \geq \sigma^{-1}$, it holds that
\[
  \left\| \M_{u,\cdot}^{s+1} \right\|_2^2 \leq \left\| \M_{u,\cdot}^{s} \right\|_2^2 - \frac{1}{128 d^2} \sigma^{-4}.
\]
Hence, for $s = 128 d^2 \cdot \sigma^4$, it holds that
\[
 \left\| \M_{u,\cdot}^{s} \right\|_2^2 \leq \sigma^{-1}.
\]
This implies that for any $v \in V$,
$
  \left( \M_{u,v}^{s} - \frac{1}{n} \right)^2 \leq \sigma^{-1},
$
and the claim follows.
\end{proof}

For the random matching model, the convergence depends  on $p_{\min}$ and the spectral gap of the {\em diffusion matrix} $\P$,
defined as $\P_{u,v} := \frac{1}{2 \Delta}$ if $\{u,v\} \in E$,
$\bP_{u,v} := 1 - \frac{\deg(u)}{2 \Delta} $ if $v=u$, and $\bP_{u,v}:=0$, otherwise. Note that $\lambda(\bP)=\lambda_2(\bP)$.

\begin{theorem}[{cf.~\cite[Theorem~1]{MG96}}]\label{thm:muthu}
Let $G$ be any graph with maximum degree $\Delta$. Consider the random matching model in the continuous case,
and let $p_{\min}$ be the minimum probability for an edge to be included in the matching. Let $\xi^{(0)}$ be an arbitrary initial load vector, and define the quadratic potential function as
$
  \Phi^{(t)} := \sum_{u \in V} \big( \xi_u^{(t)} - \overline{\xi} \big)^2.
$
Then, for any round $t$, we have
\[
 \Ex{ \Phi^{(t)}   } \leq \left(1 -  \Delta \cdot p_{\min} \cdot (1-\lambda_2(\P))   \right)^{t} \cdot \Phi^{(0)}.
\]
\end{theorem}

\begin{proof}
The proof essentially follows that of \cite[Theorem~1]{MG96} and is given for the sake of completeness. By definition, we have that
\begin{align*}
\Phi^{(t-1)} - \Phi^{(t)} & = \sum_{ u\in V} \left[
\left( \xi_u^{(t-1)} -\bar{\xi} \right)^2 -
\left( \xi_u^{(t)} -\bar{\xi} \right)^2
\right]   \\
& = \sum_{ \{u,v\}\in \M^{(t)}} \left[
\left( \xi_u^{(t-1)} -\bar{\xi} \right)^2 +
\left( \xi_v^{(t-1)} -\bar{\xi} \right)^2 -
\left( \xi_u^{(t)} -\bar{\xi} \right)^2 -\left( \xi_u^{(t)} -\bar{\xi} \right)^2
\right],
\end{align*}
since $\xi^{(t)}_u=\xi^{(t-1)}_u$ for every unmatched node $u$ in round $t$. Notice that for $\{u,v\}\in \M^{(t)}$, it holds that $\xi^{(t)}_u=\xi^{(t)}_v = \left( \xi^{(t-1)}_u + \xi^{(t-1)}_v \right)\big/2$, and hence,
\[
\left( \xi_u^{(t-1)} -\bar{\xi} \right)^2 +
\left( \xi_v^{(t-1)} -\bar{\xi} \right)^2 -
\left( \xi_u^{(t)} -\bar{\xi} \right)^2 -\left( \xi_u^{(t)} -\bar{\xi} \right)^2= \frac{1}{2}\cdot\left( \xi^{(t-1)}_u -\xi^{(t-1)}_v \right)^2.
\]
Therefore,
\begin{align*}
\Ex{ \Phi^{(t-1)} -\Phi^{(t)} } &= \sum_{\{u,v\}\in E(G)} \Pro{ \{u,v\} \in \M^{(t)} }\cdot\frac{1}{2}\cdot\left( \xi^{(t-1)}_u -\xi^{(t-1)}_v \right)^2\\
&\geq \frac{p_{\min}}{2}\cdot \sum_{\{u,v\}\in E(G)}\left( \xi^{(t-1)}_u -\xi^{(t-1)}_v \right)^2.
\end{align*}
Notice that we may assume for the remainder of the proof that $\xi^{(t-1)} \neq \mathbf{1} \cdot \bar{\xi}$, for otherwise the claim of the lemma holds trivially since $\Phi^{(t)}=\Phi^{(t-1)}=0$. Hence, $\Phi^{(t-1)} \neq 0$, and
\begin{align*}
 \Ex{ \frac{ \Phi^{(t-1)}-\Phi^{(t)} }{\Phi^{(t-1)} } ~\Big|~ \xi^{(t-1)}} & = \frac{\Ex{ \Phi^{(t-1)}-\Phi^{(t)}}}{\Phi^{(t-1)} }\\
 &\geq \frac{\frac{p_{\min}}{2}\cdot \sum_{\{u,v\}\in E(G)}\left( \xi^{(t-1)}_u -\xi^{(t-1)}_v \right)^2}{\sum_{u \in V} \big( \xi_u^{(t-1)} - \overline{\xi} \big)^2}.
 \end{align*}
By introducing the vector $y\in \mathbb{R}^n$ where $y_u=
\xi_u^{(t-1)}-\bar{\xi}$, we can rewrite the formula above as
\[\Ex{ \frac{ \Phi^{(t-1)}-\Phi^{(t)} }{\Phi^{(t-1)} } ~\Big|~ \xi^{(t-1)}} \geq \frac{p_{\min}}{2}\cdot
\frac{y\mathbf{L}y^{\mathrm{T}}}{y\cdot y^{\mathrm{T}}},
\]
where $\mathbf{L}$ is the Laplacian matrix of $G$ defined by $\mathbf{L}_{u,u} = \deg(u)$, $\mathbf{L}_{u,v} = -1$ for $\{u,v\} \in E(G)$, and $\mathbf{L}_{u,v}=0$, otherwise. Since $\xi^{(t-1)} \neq \mathbf{1} \cdot \bar{\xi}$, we have $y \neq \mathbf{0}, y\perp\mathbf{1}$, and the Min-Max characterization of eigenvalues yields\[\Ex{ \frac{ \Phi^{(t-1)}-\Phi^{(t)} }{\Phi^{(t-1)} } ~\Big|~ \xi^{(t-1)}} \geq \frac{p_{\min}}{2}\cdot
\lambda_{n-1}(\mathbf{L}).
\]
Since  $\mathbf{P} = \mathbf{I} - \frac{1}{2 \Delta} \cdot \mathbf{L}$, we have that  $\lambda_{2}(\P)=1 - \frac{\lambda_{n-1}(\mathbf{L})}{2 \Delta}$ and $\lambda_{n-1}(\mathbf{L}) = 2 \Delta \cdot (1-\lambda_2(\P))$. This implies
\begin{align*}
  \Ex{ \Phi^{(t)} ~\big|~ \xi^{(t-1)}} &\leq \left(1 - \frac{\lambda_{n-1}(\mathbf{L})}{2} \cdot p_{\min} \right) \cdot \Phi^{(t-1)} = \left( 1 - \Delta \cdot p_{\min} \cdot \left(1 - \lambda_2(\mathbf{P}) \right) \right) \cdot \Phi^{(t-1)},
\end{align*}
and the claim follows by induction on $t$.
%Since $G$ is $d$-regular, $\mathbf{L} = d \cdot \mathbf{I} - \mathbf{A}$, we have $\lambda_{n-1}(\mathbf{L}) = d - \lambda_{2}(\mathbf{A})$. Furthermore, $\lambda_{2}(\mathbf{A}) = d \cdot \lambda_{2}(\mathbf{D}^{-1} \cdot \mathbf{A}) \leq d \cdot \lambda_{2}(\mathbf{P})$.
\end{proof}

\thmref{muthu} implies the following corollary.
\begin{corollary}\label{cor:muthu}
Consider the random matching model. Then,
for any node $v \in V$ and round $t$, we have
\[
  \Pro{ \bigcap_{u \in V} \left\{ \left| \M_{v,u}^{[1,t]} - \frac{1}{n} \right| \leq  \ce^{- c\cdot (1-\lambda_2(\P) )\cdot t} \right\} } \geq 1 - \ce^{-c\cdot(1-\lambda_2(\P) )\cdot t},
\]
where $c > 0$ is the constant given by $c=\frac{p_{\min} \cdot \Delta}{4}$.
\end{corollary}
\begin{proof}
Let $\xi^{(0)}$ be the initial load vector with $\xi_v^{(0)}:=1$ and $\xi_{u}^{(0)}:=0$ for $u \neq v$. Then, it holds that
\[
  \xi_u^{(t)} = \sum_{w \in V} \xi_w^{(0)} \cdot \M_{w,u}^{[1,t]} = 1 \cdot \M_{v,u}^{[1,t]},
\]
and $\overline{\xi}=1/n$.
Define the quadratic potential function as
$
  \Phi^{(t)} := \sum_{u \in V} \big( \xi_u^{(t)} - \overline{\xi} \big)^2.
$
By \thmref{muthu},
\[
\Ex{ \Phi^{(t)}   } \leq \left(1 - \Delta \cdot p_{\min} \cdot (1-\lambda_2(\P))   \right)^{t} \cdot \Phi^{(0)}.
\]
Define $c':=p_{\min} \cdot \Delta$, which is a positive constant since $p_{\min} \geq c_{\min} / \Delta$. Using
\[
\Phi^{(0)} = \left( 1-\frac{1}{n} \right)^2+ (n-1)\cdot \left(\frac{1}{n}\right)^2 =  1 - \frac{1}{n} \leq  1,
\]
 we obtain
\[
\Ex{ \Phi^{(t)}   } \leq \left(1 - c'\cdot (1-\lambda_2(\P)) \right)^{t} \cdot 1 \leq \ce^{- c' \cdot (1-\lambda_2(\P) )\cdot t  }.
\]
Hence, by Markov's inequality, it holds that
\[
\Pro{ \Phi^{(t)} \geq \ce^{-c'/2 \cdot (1-\lambda_2(\P) )\cdot t  } } \leq \ce^{- c'/2 \cdot (1-\lambda_2(\P) )\cdot t }.
\]
Assuming that the event $\Phi^{(t)} \leq \ce^{- c'/2 \cdot (1-\lambda_2(\P) )\cdot t  }$ occurs, it holds for any node $u \in V$ that
\[
 \left( \M_{v,u}^{[1,t]} - \frac{1}{n} \right)^2 = \left( \xi_u^{(t)} - \frac{1}{n}  \right)^2 \leq \ce^{- c'/2 \cdot (1-\lambda_2(\P) )\cdot t  },
\]
which implies that
\[
 \left| \M_{v,u}^{[1,t]} - \frac{1}{n} \right| \leq  \ce^{- c'/4 \cdot (1-\lambda_2(\P))\cdot t }.
\]
By setting $c:=c'/4$, we finish the proof.
\end{proof}

For graphs with $\lambda(\P)$ close to $1$, the following lemma gives a better bound.
\begin{lemma}\label{lem:matchinglocalexpansion}
Let $G$ be any $d$-regular graph and consider the random matching model. Fix any node $u \in V$. Then, for any $\sigma$ with $\frac{2}{n} \leq \sigma^{-1} \leq 1$, there is a constant $c>0$ such that
\begin{align*}
  \Pro{ \left\| \bM_{u,\cdot}^{[1,c \sigma^{6}]}  \right\|_{2}^2 \geq \sigma^{-1} } \leq \ce^{- \sigma^{4}}.
\end{align*}
Furthermore, it holds that
\begin{align*}
  \Pro{ \left\| \bM_{\cdot,u}^{[1,c \sigma^{6}]}  \right\|_{2}^2 \geq \sigma^{-1} } \leq \ce^{- \sigma^{4}}.
\end{align*}
\end{lemma}
\begin{proof}
%For any step $s \in \N$, recall the quadratic potential function:
%\[
%  \Phi^{(s)} =  \sum_{u \in V} \left( \bP_{u,v}^{[1,s]} - \frac{1}{n}		 \right)^2  =
%  \left( \sum_{u \in V} \left( \bP_{u,v}^{[1,s]} \right)^2 \right) - \frac{1}{n}  =   \left\| \bP_{u,\cdot}^{[1,s]}   \right\|_{2}^2 - \frac{1}{n}.
%\]
%Hence, $\Phi^{(s)}$ and $\| \bP_{u,\cdot}^{[1,s]} \|_{2}^{2}$ are both non-increasing and in each round decrease by the same amount.
The proof is very similar to the proof of \lemref{circuitlocalexpansion}. Again,
consider a round $s$ for which $\left\| \bM_{u,\cdot}^{[1,s]}  \right\|_{2}^{2} \geq \sigma^{-1}$ holds, and we have
\[
  \left\| \bM_{u,\cdot}^{[1,s]}   \right\|_{\infty} \geq \frac{\left\| \bM_{u,\cdot}^{[1,s]}   \right\|_{2}^2}{ \left\|\bM_{u,\cdot}^{[1,s]}   \right\|_{1}   } \geq \sigma^{-1}.
\]
Let $v$ be any node with
$ \bM_{u,v}^{[1,s]} \geq \sigma^{-1}$. We continue with a case distinction on the degree of $G$.

\medskip
\textbf{Case 1:}
  $d \geq 4 \sigma^{2}$. Since $\sum_{w \in N(v)} \bM_{u,w}^{[1,s]} \leq 1$, at least $d/2 \geq 2 \sigma^2$ of the neighbors $w \in N(v)$ satisfy $\bM_{u,w}^{[1,s]} \leq \frac{1}{2} \sigma^{-2}$.
Hence, an edge $\{v,w\}$ with $\bM_{u,w}^{[1,s]} \leq \frac{1}{2} \sigma^{-2}$ is included in the random matching in round $s+1$
  with probability at least
  \[
  \frac{d}{2} \cdot p_{\min} \geq \frac{d}{2} \cdot c_{\min} \cdot \frac{1}{d} = \frac{1}{2} \cdot c_{\min} > 0.
\]
Assuming that this event occurs,
\begin{align}
\left\| \bM_{u,\cdot}^{[1,s+1]} \right\|_{2}^{2} &\leq
 \sum_{k \in V, k \not\in \{v,w\} } \left( \bM_{u,k}^{[1,s]} \right)^2 + 2 \cdot \left( \frac{ \bM_{u,v}^{[1,s]} + \bM_{u,w}^{[1,s]}}{2} \right)^2   \notag \\
 &=\left\| \bM_{u,.}^{[1,s]}  \right\|_{2}^2
 - \left( \bM_{u,v}^{[1,s]} \right)^2 - \left(  \bM_{u,w}^{[1,s]} \right)^2 + \frac{1}{2} \left( \bM_{u,v}^{[1,s]}
 + \bM_{u,w}^{[1,s]} \right)^2  \notag \\
&= \left\|  \bM_{u,.}^{[1,s]}  \right\|_{2}^2
-  \frac{1}{2} \left( \bM_{u,v}^{[1,s]} - \bM_{u,w}^{[1,s]} \right)^2 \label{eq:improvement} \\
&\leq \left\|  \bM_{u,.}^{[1,s]}  \right\|_{2}^2
-  \frac{1}{2} \left( \sigma^{-1} - \frac{1}{2} \sigma^{-2} \right)^2  \notag \leq \left\|  \bM_{u,.}^{[1,s]}  \right\|_{2}^2
- \frac{1}{8} \sigma^{-2}. \notag
 \end{align}
%Therefore,  $\Ex{\| \bP_{u,\cdot}^{[1,s]} \|_{2}^{2} - \| \bP_{u,\cdot}^{[1,s+1]} \|_{2}^{2}	 } \geq c \cdot \frac{1}{4} \sigma^{-4} = \Omega( \sigma^{-4} ).$

\textbf{Case 2.} $d \leq 4 \sigma^{2}$.
Since $\frac{1}{2}\sigma^{-1}\geq\frac{1}{n}$ by the assumption on $\sigma$, there is at least one node $w \neq v$ with $\M_{u,w}^{[1,s]}\leq \frac{1}{2}\sigma^{-1}$. Consider now a shortest path $P=(u_1=v,\ldots,u_{\ell}=w)$ from $v$ to such a node $w$ with the property that $w$ is the \emph{only} node on $P$ with $\bM_{u,w}^{[1,s]} \leq \frac{1}{2} \sigma^{-1}$. By construction of $P$, the length $\ell$ satisfies $\ell - 1 \leq 1/(\frac{1}{2} \sigma^{-1}) = 2 \sigma$. Hence, there must be at least one edge $\{f,g\} \in E$ along the path $P$ so that
 \[
   \bM_{u,f}^{[1,s]} - \bM_{u,g}^{[1,s]} \geq
   \frac{ \bM_{u,v}^{[1,s]} - \bM_{u,w}^{[1,s]}				 }{ \ell - 1 } \geq
   \frac{ \frac{1}{2} \sigma^{-1} }{ 2 \sigma } = \frac{1}{4} \sigma^{-2}.
 \]
Recall that the edge $\{f,g\}$ is included in the matching in round $s+1$ with probability at least $p_{\min}=c_{\min} \cdot \frac{1}{d} \geq c_{\min} \cdot \frac{1}{4\sigma^2}$.  If the edge $\{f,g\}$ is part of the matching, then we conclude from \eq{improvement} that
 \begin{align*}
  \left\| \bM_{u,\cdot}^{[1,s+1]} \right\|_{2}^{2} &\leq \left\| \bM_{u,\cdot}^{[1,s]} \right\|_{2}^{2} - \frac{1}{32} \sigma^{-4}.
 \end{align*}
 %Hence, $\Ex{\| \bP_{u,\cdot}^{[1,s]} \|_{2}^{2} - \| \bP_{u,\cdot}^{[1,s+1]} \|_{2}^{2}	} \geq \Omega( \sigma^{-2}  ) \cdot \frac{1}{16} \sigma^{-8} = \Omega( \sigma^{-10} ),$ since with probability $p_{\min}=\Theta(1/d)$, the edge $\{f,g\}$ is chosen in round $s+1$.

Summarizing both cases, we can upper bound the minimum round $\tau$ when $\left\| \bM_{u,\cdot}^{[1,\tau]} \right\|_{2}^{2} \leq \sigma^{-1}$ occurs by the sum of $32 \sigma^{4}$ independent random geometric variables with success probability $p_{\min}$ each. Using \lemref{chernoffgeo}, we obtain that the sum of  these geometric variables is larger than $c\cdot \sigma^{6}$ with probability at most $  \ce^{-\sigma^{4} }
$, if $c>0$ is a sufficiently large constant.

To prove the second statement, we use a symmetry argument.
Recall that
\[
\M^{[1,c \sigma^6]} = \prod_{s=1}^{c \sigma^6} \M^{(s)},
\]
 where $\M^{(s)}$ is a matching chosen randomly according to some distribution $Z^{(s)}$. Consider now the matrix $\P^{[1,c \sigma^6]}=\prod_{s=1}^{c \sigma^6} \P^{(s)}$, where $\P^{(s)}$ is a matching chosen randomly according to some distribution $Z^{(c \sigma^6-s+1)}$. Using the same analysis as above for $\P$ instead of $\M$, we conclude that
\[
  \Pro{ \left\| \P_{u,\cdot}^{[1,c \sigma^6]} \right\|_2^2 \geq \sigma^{-1} } \leq \ce^{-\sigma^{4} }.
\]
Note that we can couple $\P$ and $\M$ so that
\[
\P^{[1,c \sigma^6]} = \prod_{s=1}^{c \sigma^6} \M^{(c \sigma^6-s+1)}.
\]
Taking the transpose of both sides yields
\begin{align*}
 \left( \P^{[1,c \sigma^6]}\right)^{\mathrm{T}} &= \left(
 \prod_{s=1}^{c \sigma^6} \M^{(c \sigma^6-s+1)} \right)^{\mathrm{T}} =
 \prod_{s=1}^{c \sigma^6} \left( \M^{(s)} \right)^{\mathrm{T}} =  \prod_{s=1}^{c \sigma^6}  \M^{(s)} = \M^{[1,c \sigma^6]},
\end{align*}
and the second statement follows.
\end{proof}

\thmref{muthu} also implies the following upper bound on $\tau_{\cont}(K,\epsilon)$.

\begin{theorem}\label{thm:contmatching}
Let $G$ be any graph with the maximum degree $\Delta$, and consider the random matching model.
Then, for any $\epsilon > 0$, it holds that
\[
 \tau_{\cont}(K,\epsilon) \leq \frac{8}{\Delta \cdot p_{\min}} \cdot \frac{1}{1-\lambda_2(\P)}\cdot\log \left( \frac{2 K n}{\epsilon}  \right) .
\]
\end{theorem}

Hence, for $p_{\min}=\Theta(1/\Delta)$, we obtain essentially the same convergence as for the first-order-diffusion scheme (cmp.~\thmref{continuous}), although the communication is restricted to a single matching in each round. A similar result for the $\ell_2$-norm can be found in~\cite[Theorem~5]{BGPS06}. We now prove \thmref{contmatching}.

\begin{proof}%[Proof of \thmref{contmatching}]
We apply \corref{muthu} with
\[
t=\frac{1}{c} \cdot \frac{ \log(n^2) + \log(2Kn/{\epsilon})}{1-\lambda_2(\P)}
\]
 for all nodes $v \in V$ to conclude that
\[
 \Pro{ \bigcap_{v \in V} \bigcap_{u \in V} \left\{ \left| \M_{v,u}^{[1,t]} - \frac{1}{n} \right| \leq \frac{\epsilon}{2 K n} \right\} } \geq 1 - n \cdot n^{-2} = 1 - n^{-1}.
\]
Assuming that \[\left| \M_{v,u}^{[1,t]} - \frac{1}{n} \right| \leq \frac{\epsilon}{2 K n}\]  for all nodes $u,v \in V$, the third statement of \lemref{boundbymixinglemma} implies that  the time-interval $[0,t]$ is $(1, \frac{\epsilon}{K})$-smoothing, or equivalently, $(K,\epsilon)$-smoothing.
\end{proof}

We now provide an almost matching lower bound. For the sake of simplicity, we focus on $d$-regular graphs and consider the following distributed protocol for generating a matching in each round (see also~\cite{BGPS06}). First, each node becomes active or passive with probability $1/2$. Then, every active node chooses a random neighbor. If this neighbor is passive and receives only one request from an active neighbor, then the two nodes are matched.
\begin{theorem}\label{thm:continuouslowermatching}
Let $G$ be any $d$-regular graph, and consider the random matching model using the aforementioned procedure for generating matchings. For any $K \geq 8n$ and $\epsilon > 0$, there is an initial load vector with discrepancy $K$ so that after
$\frac{1}{4} \cdot \frac{1}{1 - \lambda(\P)} \cdot \log \left( \frac{K}{8 \epsilon n} \right) -1$ rounds, the expected discrepancy is at least $\epsilon$.
\end{theorem}
A similar lower bound was derived in \cite[Corollary~3]{BGPS06}, but that lower bound holds only with small probability (depending on $\epsilon/K$).
\begin{proof}
The lower bound will be established by proving that the expectation of the load of a vertex is equal to the load of a vertex in a corresponding diffusion-based load balancing process. First, we observe that for any round $t$ and any edge $\{u,v\} \in E(G)$,
\[
  \Pro{ \{u,v\} \in \M^{(t)} } = 2 \cdot \frac{1}{2} \cdot \frac{1}{2} \cdot \frac{1}{d} \cdot \left(1 - \frac{1}{2} \cdot \frac{1}{d} \right)^{d-1} := p;
\]
in particular, every edge is included with the same probability. Consider now a random walk on $G$ defined as follows. If the random walk is at a vertex $u$ at the end of round $t-1$, it moves to a neighbor $v \in N(u)$ with $\{u,v\} \in \M^{(t)}$ with probability $1/2$; if there exists no such neighbor, then the random walk stays at vertex $u$. This process is the same as a random walk that with probability $\frac{1}{2} \cdot d \cdot p$ decides to move to a neighbor, which is then chosen uniformly at random among all $d$ neighbors, and otherwise stays at $u$.
The transition matrix of this random walk is given by \[
\mathbf{Q} = \left(1 - \frac{1}{2} \cdot d \cdot p \right) \cdot \mathbf{I} + \frac{1}{2} \cdot p \cdot \mathbf{A},
\]
where $\mathbf{A}$ is the adjacency matrix of graph $G$.
Since $\frac{1}{2} \cdot d \cdot p \leq \frac{1}{4}$, it follows that
\begin{align*}
\lambda_n(\mathbf{Q}) &= \left(1-\frac{1}{2}\cdot d\cdot p\right)\cdot \lambda_n(\mathbf{I}) + \frac{1}{2}\cdot p\cdot\lambda_n(\mathbf{A})  \geq \frac{3}{4} + \frac{1}{4d}\cdot (-d)  = \frac{1}{2},
\end{align*}
and $\lambda(\mathbf{Q}) \geq \frac{1}{2}$.
Further, recall that the diffusion matrix $\mathbf{P}$ is defined by
\[
  \mathbf{P} = \frac{1}{2} \cdot \mathbf{I} + \frac{1}{2} \cdot \frac{1}{d} \cdot \mathbf{A}.
\]
 Since $\frac{1}{2} \cdot p \leq \frac{1}{4}$, we conclude that $\lambda_{i}(\mathbf{P}) \leq \lambda_{i}(\mathbf{Q})$ for every $1 \leq i \leq n$, and hence,
 \[
 \lambda(\mathbf{P}) = \lambda_2(\mathbf{P}) \leq \lambda_2(\mathbf{Q}) = \lambda(\mathbf{Q}).
 \]

%Then,
%\begin{align*}
%  1 - \frac{1}{2} \cdot d \cdot p &= 1 - \frac{1}{4} \cdot \left(1 - \frac{1}{2d} \right)^{d-1}  \\
%  &\leq 1- \frac{1}{4} \cdot \mathrm{e}^{-\frac{d-1}{2d} } \cdot \left(1 - \frac{1}{2d} \right)^{ \frac{d-1}{2d} }
%  \leq 1 - \frac{1}{12},
%\end{align*}
%
%where the first inequality used the fact that $\mathrm{e}^{x} \cdot (1 + \frac{x^2}{n}) \leq (1 + \frac{x}{n})^{n}$ for any $|x| \leq n$.
%Further, have
%\begin{align*}
% \lambda_{2} (\mathbf{P} ) &= \frac{1}{2} + \frac{1}{2} \cdot \frac{1}{d} \cdot \lambda_2(\mathbf{A}).
%\end{align*}
%Since $\frac{1}{2} \cdot d \cdot p \leq \frac{1}{4}$, it follows that $\lambda_2(\mathbf{Q}) \geq \lambda_2(\P)$.
Let us now fix a vertex $w \in V$. Our next claim is that if we define the initial load vector $\xi^{(0)}$ as $\xi_u^{(0)}=1$ for $u=w$ and $\xi_u^{(0)}=0$, otherwise, then
\begin{align}
  \Ex{ \xi_v^{(t)}  } = \mathbf{Q}_{w,v}^{t}. \label{eq:claimm}
\end{align}
We prove this equality by induction on $t$.
Since $\mathbf{Q}^{(0)}=\mathbf{I}$, the claim holds for $t=0$. For $t \geq 1$ and fixed load vector $\xi^{(t-1)}$, we conclude by using the induction hypothesis that
\begin{align*}
 \Ex{ \xi_v^{(t)}} &= \Ex{ \Ex{ \xi_v^{(t)} \, \Big| \, \xi^{(t-1)} }} \\ &= \Ex{
 \xi_v^{(t-1)} + \sum_{u \in N(v)} \frac{\xi_u^{(t-1)}-\xi_v^{(t-1)} }{2} \cdot p } \\
 &= \mathbf{Q}_{w,v}^{t-1} + \sum_{u \in N(v)} \frac{\mathbf{Q}_{w,u}^{t-1}-\mathbf{Q}_{w,v}^{t-1} }{2} \cdot p \\
 &= \left(1 - \frac{1}{2} \cdot d \cdot p \right) \cdot \mathbf{Q}_{w,v}^{t-1} + \sum_{u \in N(v)} \frac{1}{2} \cdot p \cdot \mathbf{Q}_{w,u}^{t-1} \\
 &=  \mathbf{Q}_{w,v}^{t-1} \cdot \mathbf{Q}_{v,v} + \sum_{u \in N(v)} \mathbf{Q}_{w,u}^{t-1} \cdot \mathbf{Q}_{u,v} = \mathbf{Q}_{w,v}^{t},
\end{align*}
which completes the induction step and the proof of  \eq{claimm}.

We now introduce some additional notation. For a Markov chain with doubly-stochastic transition matrix $\mathbf{Q}$, we define the {\em variation distance} at time $t$ with the initial vertex $w$ as
\[
  \Delta_w(t) = \frac{1}{2} \sum_{v \in V} \left| \mathbf{Q}^{t}_{w,v} - \frac{1}{n} \right|.
\]
For any vertex $u \in V$ and $\tilde{\epsilon} \in (0,1)$, the rate of convergence is defined by
\[
  \tau_{w}(\tilde{\epsilon}) := \min \left\{ t \colon \Delta_w(t') \leq \tilde{\epsilon} \mbox{ for all $t' \geq t$}  \right\} =  \min \left\{ t \colon \Delta_w(t) \leq \tilde{\epsilon}  \right\},
\]
where the last equality comes from the fact that the variation distance is non-increasing in time (cf.~\cite[Chapter~11]{MU05}). By
\cite[Proposition~1, Part (ii)]{Si92}, it holds that
\[
  \tau := \max_{w \in V} \tau_{w}(\tilde{\epsilon}) \geq \frac{1}{2} \cdot \frac{\lambda(\mathbf{Q})}{1-\lambda(\mathbf{Q})} \cdot \log \left( \frac{1}{2 \tilde{\epsilon}} \right).
\]
As derived earlier, $\lambda(\mathbf{Q}) \geq \frac{1}{2}$ and $\lambda(\mathbf{Q})\geq \lambda(\mathbf{P})$, and, therefore, we have that \begin{align*}
 \tau &\geq \frac{1}{4} \cdot \frac{1}{1 - \lambda(\P)} \cdot \log \left( \frac{1}{2 \tilde{\epsilon}} \right).
\end{align*}
Let $\tau':=\frac{1}{4} \cdot \frac{1}{1 - \lambda(\P)} \cdot \log \left( \frac{1}{2 \tilde{\epsilon}} \right)$. Hence, we have  in $\tau'-1$ iterations  that
\[
  \frac{1}{2} \sum_{v \in V} \left| \mathbf{Q}^{\tau'-1}_{w,v} - \frac{1}{n} \right| > \tilde{\epsilon},
\]
which implies that there is a vertex $v \in V$ with
\begin{align}
   \mathbf{Q}^{\tau'-1}_{w,v} - \frac{1}{n}  > \frac{\tilde{\epsilon}}{4 n}. \label{eq:lb}
\end{align}
Since in every round there is at least one vertex with load at most $\frac{1}{n}$,
\begin{align*}
 \Ex{ \disc( \xi^{(\tau'-1)}) } &\geq \Ex{ \xi_v^{(\tau'-1)} - \frac{1}{n} } = \mathbf{Q}^{\tau'-1}_{w,v} - \frac{1}{n} \geq \frac{\tilde{\epsilon}}{4n}.
\end{align*}
Hence, we have shown that for any given $\tilde{\epsilon}$, there are load vectors with initial discrepancy $1$ so that the expected discrepancy after $\tau'-1$ rounds is at least $\frac{\tilde{\epsilon}}{4n}$. Scaling the load vector by a factor of $K$ and replacing $\tilde{\epsilon}$ by $\epsilon \cdot \frac{4n}{K}$, we conclude that there are load vectors with initial discrepancy $K$ so that the expected discrepancy after $\tau'-1$ rounds is at least $\epsilon$.
\end{proof}

\subsection{The Discrete Case}
Let us now turn to the discrete case with indivisible unit-size tokens. Let $x^{(0)} \in \mathbb{Z}^n$ be the initial load vector with average load $\overline{x}:=\sum_{w \in V} x_w^{(0)}/n$
 and $x^{(t)}$ be the load vector at the end of round $t$. If the sum of tokens of two matched nodes is odd, we have to decide which of the matched  nodes should get the excess token. To this end, we employ the so-called {\em random orientation}~(\cite{RSW98,FS09}), in the spirit of randomized rounding. More precisely, for any two matched nodes $u$ and $v$ in round $t$, node $u$ gets $\Big\lceil \frac{x^{(t-1)}_u+x^{(t-1)}_v}{2} \Big\rceil$ or $\Big\lfloor \frac{x^{(t-1)}_u+x^{(t-1)}_v}{2} \Big\rfloor $
tokens, with probability $1/2$ each. The remaining tokens are assigned to node $v$.
 We can also think of this as first assigning $\Big\lfloor \frac{x^{(t-1)}_u+x^{(t-1)}_v}{2} \Big\rfloor $ tokens to both $u$ and $v$ and then assigning the excess token (if there is one) to $u$ or $v$ with probability $1/2$ each.  We use a uniform random  variable $\Phi_{u,v}^{(t)} \in \{-1,1\}$ to specify the orientation for an edge $\{u,v\} \in \M^{(t)}$, indicating the node that the excess token (if any) is assigned to.
If $\Phi_{u,v}^{(t)}=1$ the excess token is assigned to~$u$, and if
$\Phi_{u,v}^{(t)}=-1$, the excess token is assigned to~$v$. In particular, we have $\Phi_{u,v}^{(t)}=-\Phi_{v,u}^{(t)}$. Further, we point out that the deterministic orientation of \citet{RSW98} corresponds to $\Phi_{u,v}^{(t)}=1$ for $x_u^{(t-1)} \geq x_v^{(t-1)}$ and $\Phi_{u,v}^{(t)}=-1$, otherwise. In other words, the excess token is always kept at the node with the larger load.

For every edge $\{u,v\} \in \M^{(t)}$, we define a corresponding error term by
\[
e^{(t)}_{u,v} := \frac{1}{2} \Odd (x_u^{(t-1)}+x_v^{(t-1)} )\cdot
\Phi_{u,v}^{(t)},
\]
where $\Odd(x):= x\mod 2$ for any integer $x$. Observe that $\Ex{ e^{(t)}_{u,v}}=0$, and
\begin{align}
x_u^{(t)} &=\frac{1}{2}x_u^{(t-1)}+\frac{1}{2}x_v^{(t-1)}
+e_{u,v}^{(t)} \label{eq:error}
\end{align}
for any matched nodes $\{u,v\}$ in round $t$.
Moreover, for any round $t$ we define an error vector $e^{(t)}$ with
$
    e_u^{(t)}:= \sum_{v \colon \{u,v\} \in \M^{(t)}} e_{u,v}^{(t)}
$.
With this notation, the load vector in round $t$ is
$
    x^{(t)} = x^{(t-1)} \M^{(t)} + e^{(t)}.
$
Solving this recursion (cf.~\citep{RSW98}) yields
\begin{align*}
    x^{(t)}
    &= x^{(0)} \M^{[1,t]} + \sum_{s=1}^{t} e^{(s)} \M^{[s+1,t]} = \xi^{(t)} + \sum_{s=1}^{t } e^{(s)}  \M^{[s+1,t]},
\end{align*}
where $\xi^{(t)}$ is the corresponding load vector in the continuous case initialized with $\xi^{(0)}=x^{(0)}$.
Hence, for any node $w \in V$, we have
\begin{align}
  x_w^{(t)} - \xi_w^{(t)}
& &= \sum_{s=1}^{t} \sum_{u\in V}
     \sum_{v \colon \{u,v\} \in \M^{(s)}} e_{u,v}^{(s)} \M^{[s+1,t]}_{u,w}
   = \sum_{s=1}^{t} \sum_{ [u:v] \in \M^{(s)}}
     e_{u,v}^{(s)} \, \left( \M^{[s+1,t]}_{u,w} - \M^{[s+1,t]}_{v,w} \right),
  \label{eq:std}
\end{align}
where the last equality used $e_{u,v}^{(s)}=-e_{v,u}^{(s)}$ (recalling that $[u:v] \in \M^{(s)}$ means $\{u,v\} \in \M^{(s)}$ and $u < v$).
%By $\overline{x}=\sum_{w \in V} x_w^{(0)}/n$ we denote the {\em average load}.

Occasionally, it is convenient to assume that $\overline{x}\in [0,1)$ by adding or subtracting the same number of tokens to each node. Although this may lead to negative entries in the load vector, the above formulas still hold.

\begin{obs}\label{obs:scaling}
Fix a sequence of matchings $\mathcal{M}=\langle \M^{(1)}, \M^{(2)}, \ldots \rangle$ and orientations $\Phi_{u,v}^{(t)}$, $[u:v] \in \M^{(t)}, t \in \mathbb{N}$. Consider two executions of the discrete load balancing protocol with the same matchings and orientations, but with different initial load vectors, $x^{(0)}$ and $\widetilde{x}^{(0)}$. Then, the following statements hold:
\begin{itemize}\itemsep 0pt
 \item If $\widetilde{x}^{(0)} = x^{(0)} + \alpha \cdot \mathbf{1}$ for some $\alpha \in \mathbb{Z}$, then $\widetilde{x}^{(t)} = x^{(t)} + \alpha \cdot \mathbf{1}$ for all $t\geq 1$.
 \item If $\widetilde{x}_u^{(0)} \leq x_u^{(0)}$ for all $u \in V$, then  $\widetilde{x}_u^{(t)} \leq x_u^{(t)}$ for all $u \in V$ and $t \geq 1$.
\end{itemize}
\end{obs}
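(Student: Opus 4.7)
The plan is to prove both statements by induction on $t$, with the base case $t=0$ being exactly the hypothesis. Since the matchings $\mathcal{M}$ and orientations $\Phi_{u,v}^{(t)}$ are fixed and shared between the two executions, it suffices to show that a single round preserves each property: if $\widetilde{x}^{(t-1)} = x^{(t-1)} + \alpha \cdot \mathbf{1}$ then $\widetilde{x}^{(t)} = x^{(t)} + \alpha \cdot \mathbf{1}$, and if $\widetilde{x}^{(t-1)} \leq x^{(t-1)}$ coordinatewise then $\widetilde{x}^{(t)} \leq x^{(t)}$ coordinatewise. For unmatched nodes, both updates are the identity, so only edges $\{u,v\} \in \M^{(t)}$ need attention.

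For part (1), observe that for every matched edge $\{u,v\} \in \M^{(t)}$, the sum $\widetilde{x}_u^{(t-1)} + \widetilde{x}_v^{(t-1)} = (x_u^{(t-1)} + x_v^{(t-1)}) + 2\alpha$ has the same parity as $x_u^{(t-1)} + x_v^{(t-1)}$ because $\alpha \in \mathbb{Z}$. Consequently, the ``excess token'' is present in both executions simultaneously, and since $\Phi_{u,v}^{(t)}$ is fixed the excess token is assigned to the same endpoint in both runs. One endpoint is updated to $\bigl\lceil(x_u^{(t-1)}+x_v^{(t-1)})/2\bigr\rceil$ and the other to $\bigl\lfloor(x_u^{(t-1)}+x_v^{(t-1)})/2\bigr\rfloor$; both expressions grow by exactly $\alpha$ when the integer input grows by $2\alpha$. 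Hence the shift by $\alpha$ propagates to round $t$, closing the induction.

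For part (2), fix any edge $\{u,v\} \in \M^{(t)}$. Independent of the value of $\Phi_{u,v}^{(t)} \in \{-1,+1\}$, the new loads $x_u^{(t)}, x_v^{(t)}$ are the pair $\bigl(\lceil(x_u^{(t-1)}+x_v^{(t-1)})/2\rceil, \lfloor(x_u^{(t-1)}+x_v^{(t-1)})/2\rfloor\bigr)$ in the order determined by $\Phi_{u,v}^{(t)}$, and the tilde execution uses the \emph{same} order. From $\widetilde{x}_u^{(t-1)} \leq x_u^{(t-1)}$ and $\widetilde{x}_v^{(t-1)} \leq x_v^{(t-1)}$ we get $\widetilde{x}_u^{(t-1)} + \widetilde{x}_v^{(t-1)} \leq x_u^{(t-1)} + x_v^{(t-1)}$, and monotonicity of the integer floor and ceiling functions in their argument immediately yields $\widetilde{x}_u^{(t)} \leq x_u^{(t)}$ and $\widetilde{x}_v^{(t)} \leq x_v^{(t)}$. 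Unmatched nodes inherit their previous inequality trivially.

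I do not expect either part to be genuinely obstacle-bearing: the content is entirely that both executions are \emph{coupled} through the same matchings and orientations, so the per-edge update is a deterministic function that is both equivariant under integer translation and monotone in its integer inputs. The one place to be slightly careful is in part (1), where the assumption $\alpha \in \mathbb{Z}$ is genuinely used to preserve the parity of $x_u^{(t-1)} + x_v^{(t-1)}$ — a non-integer shift could flip which endpoint receives the excess token and break equivariance.
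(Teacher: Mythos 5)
Your proof is correct, and the paper in fact gives no proof for this observation, so you are supplying the argument the authors considered too routine to write out; what you have written is exactly the natural inductive argument. Both steps hinge on the coupling being through a \emph{fixed} sequence of orientations $\Phi_{u,v}^{(t)}$ (not the load-dependent deterministic orientation of \citet{RSW98}), so that the per-edge update $(a,b)\mapsto\bigl(\lceil(a+b)/2\rceil,\lfloor(a+b)/2\rfloor\bigr)$ with a pre-assigned ordering is a deterministic function; your observations that this function is equivariant under shifting each input by the same integer $\alpha$ (because the sum shifts by $2\alpha$, preserving parity) and monotone in the sum $a+b$ are precisely what is needed, and you correctly flag that $\alpha\in\mathbb{Z}$ is essential for the parity argument in part (1). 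Part (2) does not need matching parities between the two executions, only monotonicity of $\lfloor\cdot\rfloor$ and $\lceil\cdot\rceil$ applied coordinatewise under the shared ordering, which you handle correctly.
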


The next lemma demonstrates that upper bounding the maximum load is essentially equivalent to lower bounding the minimum load.

\begin{lemma}\label{lem:maxminrelation}
Fix a sequence of matchings $\mathcal{M}=\langle \M^{(1)}, \M^{(2)}, \ldots \rangle$. For any triple of positive integers $K$, $\alpha$, and $t$, it holds that
\begin{align*}
 \max_{ \substack{ y \in \mathbb{Z}^n \colon \\ \disc(y) \leq K}} \left\{ \Pro{ x_{\max}^{(t)} \geq \lfloor \overline{x} \rfloor + \alpha  \, \Big| \, x^{(0)} = y } \right\} &\leq \max_{ \substack{ y \in \mathbb{Z}^n \colon \\ \disc(y) \leq K}} \left\{ \Pro{x_{\min}^{(t)} \leq \lfloor \overline{x} \rfloor - \alpha + 2  \, \Big| \, x^{(0)} = y }\right\},
\end{align*}
and
\begin{align*}
 \max_{ \substack{ y \in \mathbb{Z}^n \colon \\ \disc(y) \leq K}} \left\{ \Pro{ x_{\min}^{(t)} \leq \lfloor \overline{x} \rfloor - \alpha  \, \Big| \, x^{(0)} = y } \right\} &\leq \max_{ \substack{ y \in \mathbb{Z}^n \colon \\ \disc(y) \leq K}} \left\{ \Pro{x_{\max}^{(t)} \geq \lfloor \overline{x} \rfloor + \alpha  \, \Big| \, x^{(0)} = y }\right\}.
\end{align*}
\end{lemma}
\begin{proof}
We define a coupling between two executions of the load balancing protocol. For every round $t$, the two executions use the same matching. In the first execution, we start with $x^{(0)}=z \in \mathbb{Z}^n$
that maximizes $ \Pro{ x_{\max}^{(t)} \geq \lfloor \overline{x} \rfloor + \alpha  \, \Big| \, x^{(0)} = z }$ and satisfies $\disc(z) \leq K$.
  The load vector of the second execution is denoted by $\tilde{x}^{(t)}$ with average load $\overline{\widetilde{x}}$, and is initialized by \[
  \tilde{x}^{(0)}_u := \lfloor \overline{z} \rfloor - (z_u - \lfloor \overline{z} \rfloor) = 2 \cdot \lfloor \overline{z} \rfloor - z_u, \qquad \mbox{for any $u \in V$.}
  \]
  Note that $\disc(\tilde{x}^{(0)}) = \disc(z) \leq K$ and $\overline{x} \geq \overline{\widetilde{x}} \geq \overline{x} - 2$.
  We couple the random orientations of the two executions by setting $
 {\tilde{\Phi}}_{u,v}^{(s)} = -{\Phi}_{u,v}^{(s)},
$  for every $[u:v] \in \M^{(s)}$, $1 \leq s \leq t$,
where $\tilde{\Phi}$ denotes the orientations of the second execution.
We now claim that for every round $s \geq 0$,
\begin{align}
  \tilde{x}_u^{(s)} = 2 \cdot \lfloor \overline{x} \rfloor - x_u^{(s)}, \qquad \mbox{for any $u \in V$.}\label{eq:easyinduction}
\end{align}
This claim is shown by induction on $s$. The base case $s=0$ holds by definition. For the induction step, consider first a node $u$  matched with a node $v$ in round $s \geq 1$. By \eq{error},
\begin{align*}
 \tilde{x}_u^{(s)} &= \frac{1}{2} \cdot \tilde{x}_u^{(s-1)} + \frac{1}{2} \cdot \tilde{x}_v^{(s-1)} + \frac{1}{2} \cdot \mathsf{Odd}(\tilde{x}_u^{(s-1)} + \tilde{x}_v^{(s-1)}) \cdot \widetilde{\Phi}_{u,v}^{(s)},
\end{align*}
and using the induction hypothesis yields
\begin{align*}
 \tilde{x}_u^{(s)} &= \frac{1}{2} \cdot (2 \lfloor \overline{x} \rfloor - x_u^{(s-1)} ) + \frac{1}{2} \cdot (2 \lfloor \overline{x} \rfloor - x_v^{(s-1)} ) -\frac{1}{2} \cdot \mathsf{Odd}(2 \lfloor \overline{x} \rfloor - x_u^{(s-1)} + 2 \lfloor \overline{x} \rfloor - x_v^{(s-1)}) \cdot \Phi_{u,v}^{(s)} \\
 &= 2 \cdot \lfloor \overline{x} \rfloor - \left( \frac{1}{2} \cdot x_u^{(s-1)} + \frac{1}{2} \cdot  x_v^{(s-1)}  + \frac{1}{2} \cdot \mathsf{Odd}( x_u^{(s-1)} + x_v^{(s-1)})\cdot \Phi_{u,v}^{(s)}              \right) \\
 &= 2 \cdot \lfloor \overline{x} \rfloor - x_u^{(s)}.
\end{align*}
%where the last equality holds by \eq{error}.

If a node $u$ is not matched in round $s$, then the claim follows directly by the induction hypothesis. Hence, \eq{easyinduction} holds, which implies $x_{\max}^{(s)} = 2 \lfloor \overline{x} \rfloor - \widetilde{x}_{\min}^{(s)}$;  therefore, for any $\alpha \geq 1$,
\begin{align*}
  \max_{ \substack{ y \in \mathbb{Z}^n \colon \\ \disc(y) \leq K}} \left\{ \Pro{ x_{\max}^{(t)} \geq \lfloor \overline{x} \rfloor + \alpha  \, \Big| \, x^{(0)} = y } \right\}   &= \Pro{ x_{\max}^{(t)} \geq \lfloor \overline{x} \rfloor + \alpha \, \Big| \, x^{(0)} = z } \\ &= \Pro{ \tilde{x}_{\min}^{(t)} \leq \lfloor \overline{x} \rfloor - \alpha \, \Big| \, \widetilde{x}^{(0)} = 2 \cdot \lfloor \overline{z} \rfloor\cdot\mathbf{1} - z } \\
  &\leq  \Pro{ \widetilde{x}_{\min}^{(t)} \leq \lfloor \overline{\widetilde{x}} \rfloor  - \alpha + 2 \, \Big| \, \widetilde{x}^{(0)} = 2 \cdot \lfloor \overline{z} \rfloor\cdot\mathbf{1} - z } \\
   &\leq \max_{ \substack{ y \in \mathbb{Z}^n \colon \\ \disc(y) \leq K}} \left\{ \Pro{x_{\min}^{(t)} \leq \lfloor \overline{x} \rfloor  - \alpha + 2 \, \Big| \, x^{(0)} = y }\right\}.
\end{align*}

The second inequality is shown in the same way. We define $x^{(0)}=z\in\mathbb{Z}^{n}$ that maximizes
$\Pro{x^{(t)}_{\min}\leq \lfloor \overline{x} \rfloor -\alpha \, \big|\, x^{(0)}= z }$. Then,
\begin{align*}
  \max_{ \substack{ y \in \mathbb{Z}^n \colon \\ \disc(y) \leq K}} \left\{ \Pro{ x_{\min}^{(t)} \leq \lfloor \overline{x} \rfloor - \alpha  \, \Big| \, x^{(0)} = y } \right\}   &= \Pro{ x_{\min}^{(t)} \leq \lfloor \overline{x} \rfloor - \alpha \, \Big| \, x^{(0)} = z } \\ &= \Pro{ \tilde{x}_{\max}^{(t)} \geq \lfloor \overline{x} \rfloor + \alpha \, \Big| \, \widetilde{x}^{(0)} = 2 \cdot \lfloor \overline{z} \rfloor\cdot\mathbf{1} - z } \\
  &\leq  \Pro{ \widetilde{x}_{\max}^{(t)} \geq \lfloor \overline{\widetilde{x}} \rfloor + \alpha  \, \Big| \, \widetilde{x}^{(0)} = 2 \cdot \lfloor \overline{z} \rfloor\cdot\mathbf{1} - z } \\
   &\leq \max_{ \substack{ y \in \mathbb{Z}^n \colon \\ \disc(y) \leq K}} \left\{ \Pro{x_{\max}^{(t)} \geq \lfloor \overline{x} \rfloor  +\alpha  \, \Big| \, x^{(0)} = y }\right\}.
\end{align*}
\end{proof}

%\mylemma{martingalebound}{\martingalebound}

%\mylemma{bounddifference}{\bounddifference}

\begin{rem}
When referring to the random matching model (discrete case), the probability space is over the randomly generated matchings {\em and} the randomized orientations of the matchings. For the balancing circuit model (discrete case), the probability space is over the randomized orientations of the (deterministic) matchings.
\end{rem}

\section{Local Divergence and Discrepancy}\label{sec:divergence}

To bound the deviation between the discrete and continuous cases,
 we consider $\max\limits_{w\in V}\big| x_w^{(t)}- \xi_w^{(t)} \big|$ in \eq{std} at all rounds $t$ and define the local divergence for the matching model.

\begin{defi}[Local $p$-Divergence for Matchings]
For any graph $G$, $p\in \mathbb{N} \setminus \{0\}$ and an arbitrary sequence of matchings
$\mathcal{M}=\langle\mathbf{M}^{(1)},\mathbf{M}^{(2)},\ldots\rangle$,  the local $p$-divergence is defined by
\[
\Psi_p(\mathcal{M}):=\max_{w\in V}\left( \sup_{t\in \mathbb{N}} \sum_{s=1}^{t} \sum_{[u:v]\in \M^{(s)}}\left|\M^{[s+1,t]}_{u,w}-\M^{[s+1,t]}_{v,w}\right|^p  \right)^{1/p}.
\]
\end{defi}
It was pointed out in \cite{RSW98} that $\Psi_1(\mathcal{M})$
is a natural quantity that measures the sum of load differences across all edges in the network aggregated over time. The difference $\left|\M^{[s+1,t]}_{u,w}-\M^{[s+1,t]}_{v,w}\right| $ is the weight of the rounding error $e_{u,v}^{(s)}$ when computing the load vector $x^{(t)}$ as in \eq{std}. The earlier the round $s$, the less the impact of the rounding error $e_{u,v}^{(s)}$, due to the smoothing of the load vector that happens between the rounds $s+1$ and $t$. In general, the local $p$-divergence $\Psi_p(\mathcal{M})$ is the $p$th norm of $\Psi_1(\mathcal{M})$. Of particular interest is the local $2$-divergence, as this value turns out to be closely related to the variance of $x_w^{(t)} - \xi_w^{(t)}$ (see \lemref{martingalebound} on page~\pageref{lem:martingalebound}).

Next we turn to our upper bound on the local $2$-divergence.

\begin{theorem}\label{thm:divergencebalancing}
Let $G$ be any graph with an arbitrary sequence of matchings $\mathcal{M}=\langle \M^{(1)},\M^{(2)},\ldots \rangle$. Then, the following statements hold:
\begin{itemize}
 \item For an arbitrary node $w \in V$ and any pair of rounds $t_1 \leq t$, it holds that
\begin{align*}
 \sum_{s=1}^{t_1} \sum_{[u:v] \in \M^{(s)}} \left( \M_{u,w}^{[s+1,t]} - \M_{v,w}^{[s+1,t]} \right)^2 &\leq 2 \cdot \sum_{k\in V} \left( \M_{k,w}^{[t_1+1,t]} - \frac{1}{n} \right)^2 .
\end{align*}
 \item It holds that  $
 \Psi_2(\mathcal{M}) \leq \sqrt{2 - 2/n}.
$
Moreover, if there is a matching $\mathbf{M}^{(t)}$ in
 $\mathcal{M}$ which contains at least one edge, then
$
 \Psi_2(\mathcal{M}) \geq 1.
$
%otherwise $\Psi_2(\mathcal{M})=0$.
\end{itemize}
\end{theorem}

While all previous upper bounds on the local divergence are increasing in the expansion, the degree, and/or the number of nodes \cite{RSW98,FS09,BCFFS11},\thmref{divergencebalancing} reveals  that the local $2$-divergence is essentially independent of any graph parameter. On the other hand, the local $1$-divergence is always lower bounded by the diameter of graph $G$ (cf.~\cite{FS09}). We continue with the proof of \thmref{divergencebalancing}.

\begin{proof}%[Proof of \thmref{divergencebalancing}]
Fix any pair of node $w \in V$ and round $t$. For any $0 \leq s \leq t$, define the following potential function:\[
  \Phi^{(s)} := \sum_{k\in V} \left( \M_{k,w}^{[s+1,t]} - \frac{1}{n} \right)^2.
\]
Notice that since $\M_{\cdot, w}^{[s+1,t]}$ is a probability vector,
$\Phi^{(s)} \leq 1 - \frac{1}{n}$ with equality holding for $s=t$ since  $\M^{[t+1,t]}=\mathbf{I}$. Consider now any round $1 \leq s \leq t$, and let $u,v$ be nodes with $[u:v] \in \M^{(s)}$. Let $y_u :=  \M_{u,w}^{[s+1,t]}$ and
$y_v :=  \M_{v,w}^{[s+1,t]}$. Note that
\[
  \M_{u,w}^{[s,t]} = \sum_{z \in V} \M_{u,z}^{[s,s]} \cdot \M_{z,w}^{[s+1,t]} = \frac{y_u + y_v}{2},
\]
and similarly,
$
 \M_{v,w}^{[s,t]} = \frac{y_u + y_v}{2}.
$
Therefore, the contribution of $u$ and $v$ to
$\Phi^{(s)} - \Phi^{(s-1)}$ equals
\begin{align*}
\lefteqn{ \left(	 \M_{u,w}^{[s+1,t]} - \frac{1}{n}		 \right)^2 + \left(	 \M_{v,w}^{[s+1,t]} - \frac{1}{n}		 \right)^2
- \left(	\M_{u,w}^{[s,t]} - \frac{1}{n}		 \right)^2 - \left(	 \M_{v,w}^{[s,t]} - \frac{1}{n}		 \right)^2 }  \\
&= \left(y_u - \frac{1}{n} \right)^2 + \left(y_v - \frac{1}{n} \right)^2 - \left(\frac{y_u + y_v}{2} - \frac{1}{n} \right)^2
 - \left(\frac{y_u + y_v}{2} - \frac{1}{n} \right)^2  \\
 &= y_u^2 - \frac{2}{n} y_u + \frac{1}{n^2} + y_v^2 - \frac{2}{n} y_v + \frac{1}{n^2} - 2 \cdot \left( \frac{(y_u+y_v)^2}{4} - \frac{y_u+y_v}{n} + \frac{1}{n^2} \right) \\
 &= y_u^2 + y_v^2 - \frac{y_u^2 + 2 y_u y_v + y_v^2}{2} \\
 &= \frac{y_u^2}{2} - y_u y_v + \frac{y_v^2}{2} = \frac{1}{2} \cdot \left( y_u - y_v \right)^2.
\end{align*}
If a node is not matched in round $s$, then its contribution to $\Phi^{(s)}-\Phi^{(s-1)}$ equals zero. Accumulating the contribution of all nodes  yields
\begin{align}
 \Phi^{(s)} - \Phi^{(s-1)} = \sum_{[u:v] \in \M^{(s)}} \frac{1}{2} \cdot \left( \M_{u,w}^{[s+1,t]} - \M_{v,w}^{[s+1,t]} \right)^2\ . \label{eq:divergenceproof}
\end{align}
Therefore,
\begin{align*}
\sum_{s=1}^{t_1}   \sum_{[u:v] \in \M^{(s)}} \left( \M_{w,u}^{[s+1,t]} - \M_{w,v}^{[s+1,t]} \right)^2 &= 2 \sum_{s=1}^{t_1} \left( \Phi^{(s)} - \Phi^{(s-1)} \right) = 2 \Phi^{(t_1)} - 2 \Phi^{(0)} \leq 2\Phi^{(t_1)},
\end{align*}
and the first statement follows.

To prove the second statement, we use the inequality above with $t_1=t$ and $\Phi^{(t)} = 1-1/n$ to obtain
\begin{align}
  \sum_{s=1}^t \sum_{[u:v] \in \M^{(s)}} \left( \M_{u,w}^{[s+1,t]} - \M_{v,w}^{[s+1,t]} \right)^2 &= 2\Phi^{(t)} - 2 \Phi^{(0)} \leq 2 \cdot \left(1 - \frac{1}{n} \right), \label{eq:divergenceequality}
\end{align}
which directly implies that $\Psi_2(\mathcal{M})\leq\sqrt{2-2/n}$.
For the lower bound, consider any round $t$ with $[u:v] \in \M^{(t)}$. Clearly, as $\M^{[t+1,t]}=\mathbf{I}$,
\[
  \M_{u,u}^{[t+1,t]} = \M_{v,v}^{[t+1,t]} = 1,
\quad \mbox{ and } \quad
  \M_{u,v}^{[t+1,t]} = \M_{v,u}^{[t+1,t]} = 0,
\]
and hence,
\begin{align*}
  \Psi_2(\mathcal{M}) &\geq \sqrt{  \left( \M_{u,u}^{[t+1,t]} - \M_{v,u}^{[t+1,t]}			 \right)^{2} } \geq
  1\enspace. %\qedhere
\end{align*}
\end{proof}

The following corollary shows that the upper bound $\Psi_2(\mathcal{M}) \leq \sqrt{2 - 2/n}$ is actually tight if the sequence of matchings should balance load vectors with arbitrarily large discrepancies.
\begin{corollary}\label{rem:divergencebound}
Suppose that there is a $K$ with $K \geq \sqrt{2n}$ and integer $t=t(K)$ so that the time-interval $[0,t]$ is $(K,1)$-smoothing. Then, $\Psi_2(\mathcal{M}) \geq \sqrt{2 - \frac{2}{n} - \frac{2n}{K^2}}$.
\end{corollary}
\begin{proof}
Recall that the time-interval $[0,t]$ is $(K,1)$-smoothing if and only if the time-interval is $(1,1/K)$-smoothing. By the second statement of \lemref{boundbymixinglemma}, we have that $\left| \M_{u,v}^{[1,t]} - \frac{1}{n} \right| \leq \frac{1}{K}$ for any $u, v \in V$.  Hence, \[\Phi^{(0)}
= \sum_{k\in V} \left( \M_{k,w}^{[1,t]} -\frac{1}{n} \right)^2
 \leq \frac{n}{K^2}\] for any fixed $w \in V$, and it follows by the equality in \eq{divergenceequality} that \[
 \Psi_2(\mathcal{M}) \geq \sqrt{2 - \frac{2}{n} - 2 \Phi^{(0)}} \geq \sqrt{2 - \frac{2}{n} - \frac{2n}{K^2} }.\qedhere
 \]
\end{proof}

%Using the same arguments as in the proof of \thmref{divergencebalancing}, we obtain the following corollary.

In the next lemma, we derive a Chernoff-type concentration inequality for bounding a sum of rounding errors.

\begin{lemma}\label{lem:martingalebound}
Consider an arbitrary sequence of matchings $\mathcal{M}=\langle \M^{(1)},\M^{(2)},\ldots \rangle$.
Fix two rounds $0 \leq  t_1 < t_2$ and the load
vector $x^{(t_1)}$ at the end of round $t_1$. For any family of  numbers $g^{(s)}_{u,v}\ ([u:v]\in \M^{(s)}, t_1 +1 \leq s \leq t_2)$, define the random variable
\[
   Z := \sum_{s=t_1+1}^{t_2} \sum_{[u:v] \in \M^{(s)}} g_{u,v}^{(s)}\cdot e_{u,v}^{(s)}.
\]
Then, $\Ex{Z}=0$, and for any $\delta > 0$, it holds that
\begin{align*}
 \Pro{ \left| Z - \Ex{Z}           \right|  \geq \delta       }
 \leq
2 \exp \left(-  \frac{ \delta^2}{2 \sum_{s=t_1+1}^{t_2} \sum_{[u:v] \in \M^{(s)}} \left(g^{(s)}_{u,v}\right)^2 }         \right).
\end{align*}
\end{lemma}
\begin{proof}
%Recall that
%\begin{align*}
%  x_w^{(t)} - \xi_w^{(t)} &=  \sum_{s=0}^{t} \sum_{[u:v] \in \M^{(s)}} \left( \M_{u,w}^{[s,t]} - \M_{v,w}^{[s,t]}   \right) \cdot e_{u,v}^{(s)}.
%\end{align*}
%Define
%\begin{align*}
%  f := \sum_{s=t_1}^{t_2} \sum_{[u:v] \in \M^{(s)}} g^{(s)}_{u,v} \cdot e_{u,v}^{(s)}.
%\end{align*}
The proof of this lemma is similar to the one of \cite[Theorem~1.1, first statement]{BCFFS11}.

Since $\Ex{ e_{u,v}^{(s)} }=0$ for all $\{u,v\} \in \M^{(s)}$, it follows that $\Ex{ Z } = 0$. Our goal is now to prove that $Z$ is concentrated around its mean by applying the concentration inequality in \thmref{martingale}. Observe that $Z$
depends on at most $(n/2) \cdot (t_2-t_1)$ random variables $e_{u,v}^{(s)}$, and each of them corresponds to one of the orientations of at most $n/2$ matching edges in each of the $t_2-t_1$ rounds. Let us denote this sequence by $Y_{\ell}$ with $(t_1+1)\cdot (n/2) + 1 \leq \ell \leq (t_2+1) \cdot (n/2)$, where $Y_{\ell}$ with $\ell = \alpha \cdot (n/2) + \beta$, $t_1+1 \leq \alpha \leq t_2$, $1 \leq \beta \leq n/2$, describes the orientation of the $\beta$th matching edge $[u':v']$ in round $\alpha$, so $Y_{\ell} = \Phi_{u',v'}^{(s)}$. Here, we take an arbitrary ordering of the matching edges in round $\alpha$, and if there are less than $\beta$ matching edges in round $\alpha$, then $Y_{\ell} = 0$.

In order to apply \thmref{martingale}, we first verify that for every $(t_1+1)\cdot (n/2) + 1 \leq \ell \leq (t_2+1) \cdot (n/2)$ with $\ell = \alpha \cdot (n/2) + \beta$
and $[u':v']$ being the $\beta$th matching edge in round $\alpha$,
\begin{align}
 \left| \Ex{ Z \, \mid \, Y_{(t_1+1) \cdot (n/2)+1}, \ldots, Y_{\ell} }
 %
 %Y_{t_1\cdot (n/2)+1}, \ldots, Y_{\ell-1} }
 - \Ex{ Z \, \mid \, Y_{(t_1+1) \cdot (n/2)+1}, \ldots, Y_{\ell-1} } \right| &\leq \left|g^{(\alpha)}_{u',v'}\right|. \label{eq:azuma}
\end{align}
In order to simplify the notation,
let $\mathcal{Y}_{\ell}:=(Y_{(t_1+1) \cdot (n/2)+1},\ldots,Y_{\ell} )$ for any $\ell$.

To prove \eq{azuma}, we split the sum of $Z$ into three parts: $s < \alpha$, $s=\alpha$, and $s > \alpha$.

\medskip
{\bf Case 1: $t_1+1 \leq s \leq \alpha-1$.}
For every $[u:v] \in \M^{(s)}$, $e_{u,v}^{(s)}$ is determined by
$\mathcal{Y}_{\ell-1}$. Hence,
\begin{align*}
 \left| \Ex{  \sum_{s=t_1+1}^{\alpha-1} \sum_{[u:v] \in \M^{(s)}} g^{(s)}_{u,v}\cdot e_{u,v}^{(s)} \, \bigg| \, \mathcal{Y}_{\ell} }
  - \Ex{  \sum_{s=t_1+1}^{\alpha-1} \sum_{[u:v] \in \M^{(s)}} g^{(s)}_{u,v} \cdot e_{u,v}^{(s)} \, \bigg| \, \mathcal{Y}_{\ell-1} } \right| &= 0.
\end{align*}

{\bf Case 2: $s = \alpha$.}
Then,
%Note that $\mathcal{Y}_{\ell-1}$
%determines the load vector at step $s-1$, and therefore
%$\Ex{ e_{u,v}^{(s)} \, \mid \, \mathcal{Y}_{\ell-1} } = 0$. %determine the load vector at step $s-1$, the set of random variables $\left\{e_{u,v}^{(s)} \colon [u:v] \in \M^{(s)} \right\}$ conditioned on $Y_{t_1 \cdot (n/2)+1},\ldots,Y_{\ell},Y_{\ell-1}$ is a set of independent random variables.
%Further, the outcome of $Y_{\ell}$ may only change the conditional expectation of $g_{u',v'}^{(s)}$. Therefore,
\begin{align*}
\lefteqn{ \left| \Ex{ \sum_{[u:v] \in \M^{(\alpha)}} g^{(\alpha)}_{u,v} \cdot  e_{u,v}^{(\alpha)} \, \Big| \, \mathcal{Y}_{\ell} }
    -  \Ex{   \sum_{[u:v] \in \M^{(\alpha)}}
g^{(\alpha)}_{u,v} \cdot e_{u,v}^{(\alpha)} \, \Big| \, \mathcal{Y}_{\ell-1} } \right| } \\
&\leq \sum_{\substack{[u:v] \in \M^{(\alpha)}\\ [u:v] \neq [u':v'] } } \left|
\Ex{  g^{(\alpha)}_{u,v} \cdot  e_{u,v}^{(\alpha)} \, \Big| \, \mathcal{Y}_{\ell}  }
    -  \Ex{
g^{(\alpha)}_{u,v} \cdot e_{u,v}^{(\alpha)} \, \Big| \, \mathcal{Y}_{\ell-1} } \right| \\
&\quad\quad\quad\quad\quad\, + \left| \Ex{  g^{(\alpha)}_{u',v'} \cdot  e_{u',v'}^{(\alpha)} \, \Big| \, \mathcal{Y}_{\ell}  }
    -  \Ex{
g^{(\alpha)}_{u',v'} \cdot e_{u',v'}^{(\alpha)} \, \Big| \, \mathcal{Y}_{\ell-1} } \right|.
%&\leq g^{(s)}_{u',v'},
\end{align*}
Note for any edge $[u:v] \in \M^{(\alpha)}, [u:v] \neq [u':v'] $ whose orientation is determined by $\mathcal{Y}_{\ell-1}$, it holds that
\[
 \left|
\Ex{  g^{(\alpha)}_{u,v} \cdot  e_{u,v}^{(\alpha)} \, \Big| \, \mathcal{Y}_{\ell}  }
    -  \Ex{
g^{(\alpha)}_{u,v} \cdot e_{u,v}^{(\alpha)} \, \Big| \, \mathcal{Y}_{\ell-1} } \right| = 0,
\]
since both expectations are determined by $\mathcal{Y}_{\ell-1}$ and $\mathcal{Y}_{\ell}$. For any edge $[u:v] \in \M^{(\alpha)},$ $[u:v] \neq [u':v'] $ whose orientation is not determined by $\mathcal{Y}_{\ell-1}$ (and thus also not determined by $\mathcal{Y}_{\ell}$), the rounding error $e_{u,v}^{(\alpha)}$ is independent of all other rounding errors in round $\alpha$, so in particular, it is independent of $e_{u',v'}^{(\alpha)}$. Therefore,
\[
 \left|
\Ex{  g^{(\alpha)}_{u,v} \cdot  e_{u,v}^{(\alpha)} \, \Big| \, \mathcal{Y}_{\ell}  }
    -  \Ex{
g^{(\alpha)}_{u,v} \cdot e_{u,v}^{(\alpha)} \, \Big| \, \mathcal{Y}_{\ell-1} } \right| = 0 - 0 = 0.
\]
Consequently,
\begin{align*}
\lefteqn{ \left| \Ex{ \sum_{[u:v] \in \M^{(\alpha)}} g^{(\alpha)}_{u,v} \cdot  e_{u,v}^{(\alpha)} \, \Big| \, \mathcal{Y}_{\ell} }
    -  \Ex{   \sum_{[u:v] \in \M^{(\alpha)}}
g^{(\alpha)}_{u,v} \cdot e_{u,v}^{(\alpha)} \, \Big| \, \mathcal{Y}_{\ell-1} } \right| } \\
&\leq  \left| \Ex{
g^{(\alpha)}_{u',v'} \cdot e_{u',v'}^{(\alpha)} \, \Big| \, \mathcal{Y}_{\ell} } - \Ex{
g^{(\alpha)}_{u',v'} \cdot e_{u',v'}^{(\alpha)} \, \Big| \, \mathcal{Y}_{\ell-1} } \right| \\
&\leq \left|g^{(\alpha)}_{u',v'}\right|,
\end{align*}
where the last inequality holds, since $e_{u',v'}^{(\alpha)} \in \{-1/2,0,+1/2\}$.

{\bf Case 3: $\alpha +1 \leq s \leq t_2$.}
Let $\tilde{\ell} \geq \ell$ be the smallest integer so that $\mathcal{Y}_{\tilde{\ell}}$ determines the load vector $x^{(\alpha)}$.
%Note that $\mathcal{Y}_{\ell}$ also determines the load vector $x^{(\alpha-1)}$.
By the law of total expectation, it holds for any $\{u,v\} \in \M^{(s)}$ that
\begin{align*}
   \Ex{ e_{u,v}^{(s)} \, \mid \, \mathcal{Y}_{\ell} }
&= \Ex{  \Ex{ e_{u,v}^{(s)} \, \mid \, \mathcal{Y}_{\tilde{\ell}}   }  \, \mid \,  \mathcal{Y}_{\ell}    } = \Ex{ 0 \, \mid \, \mathcal{Y}_{\ell} } = 0,
\end{align*}
and the same also holds if we replace $\ell$ by $\ell-1$.
Hence, by the linearity of expectation,
\begin{align*}
 \left| \Ex{  \sum_{s=\alpha+1}^{t_2} \sum_{[u:v] \in \M^{(s)}} g^{(s)}_{u,v}\cdot e_{u,v}^{(s)} \, \Big| \, \mathcal{Y}_{\ell} }
 - \Ex{  \sum_{s=\alpha+1}^{t_2} \sum_{[u:v] \in \M^{(s)}} g^{(s)}_{u,v} \cdot e_{u,v}^{(s)} \, \Big| \, \mathcal{Y}_{\ell-1}  } \right|
 &= 0.
\end{align*}
Combining the contribution of all three cases establishes \eq{azuma}. Applying \thmref{martingale} to the martingale $X_{\ell}:= \Ex{ Z \, \mid \, {\cal Y}_{\ell}}$, where $(t_1+1) \cdot (n/2)+1 \leq \ell \leq (t_2+1) \cdot (n/2)$, finishes the proof.
\end{proof}

We now derive the following Chernoff-type bounds from \lemref{martingalebound}. Although similar bounds have been established in previous works \cite{FS09,BCFFS11},
we obtain a much better concentration (which is independent of the graph's expansion) due to our new approach of bounding the local $2$-divergence. Specifically, the second statement provides Gaussian tail bounds on the deviation from the average load.

\begin{lemma}\label{lem:bounddifference}
Consider an arbitrary sequence of matchings $\mathcal{M}=\langle \M^{(1)},\M^{(2)},\ldots \rangle$. Fix two rounds $t_1 \leq t_2$ and the initial load vector $x^{(0)}$. If the time-interval $[0,t_1]$ is $(K,1/(2n))$--smoothing, then for any node $w\in V$ and $\delta>1/(2n)$, it holds that
\begin{align*}
\Pro{ \left|\sum_{k\in V} x_k^{(t_1)}\cdot \M^{[t_1+1,t_2]}_{k,w} - \overline{x} \right|\geq\delta }\leq2\cdot \exp\left( -\frac{(\delta - 1/(2n) )^2}{4 \sum_{k\in V} \left( \M^{[t_1+1,t_2]}_{k,w} -1/n\right)^2}\right).
\end{align*}
In particular, for any node $w\in V$ and $\delta>1/(2n)$, it holds that
\[
\Pro{ \left|x_w^{(t_1)} - \overline{x} \right|\geq\delta }\leq2\cdot \exp\left( -\left(\delta - \frac{1}{2n} \right)^2\Big\slash 4 \right).
\]
\end{lemma}
\begin{proof}
 By \eq{std},  for any node $k$ and round $t$, it holds that
\begin{align*}
  x_k^{(t)} =  \xi_k^{(t)} +
    \sum_{s=1}^{t} \sum_{ [u:v] \in \M^{(s)}}
     \left( \M^{[s+1,t]}_{u,k} - \M^{[s+1,t]}_{v,k} \right)\cdot e_{u,v}^{(s)},
\end{align*}
where  $\xi^{(0)}=x^{(0)}$. Therefore,
\begin{align*}
\lefteqn{\sum_{k\in V} x_k^{(t_1)} \cdot \M^{[t_1+1,t_2]}_{k,w}}\\
&=\sum_{k\in V} \left(\xi_k^{(t_1)}+\sum_{s=1}^{t_1} \sum_{[u:v]\in \M^{(s)}}\left( \M^{[s+1,t_1]}_{u,k}-\M^{[s+1,t_1]}_{v,k}\right)\cdot e_{u,v}^{(s)}\right)\cdot \M^{[t_1+1,t_2]}_{k,w}\\
%=&\sum_{w\in V}\xi_w^{(t_1)}\cdot \M^{[t_1,t_2]}_{w,k}+ \sum_{w\in V} \sum_{s=1}^{t_1}\sum_{[u:v]\in \M^{(s)}}\left( \M^{[s+1,t_1]}_{u,w}-\M^{[s+1,t_1]}_{v,w} \right)\cdot e_{u,v}^{(s)}\cdot \M^{[t_1,t_2]}_{w,k}\\
&=\sum_{k\in V} \xi_k^{(t_1)}\cdot \M^{[t_1+1,t_2]}_{k,w} + \sum_{k\in V} \sum_{s=1}^{t_1}\sum_{[u:v]\in \M^{(s)}} \M^{[t_1+1,t_2]}_{k,w}\cdot \left( \M^{[s+1,t_1]}_{u,k}-\M^{[s+1,t_1]}_{v,k} \right)\cdot e_{u,v}^{(s)}~.\\
 \end{align*}
By the first statement of Lemma~\ref{lem:boundbymixinglemma} and $\overline{\xi}=\overline{x}$,  we have
$ \sum_{k\in V} \xi_k^{(t_1)}\cdot \M^{[t_1+1,t_2]}_{k,w}=  \overline{x} + \Theta$ after $t_1$ rounds, where
$|\Theta| \leq 1/(2n)$.
  Therefore,
\[
\sum_{k\in V} x_k^{(t_1)} \cdot \M^{[t_1+1,t_2]}_{k,w}
=\overline{x} + \Theta +\sum_{k\in V} \sum_{s=1}^{t_1}\sum_{[u:v]\in \M^{(s)}}\M^{[t_1+1,t_2]}_{k,w}\cdot \left( \M^{[s+1,t_1]}_{u,k}-\M^{[s+1,t_1]}_{v,k} \right)\cdot e_{u,v}^{(s)},
\]
and
\begin{align*}
\lefteqn{\Pro{ \left|\sum_{k\in V} x_k^{(t_1)}\cdot\M^{[t_1+1,t_2]}_{k,w}- \overline{x} \right|\geq\delta }} \\
&= \Pro{ \left|\left(\sum_{k\in V} \sum_{s=1}^{t_1}\sum_{[u:v]\in \M^{(s)}} \M^{[t_1+1,t_2]}_{k,w} \cdot \left( \M^{[s+1,t_1]}_{u,k}-\M^{[s+1,t_1]}_{v,k} \right)\cdot e_{u,v}^{(s)}\right) + \Theta \right|\geq\delta }\\
&\leq \Pro{ \left| \sum_{s=1}^{t_1}\sum_{[u:v]\in \M^{(s)}} \left( \sum_{k\in V}  \M^{[t_1+1,t_2]}_{k,w}\cdot \left( \M^{[s+1,t_1]}_{u,k}-\M^{[s+1,t_1]}_{v,k} \right) \right) \cdot e_{u,v}^{(s)} \right|\geq\delta - |\Theta| }\\
&\leq  2\cdot \exp\left( -\frac{(\delta-1/(2n))^2}{
2\sum_{s=1}^{t_1}\sum_{[u:v]\in \M^{(s)}}\left(
\sum_{k\in V} \M_{k,w}^{[t_1+1,t_2]}\cdot \left(
\M^{[s+1,t_1]}_{u,k}- \M^{[s+1,t_1]}_{v,k}\right)
\right)^2}
\right),
\end{align*}
where the last inequality follows from Lemma~\ref{lem:martingalebound}. Further,
 \begin{align*}
 \lefteqn{\sum_{s=1}^{t_1} \sum_{[u: v]\in \M^{(s)}}
 \left( \sum_{k\in V} \M_{k,w}^{[t_1+1,t_2]}\cdot\left( \M_{u,k}^{[s+1,t_1]}-\M_{v,k}^{[s+1,t_1]} \right) \right)^2} \\
& = \sum_{s=1}^{t_1} \sum_{[u:v]\in \M^{(s)}}\left( \M_{u,w}^{[s+1,t_2]}-\M_{v,w}^{[s+1,t_2]} \right)^2  \\
&\leq  2\cdot \sum_{k \in V} \left( \M_{k,w}^{[t_1+1,t_2]} - \frac{1}{n} \right)^2,
\end{align*}
where the last  inequality follows from the first statement of \thmref{divergencebalancing}. Therefore,
\begin{align*}
\Pro{ \left|\sum_{k\in V} x_k^{(t_1)}\cdot \M_{k,w}^{[t_1+1,t_2]}  - \overline{x} \right|\geq\delta }\leq2\cdot\exp\left( -\frac{(\delta - 1/(2n))^2}{
4 \sum_{k \in V} \left( \M^{[t_1+1,t_2]}_{k,w} -1/n \right)^2
} \right),
\end{align*}
which finishes the proof of the first statement. The second statement follows directly by using the first statement with $t_1=t_2$.
\end{proof}

Based on the upper bound on $\Psi_2(\mathcal{M})$, we obtain the following theorem:
\begin{theorem}\label{thm:deviationmatching}
Let $G$ be any graph. Then the following three statements hold:
%\vspace{-5pt}
\begin{itemize}\itemsep 0pt
\item Let $\mathcal{M}=\langle\M^{(1)},\M^{(2)},\ldots\rangle$ be any sequence of matchings, and let $x^{(0)} = \xi^{(0)}$. Then for any round $t$ and any $\kappa \geq 1$, it holds that
\[
  \Pro{ \max_{w \in V} \left| x_{w}^{(t)} - \xi_{w}^{(t)} \right| \leq \sqrt{4 \kappa \cdot \log n}}  \geq 1 - 2 n^{-\kappa+1}.
  \]
\item
In the balancing circuit model, for any initial load vector $x^{(0)}$ with discrepancy at most $K$, we reach a discrepancy of $ \sqrt{48 \log n}+1$ after $ \tau_{\cont}(K,1) = \Oh\left(d \cdot \frac{\log (Kn)}{1-\lambda(\M)}\right)$ rounds with probability at least $1-2 n^{-2}$. In the random matching model, we reach a discrepancy of $ \sqrt{48 \log n}+1$ after $\tau_{\cont}(K,1)=\Oh\left(\frac{\log (Kn)}{1-\lambda(\P)}\right)$ rounds with probability at least $1-2 n^{-1}$.
\item Consider the random matching model with $x^{(0)}=\xi^{(0)}$. If the initial load vector $x^{(0)}$ has discrepancy at most $K$, where $K\geq \mathrm{e}^{\mathrm{e}}$, then
\[
   \Pro{ \sup_{t \in \mathbb{N}} \max_{w \in V} \left| x_w^{(t)} - \xi_w^{(t)} \right| \leq \sqrt{4 \cdot \left(7 \log n + \log \log K \right)} + 1 } \geq 1 - 2 n^{-1}.
\]

\end{itemize}
\end{theorem}

The first statement of \thmref{deviationmatching} states that even if an adversary specifies the matchings for all rounds, it is not possible to achieve a deviation of more than $\Oh(\sqrt{\log n})$ between the discrete and the continuous cases (for a fixed round).
As shown in \cite[Theorem~6.3]{FGS12}, for the $d$-dimensional torus, the deviation is at least $\Omega((\log n)^{1/(4d)})$ with probability $1-o(1)$, which means our bound of $\Oh(\sqrt{\log n})$ is almost tight.
We continue with the proof of \thmref{deviationmatching}.

\begin{proof}%[Proof of \thmref{deviationmatching}]
Recall by \eq{std} that
\begin{align*}
  x_{w}^{(t)} - \xi_{w}^{(t)} &= \sum_{s=1}^{t} \sum_{[u:v] \in \M^{(s)}} e_{u,v}^{(s)} \left(	 \M_{u,w}^{[s+1,t]} - \M_{v,w}^{[s+1,t]}		 \right).
\end{align*}
Applying \lemref{martingalebound} with $g_{u,v}^{(s)} = \M_{u,w}^{[s+1,t]} - \M_{v,w}^{[s+1,t]}$ yields, for any
$w\in V$, that
\begin{align*}
  \lefteqn{\Pro{   \left| x_{w}^{(t)} - \xi_{w}^{(t)} \right| \geq \sqrt{2 \kappa \cdot \log n} \cdot \Psi_2(\mathcal{M}) }} \\
    &\leq 2\exp\left(-\frac{2\kappa\cdot \log n\cdot (\Psi_2(\mathcal{M}))^2}{2\cdot \sum_{s=1}^t\sum_{[u:v]\in\M^{(s)}} \left(\M_{u,w}^{[s+1,t]} - \M_{v,w}^{[s+1,t]} \right)^2 }\right)\\
    &\leq 2 n^{-\kappa}.
\end{align*}
Taking the union bound over all $n$ nodes and recalling the bound $\Psi_2(\mathcal{M}) \leq \sqrt{2}$ from \thmref{divergencebalancing} completes the proof of the first statement. The second statement follows directly from the first statement ($\kappa=3$) and the definition of $\tau_{\cont}(K,1)$.

For the proof of the third statement, we require a general estimate on $\lambda_2(\bP)$. By Cheeger's inequality (cf.~\cite{Si92}), we have
$
 \lambda_2(\bP) \leq 1 - \Phi(\bP)^2/2$,
where $\Phi(\bP)$ is the conductance defined by
\[
  \Phi(\bP) = \min_{ \substack{S \subseteq V \colon \\ 0 < \pi(S) \leq 1/2 }} \frac{ \sum_{u \in S, v \not\in S} \pi(u) \cdot \bP_{u,v}  }{ \pi(S)}.
\]
Since $G$ is connected and $\pi$ is the uniform distribution, we have $\Phi(\bP) \geq \frac{1/n \cdot 1/(2n)}{1/2} =\frac{1}{n^2}$, and Cheeger's inequality implies that $\lambda_2(\bP) \leq 1 - \frac{1}{2 n^4}$. Using \thmref{contmatching}, it follows that in the continuous case, the discrepancy is at most $1$ after $\tau:=\Oh\left(\frac{ \log (Kn)}{ 1-\lambda_2(\bP)}\right)=\Oh( \log (Kn) \cdot n^4)$ rounds with probability at least $1-n^{-1}$. Using the first statement with $\kappa=7 + \log \log K/\log n$, it follows by the union bound over the time-interval $[1,\tau]$ that
\[
   \Pro{ \max_{t \in [0,\tau]} \max_{w \in V} \left| x_{w}^{(t)} - \xi_{w}^{(t)} \right| \leq \sqrt{4 \kappa \cdot \log n}  } \geq 1 - \Oh\left( \log (Kn) \cdot n^4\right) \cdot 2 n^{-\kappa+1} \geq 1 - n^{-1}.
\]

Now, consider the rounds $t > \tau$. Assuming that the above event occurs, we have for any $w \in V$ that
\begin{align*}
  \left| x_w^{(t)} - \xi_w^{(t)} \right|
  &\leq \max \left\{ x_{\max}^{(t)} - \xi_{\min}^{(t)}, \xi_{\max}^{(t)} - x_{\min}^{(t)}  \right\} \\
  &\leq \max \left\{ x_{\max}^{(\tau)} - \xi_{\min}^{(\tau)}, \xi_{\max}^{(\tau)} - x_{\min}^{(\tau)}  \right\} \\
  &\leq  \max \left\{ \max_{u \in V} \left( x_{u}^{(\tau)} - \xi_{u}^{(\tau)} \right), \max_{u \in V} \left( \xi_{u}^{(\tau)} - x_{u}^{(\tau)} \right) \right\} + \disc(\xi^{(\tau)}) \\
  &\leq \sqrt{4 \kappa \cdot \log n} + 1,
\end{align*}
which completes the proof.
\end{proof}
%Combining the three insights that (i) the maximum load is non-increasing and the minimum load is non-decreasing, (ii) the discrepancy of the continuous case is at most $1$ in round $\tau$ (w.p.~at least $1-n^{-1}$) and (iii) the maximum deviation between the discrete and continuous process is at most $\sqrt{4 \kappa \cdot \log n}$ in the time-interval $[0,\tau]$ (w.p.~at least $1-n^{-1}$), we conclude that the maximum deviation between the continuous and the discrete case is at most $\sqrt{4 \kappa \cdot \log n} + 1$ for {\em all} rounds (w.p.~at least $1-2n^{-1}$). This proves the third statement and finishes the proof of the theorem.
%\end{proof}

%We further point out that the discrepancy bound in \thmref{deviationmatching} will be improved later (see \thmref{main} or \thmref{logg}).1

%We first consider the deviation between the discrete and the continuous case.

%Using our upper bounds on $\Psi_2(G)$, we immediately obtain the following corollary.
%\begin{cor}
%Under the assumptions of \thmref{deviationmatching}, it holds for the matching model that
%\[
%\Pro{  \max_{u \in V} \left| x_{u}^{(t)} - \xi_{u}^{(t)} \right| \leq 4 \sqrt{\log n} \cdot \sqrt{2} } \geq 1- n^{-2}.
%\]
%\end{cor}

\section{Token-Based Analysis via Random Walks}\label{sec:randomwalk}

In this section, we relate the movement of tokens to independent random walks. In \secref{boundingload}, we formalize this relation and derive strong concentration results for the load distribution on a subset of nodes. In \secref{randomwalkresults}, we use these new concentration results to analyze the discrepancy on arbitrary (possibly non-regular) graphs. All our results in this section will hold for the balancing circuit and random matching model.
%, in particular, we prove a bound on the discrepancy which is $\Oh( (\log n)^{\eps})$ for an arbitrary small constant $\epsilon > 0$.

\subsection{Bounding the Load via Random Walks}\label{sec:boundingload}

We now present our  new approach that
allows us to upper bound the load of a node by assuming that the tokens
perform independent random walks in every round. Throughout \secref{boundingload}, we assume that the load vector is non-negative.

Let ${\cal T}=\{1,\ldots,\|x^{(0)} \|_{1} \}$ be the set of all tokens, which are assumed to be distinguishable for the sake of the analysis. The tokens may change their location via matching edges according to the following rule: If two nodes $u$ and $v$ are matched in round $t$, then  the $x_u^{(t-1)}+x_v^{(t-1)}$
tokens located at $u$ and $v$ at the end of round $t-1$ are placed in a single urn.
After that,  if $\Phi_{u,v}^{(t)}=1$, then node $u$ draws $\Big\lceil \frac{x_u^{(t-1)} + x_v^{(t-1)}}{2} \Big\rceil$ tokens from the urn uniformly at random without replacement, and node $v$ receives the remaining tokens. Otherwise,
$\Phi_{u,v}^{(t)}=-1$, and node
$u$ draws $\Big\lfloor \frac{x_u^{(t-1)} + x_v^{(t-1)}}{2}\Big\rfloor$ tokens from the urn and  node $v$ receives the remaining tokens again. We observe that each individual token  located at node $u$ or $v$ at the end of round $t-1$ is assigned to either $u$ or $v$ with probability $1/2$. Note that this token-based process performs  in exactly the same way as the original protocol introduced in \secref{matching}.

We now prove that every token viewed individually performs a random walk with respect to the matching matrices. Henceforth, we use $w_i^{(t)}$ to represent the location (the node) of token $i\in\mathcal{T}$ at the end of round $t$. We also use the the notation that for any $n$ by $n$ matrix $\M$, any node $u \in V$ and subset $D \subseteq V$, $\M_{u,D} := \sum_{v \in D} \M_{u,v}$.

\begin{lemma}\label{lem:rightprobability}
Consider an arbitrary sequence of matchings $\mathcal{M}=\langle \M^{(1)},\M^{(2)},\ldots \rangle$.
Fix any non-negative load vector at the end of round $t_1$, and consider a token $i \in {\cal T}$ located at node $u=w_i^{(t_1)}$ at the end of round $t_1$.
Then, for any $t_2 \geq t_1$,
\[
  \Pro{ w_i^{(t_2)} = v } = \M_{u,v}^{[t_1+1,t_2]},
 \]
and more generally, for any set $D \subseteq V$,
\[
  \Pro{ w_i^{(t_2)} \in D } = \M_{u,D}^{[t_1+1,t_2]}.
\]
\end{lemma}
\begin{proof}
We prove by induction on $t$ that for an arbitrary pair of nodes $u,v \in V$ and round $t \geq t_1$, the probability for a token which is at node $u$ at the end of round $t_1$ to be at node $v$ at the end of round $t$ equals $\M_{u,v}^{[t_1+1,t]}$.
Since $\M^{[t_1+1,t_1]}=\mathbf{I}$, the claim is trivially true for $t=t_1$.
Consider now any  round $t \geq t_1$ for which the induction hypothesis holds.
 If node $v$ is not part of the matching in round $t+1$, then the induction step holds trivially since
 $\M_{u,v}^{[t_1+1,t]} = \M_{u,v}^{[t_1+1,t+1]}$.
  So suppose that node $v$ is matched with a node $k$
in round $t+1$. Since tokens are distributed uniformly, it follows that any token at node $v$ or $k$ (if there are any) in round $t$ will be assigned to node $v$ with probability exactly $1/2$, regardless of whether the sum of tokens is even or not. Notice that tokens  located at other nodes at the end of round $t$ cannot be located at node $v$ at the end of round $t+1$. Therefore,
\begin{align*}
 \Pro{ w_{i}^{(t+1)} = v } &= \frac{1}{2} \cdot   \Pro{ w_{i}^{(t)} = v } + \frac{1}{2} \cdot  \Pro{ w_{i}^{(t)} =k }.
\end{align*}
Using the induction hypothesis, it follows that
\begin{align*}
 \Pro{ w_{i}^{(t+1)} = v } &= \frac{1}{2} \cdot   \M_{u,v}^{[t_1+1,t]} + \frac{1}{2} \cdot   \M_{u,k}^{[t_1+1,t]} \\
 &= \M_{v,v}^{(t+1)} \cdot   \M_{u,v}^{[t_1+1,t]} + \M_{k,v}^{(t+1)} \cdot   \M_{u,k}^{[t_1+1,t]} \\
 &= \M_{u,v}^{[t_1+1,t+1]},
\end{align*}
which completes the induction. The second statement of the lemma follows immediately by summing over all nodes in $D$.
\end{proof}

The next lemma is the crux of our token-based analysis. It shows that the probability of a certain set of tokens being located on a set of nodes $D$ at the end of round $t_2$ is at most the product of the individual probabilities. This negative correlation will enable us to derive a strong version of the Chernoff bound (see~\lemref{chernofftoken}).

\begin{lemma}\label{lem:randomwalklemma}
Consider an arbitrary sequence of matchings $\mathcal{M}=\langle \M^{(1)},\M^{(2)},\ldots \rangle$.
Fix any non-negative load vector at the end of round $t_1\geq 0$, and let ${\cal B} \subseteq {\cal T}$ be an arbitrary subset of tokens. Then, for any subset of nodes $D \subseteq V$ and round $t_2 > t_1$, it holds that
\begin{align*}
 \Pro{  \bigcap_{i \in {\cal B}} \left\{ w_i^{(t_2)} \in D\right\}} &\leq \prod_{i\in\mathcal{B}}\M_{w_i^{(t_1)},D}^{[t_1+1,t_2]} = \prod_{i \in {\cal B}} \Pro{ w_i^{(t_2)} \in D }.
\end{align*}
\end{lemma}
\begin{proof}
We only have to prove the inequality, as the equality follows directly from \lemref{rightprobability}.
Assume for simplicity that all tokens in ${\cal B}$ are numbered from $1$ to $\beta:=|{\cal B}|$. To simplify the notation, we
define for any token $i \in {\cal B}$ and round $t \in [t_1,t_2]$,
\[
z_i^{(t)}:=\M^{[t+1,t_2]}_{w^{(t)}_i,D}
\]
 and
\[
 Z^{(t)}: = \prod_{i=1}^\beta z_{i}^{(t)}.
\]
Our goal is to prove that the sequence $Z^{(t)}$, $t \geq t_1$, forms a supermartingale with respect to the sequence of load vectors $x^{(t)}$, $t \geq t_1$, i.e., it holds for any $t > t_1$ that
\begin{align}
  \Ex{ Z^{(t)} \, \mid \, x^{(t-1)},\ldots,x^{(t_1)} }
  &\leq Z^{(t-1)}. \label{eq:corr}
\end{align}
Assuming that \eq{corr} holds, we can deduce the statement of the lemma as follows:
\begin{align*}
 \Ex{ Z^{(t_2)} } &\leq Z^{(t_1)} = \prod_{i=1}^{\beta} z_i^{(t_1)} = \prod_{i=1}^{\beta} \M_{w_i^{(t_1)},D}^{[t_1+1,t_2]}.
\end{align*}
By definition,
\begin{align*}
Z^{(t_2)} &= \prod_{i=1}^{\beta} \M_{w_i^{(t_2)},D}^{[t_2+1,t_2]},
\end{align*}
which is one if $w_i^{(t_2)} \in D$ for all $i\in\mathcal{B}$, and zero, otherwise. Therefore,
\[
\Ex{ Z^{(t_2)} } = \Pro{ Z^{(t_2)} = 1} = \Pro{ \bigcap_{i \in {\cal B}} \left\{ w_i^{(t_2)} \in D \right\} },
\]
and the proof is complete.

%Expanding $Z^{(t)}$ yields
%\begin{align*}
% \Ex{ Z^{(t)} \, \mid \, x^{(t-1)}, \ldots, x^{(0)} } &= \Ex{ \prod_{i=1}^k z_{i}^{(t)} \, \mid \, x^{(t-1)}, \ldots, x^{(0)} }.
%\end{align*}
It remains to prove \eq{corr}.
To this end, fix the load vector $x^{(t-1)}$ and partition the set of tokens  $\mathcal{B} = \{1,\ldots,\beta\}$ into
 disjoint sets $S_1,S_2,\ldots,S_{\beta'}$ with $1 \leq \beta' \leq \beta$ so that
  every token in $S_j$ has the same set of possible assignments at the end of  round $t$.
Since tokens with different sets of possible assignments behave independently in round $t$, it follows that
\begin{align}
  \Ex{ Z^{(t)} \, \big| \, x^{(t-1)}, \ldots, x^{(t_1)} } &= \Ex{ \prod_{j=1}^{\beta'} \prod_{i \in S_{j}} z_i^{(t)} \, \bigg| \, x^{(t-1)}, \ldots, x^{(t_1)} } \notag \\
&= \prod_{j=1}^{\beta'} \Ex{ \prod_{i \in S_{j}} z_i^{(t)} \, \bigg| \, x^{(t-1)}, \ldots, x^{(t_1)} }.  \label{eq:correlation}
\end{align}

Hence, in order to prove \eq{corr}, it suffices to prove that for every $1 \leq j \leq \beta'$,
\begin{align}
  \Ex{ \prod_{i \in S_{j}} z_i^{(t)} \, \bigg| \, x^{(t-1)}, \ldots, x^{(t_1)} } &\leq \prod_{i \in S_j} z_i^{(t-1)}. \label{eq:corrtoprove}
\end{align}
Consider first those sets $S_j$ so that every token $i \in S_j$ has only one possible assignment, meaning that node $w_i^{(t-1)}$ is not incident to any matching edge in round $t$. In this case we have $w_i^{(t)} = w_i^{(t-1)}$, $\M_{w_i^{(t-1)},{\cal D}}^{[t,t_2]} = \M_{w_i^{(t-1)},{\cal D}}^{[t+1,t_2]}$,
 and hence, $z_{i}^{(t)} = z_{i}^{(t-1)}$, and, consequently, \eq{corrtoprove} holds.

The second and more involved case concerns those sets $S_j$ for which every token in $S_j$ has two possible assignments at the end of round $t$, denoted by $u=u(j)$ and $v=v(j)$, with $\{u,v\} \in \M^{(t)}$.
 Assume w.l.o.g. that $\M^{[t+1,t_2]}_{u,D}\geq \M^{[t+1,t_2]}_{v,D}$, and  tokens in $S_j$ are numbered from $1$ to $\gamma=|S_j|$. Then, for every token $i \in \{1,\ldots,\gamma\}$, define a random variable $X_i$ as follows:
\begin{equation*}
 X_i =
\begin{cases}
 1 & \mbox{if $w_i^{(t)}=u$}, \\
 0 & \mbox{if $w_i^{(t)}=v.$}
\end{cases}
\end{equation*}
Our claim is that the (random) vector $X=(X_1,\ldots,X_{\gamma}) \in \{0,1\}^{\gamma}$ satisfies the negative regression condition~(cf.~Definition~\ref{def:regression}), i.e.,
for every two disjoint subsets $\mathcal{L}$ and $\mathcal{R}$ of $S_j$ and every non-decreasing function $f \colon \{0,1\}^{|\mathcal{L}|} \to \mathbb{R}$, it holds that
\[
  \Ex{ f(X_q, q \in \mathcal{L}) \, \mid \, X_r = \sigma_r, r \in \mathcal{R} }
\]
is non-increasing in each $\sigma_r\in\{0,1\}, r \in \mathcal{R}$. To establish this, it suffices to show that
\begin{align}
 \Ex{ f(X_q, q \in \mathcal{L}) \, \mid \, X_r = \sigma_r, r \in \mathcal{R} } &\geq
 \Ex{ f(X_q, q \in \mathcal{L}) \, \mid \, X_r = \widetilde{\sigma}_r, r\in \mathcal{R} } \label{eq:toprove},
\end{align}
where $\widetilde{\sigma}_r=\sigma_r$ for every $r \in \mathcal{R}$ except for one $r' \in \mathcal{R}$, where $\widetilde{\sigma}_{r'} > \sigma_{r'}$.
To prove the above inequality, we use a coupling argument. We expose the locations of the tokens in $S_j$ one after another in an arbitrary order. In particular, we may expose the assignments of tokens in $S_j$ before considering the other tokens (the ones not in ${\cal B}$) that are located on $u$ and $v$ at the beginning of round $t$. Note that for every token $i \in S_j$, the probability of being assigned to node $u$ (or $v$) depends on the placement of the previous tokens.
In fact, the exact probability is not required here; instead, we shall only use the fact that the probability for a token to be assigned to node $u$ is non-increasing in the fraction of tokens that have been assigned to $u$ before.
More formally, for any $1 \leq i \leq \gamma+1$, let $\alpha(i)$ be the number of tokens in $\{1,\dots,i-1\}$ that are assigned to node $u$. Hence, if we associate to every token $i \in S_j$ a uniform random variable $U_i \in [0,1]$, then there exists a threshold function $T(i,\alpha(i)) \in [0,1]$ satisfying the following properties:
\begin{enumerate}\itemsep -0pt
 \item If $U_i \geq T(i,\alpha(i))$, then token $i$ is assigned to node $u$.
 \item If $U_i < T(i,\alpha(i))$, then token $i$ is assigned to node $v$.
 \item For any fixed $i$, $T(i,\alpha(i))$ is non-decreasing in $\alpha(i)$.
\end{enumerate}
Without loss of generality, assume that $S_j=\{1,\ldots,\gamma\}$,
$\mathcal{R} = \{ 1, \ldots, r \}$, $\mathcal{L} = \{ r+1,\ldots, r+ \ell \}$, $r+\ell \leq \gamma$. Recall that \eq{toprove} involves two conditional probability spaces, one for $X_r = \sigma_r, r\in\mathcal{R}$   and the other one for $X_r=\widetilde{\sigma}_{r},r\in\mathcal{R}$. We denote these probability spaces by $\Omega$ and $\widetilde{\Omega}$, respectively.

Since these probability spaces are only conditional on the placements of tokens in $\mathcal{R}$, we can couple both probability spaces by assuming that the random variables $U_i$ attain the same values for every $i \in \mathcal{L}$ in $\Omega$ and $\widetilde{\Omega}$. Further,
 let us denote by $\widetilde{\alpha}(i)$ the number of tokens in $\{1,\ldots, i-1\}$, which are placed on node $u$ in
 $\widetilde{\Omega}$. Then, the values $U_i~(i \in \mathcal{L})$, $\alpha(r+1)$, and $\widetilde{\alpha}(r+1)$ determine the placement of all tokens in $\mathcal{L}$ for the two probability spaces. By assumption on $\widetilde{\sigma}$, we have
 $ \alpha(r+1) \in \{ \widetilde{\alpha}(r+1)-1, \widetilde{\alpha}(r+1) \}$.

Now, by the three properties and the coupling described above, if for some $i \in \mathcal{L}$, $\alpha(i) = \widetilde{\alpha}(i)$, then $\alpha(i+1) = \widetilde{\alpha}(i+1)$, and all further tokens in $\mathcal{L}$ are placed in both probability spaces in the same way. Additionally,
 if for some $i \in \mathcal{L}$, $\alpha(i) = \widetilde{\alpha}(i) - 1$, then it follows that $\alpha(i+1) \in\{ \widetilde{\alpha}(i+1) - 1, \widetilde{\alpha}(i+1)\}$ by the third property of the threshold function. This means that
 every token $i \in \mathcal{L}$ that is placed on $u$ in $\widetilde{\Omega}$ will also  be placed on $u$ in $\Omega$.
%Hence, the total number of tokens in $ \mathcal{L}$ placed on $u$ in $\Omega$ is at least as large as the corresponding number in $\widetilde{\Omega}$.
Since $f$ is non-decreasing in each coordinate (by assumption), the coupling argument above establishes~\eq{toprove}.

By~\eq{toprove}, the vector $X=(X_1,\ldots,X_{\gamma})$ satisfies the negative regression property.
Then, for any $i\in\{1,\dots, \gamma\}$, define a random variable $h(X_i)$ as follows:
$h(X_i)=\M^{[t+1,t_2]}_{u(j),D}$ if $X_i=1$, and $h(X_i) = \M^{[t+1,t_2]}_{v(j),D}$ if $X_i=0$. By the choice of
$u$ and $v$, we know that the function $\prod_{i \in S_j} h(X_i)$ is non-decreasing in every coordinate $X_i$. Hence,
\begin{align*}
 \Ex{ \prod_{i \in S_{j}} z_i^{(t)} \, \bigg| \, x^{(t-1)}, \ldots, x^{(t_1)} } &= \Ex{ \prod_{i \in S_j} h(X_i) \, \bigg| \, x^{(t-1)}, \ldots, x^{(t_1)}} \\ &\hspace{-0.62cm}\stackrel{\text{\lemref{lemmaA8}}}{\leq} \prod_{i \in S_j} \Ex{ h(X_i) \, \big| \, x^{(t-1)}, \ldots, x^{(t_1)}} \\
%\hspace{-0.6cm}
%\stackrel{\text{\lemref{rightprobability}}}{=}
&=
\prod_{i\in S_j} \left( \frac{1}{2} \M^{[t+1,t_2]}_{u(j),D} + \frac{1}{2} \M^{[t+1,t_2]}_{v(j),D}  \right).
\end{align*}
Since $\M_{u(j),D}^{[t,t_2]} = \frac{1}{2} \M_{u(j),D}^{[t+1,t_2]} + \frac{1}{2} \M_{v(j),D}^{[t+1,t_2]} = \M_{v(j),D}^{[t,t_2]}$, we arrive at
\[
\Ex{ \prod_{i\in S_j} z_i^{(t)} \, \bigg| \, x^{(t-1)}, \ldots, x^{(t_1)} }  \leq
\prod_{i\in S_j} \M^{[t,t_2]}_{w_i^{(t-1)},D}
=\prod_{i\in S_j} z_i^{(t-1)}.
\]
Applying this to \eq{correlation} for every $1 \leq j \leq \beta'$ implies that
\begin{align*}
 \Ex{ Z^{(t)} \, \big| \, x^{(t-1)}, \ldots, x^{(t_1)} } &= \prod_{j=1}^{\beta'} \Ex{ \prod_{i \in S_{j}} z_i^{(t)} \, \bigg| \, x^{(t-1)}, \ldots, x^{(t_1)} } \leq \prod_{j=1}^{\beta'} \prod_{i \in S_j}  z_i^{(t-1)}  = Z^{(t-1)},
\end{align*}
showing that $Z^{(t)}$ is indeed a supermartingale.
This establishes \eq{corr} and finishes the proof of the lemma.
\end{proof}

Combining \lemref{randomwalklemma} and \lemref{panconesi}, we  directly obtain the following Chernoff bound:
\begin{lemma}\label{lem:chernofftoken}
Consider an arbitrary sequence of matchings $\mathcal{M}=\langle \M^{(1)},\M^{(2)},\ldots \rangle$.
Fix any non-negative load vector $x^{(t_1)}$ at the end of round $t_1\geq 0$, and let ${\cal T}$ be the set of all tokens. Let $D$ be any subset of nodes and $t_2 > t_1$. Then, for the random variable
\[
Z:=  \sum_{i \in {\cal T}} \chi_{w_i^{(t_2)} \in D} = \sum_{u \in D} x_u^{(t_2)},
\]
it holds for any $\delta > 0$ that
\begin{align*}
  \Pro{	Z \geq (1+\delta) \Ex{Z} } &\leq \left(   \frac{\ce^{\delta}}{(1+\delta)^{1+\delta}}   \right)^{\Ex{Z}}.
\end{align*}
\end{lemma}
The strength of \lemref{chernofftoken} is that the sum of loads is analyzed by means of a sum of indicator random variables over all tokens instead of a sum of rounding errors as in \lemref{bounddifference}, for instance. For an illustration of the power of \lemref{chernofftoken}, we consider the following toy example.

\begin{corollary}\label{cor:verysparse}
Consider the random matching model or balancing circuit model.
Let $x^{(0)}$ be any non-negative load vector with $\| x^{(0)} \|_{1} \leq n^{1-\eps}$, where $0 < \eps < 1$ is a constant. Then,  the discrepancy after $\tau_{\cont}(1,n^{-1})$ rounds is at most $5/\eps$ with probability at least $ 1-2 n^{-1}$.
\end{corollary}

We can think of the allocation of the $\| x^{(0)} \|_{1}$ tokens in terms of the popular balls-and-bins model \cite{MU05}. If we run our randomized protocol for sufficiently many rounds, say $\tau_{\cont}(1,n^{-1})$ rounds, then every token (corresponding to a ball) is located at any node (corresponding to a bin) with almost the same probability. While in the standard balls-and-bins model, the allocation of different balls are mutually independent, \lemref{randomwalklemma} established  that these allocations are negatively correlated in our model. Therefore we obtain a constant maximum load if the number of tokens is bounded by $n^{1-\epsilon}$, which is a well-known fact for the balls-and-bins model.
We now give the proof of \corref{verysparse}.

%Hence in analogy to the famous balls-and-bins model , where $m$ balls (corresponding to $m$ tokens) are thrown randomly into $n$ bins,
%the maximum load is constant if $m \leq n^{1-\eps}$.

\begin{proof}%[Proof of \corref{verysparse}]
Fix any node $u \in V$. By definition, it
holds for  $t := \tau_{\cont}(1,n^{-1})$ that the time-interval $[0,t]$
is $(1,n^{-1})$--smoothing with probability at least $1-n^{-1}$ (noticing that this probability is even $1$ for the balancing circuit model). Define $Z:=\sum_{i \in \cal{T}} \chi_{w_i^{(t)} = u}$ as the number of tokens located on node $u$ at the end of round $t$.
Assuming that $[0,t]$ is $(1,n^{-1})$--smoothing, the second statement of \lemref{boundbymixinglemma} implies that \[
\Pro{ w_i^{(t)} = u } = \M_{w_i^{(0)},u}^{[1,t]} \leq 2/n\]
 for any token $i$.
Hence,
\[
\Ex{Z} \leq \| x^{(0)} \|_{1} \cdot 2/n \leq 2 n^{-\epsilon} < 1.
\]
 By applying \lemref{chernofftoken} with $\delta = \frac{4}{\epsilon} \cdot \frac{1}{\Ex{Z}}$, we obtain that
\begin{align*}
 \Pro{ Z \geq \frac{5}{\epsilon} } &\leq \left( \frac{\mathrm{e}}{\delta} \right)^{\delta \cdot \Ex{Z}} \leq \left( \epsilon \Ex{Z}  \right)^{ 4/\epsilon   } \leq n^{-2}.
\end{align*}
Taking the union bound over all nodes yields the claim.
\end{proof}

The next lemma provides a concrete tail bound, which is not only exponentially small in the deviation from the mean, but also exponentially small in the ``sparseness'' of the load vector. By contrast, previous analyses expressing the load as a sum of rounding errors \cite{HT06,MS10,FS09,BCFFS11,RSW98} yield weaker tail bounds for sparse load vectors (cmp.~\lemref{bounddifference}).
Another advantage of \lemref{sparsiter} is that
it gives a tail bound for an {\em arbitrary convex combination} of the load vector.

%\NOTE{Thomas}{I slightly rephrased the statement. Shouldn't we better replace all occurrences of $\delta$ by $\gamma$ (in the statement and proof)?. Then we could replace $\beta$ by $\delta$ and the notation is more consistent.}\NOTE{Thomas}{Maybe it is ok, because the role of $\delta$ in this lemma in the similar to the role of $\delta$ in a Chernoff bound.}
\begin{lemma}\label{lem:sparsiter}
Consider an arbitrary sequence of matchings $\mathcal{M}=\langle \M^{(1)},\M^{(2)},\ldots \rangle$ so that $[t_1,t_2]$ is $(1,n^{-1})$--smoothing for a pair of rounds $t_2 \geq t_1$.
Fix any non-negative load vector $x^{(t_1)}$  with $\|x^{(t_1)}\|_{1} \leq n \cdot \ce^{-(\log n)^{\sigma}}$ for some constant $\sigma \in (0,1)$.
Moreover,  let
\[
Z:= \sum_{v \in V} y_v x_v^{(t_2)} ,
\]
 where $y$ is any non-negative vector with $\|y \|_{1} = 1$. Then, it holds for any $\delta > 0$ that
\begin{align*}
 \Pro{ Z \geq \ce^{-\frac{1}{5} (\log n)^{\sigma}} + 8 \|  y \|_{\infty} \cdot (\log n)^{\delta} } &\leq \ce^{-\frac{1}{3} (\log n)^{\delta+\sigma}}.
\end{align*}
\end{lemma}
\begin{proof}
%Note that
%\begin{align*}
% \Ex{Z} &= \sum_{v \in V} \Ex{ x_v^{(t)} } \cdot  \M_{u,v}^{[t_1,t_2]} \\
%        &\leq \sum_{v \in V} 2 \cdot \frac{\| x^{(0)} \|_{1}}{n} \cdot \M_{u,v}^{[t_1,t_2]} \\
%        &\leq 2\cdot \frac{\| x^{(0)} \|_{1}}{n}.
%\end{align*}
Let $\alpha:= \left\| y \right\|_{\infty}$. Partition $V$ into at most $\lceil 2 \log_2 n \rceil + 1$ groups, defined as follows:
\begin{align*}
 S_j &:= \left\{ v \in V \colon 2^{-j-1} < y_v \leq 2^{-j}  \right\}, \quad \lfloor \log_2 (1/\alpha) \rfloor  \leq j < \lceil 2 \log_2 n \rceil, \\
 S_{\lceil 2 \log_2 n \rceil} &:= \left\{ v \in V \colon  y_v \leq 2^{-\lceil 2 \log_2 n \rceil}  \right\}.
\end{align*}
Clearly, $|S_j| \leq 2^{j+1}$ for every $j$, which also holds for $S_{\lceil 2 \log_2 n \rceil}$ since $|S_{\lceil 2 \log_2 n \rceil}| \leq |V|=n$. In order to obtain a bound on $Z$, we will upper bound the following approximation of $Z$:
\begin{align*}
  \tilde{Z} &:= \sum_{j=\lfloor \log_2 (1/\alpha) \rfloor}^{\lceil2 \log_2 n\rceil} \sum_{v \in S_j} x_{v}^{(t_2)} \cdot 2^{-j}.
\end{align*}
By definition, $Z \leq \tilde{Z}$.
We now apply our new relation between the movements of tokens and independent random walks in order to upper bound $\tilde{Z}$. We do this by considering the contribution from each $S_j$ individually. Since $[t_1,t_2]$ is $(1,n^{-1})$--smoothing, the second statement of \lemref{boundbymixinglemma} implies that every token is located at any node in round $t_2$ with probability at most $2/n$; thus, every token is located at a node in $S_j$ with probability at most $2|S_j|/n$.

Suppose first that $|S_j| \geq \ce^{ \frac{1}{2} (\log n)^{\sigma}}$.
By \lemref{randomwalklemma}, the probability that we have more than $4 |S_j| \cdot \ce^{-\frac{1}{4} (\log n)^{\sigma}}$ tokens on nodes in $S_j$ in round $t_2$ is upper bounded by
\[
  \binom{ n \cdot \ce^{-(\log n)^{\sigma}}}{ 4 |S_j| \cdot \ce^{-\frac{1}{4} (\log n)^{\sigma}}} \cdot \left(  \frac{2 |S_j|}{n}  \right)^{ 4 |S_j| \cdot \ce^{-\frac{1}{4} (\log n)^{\sigma}}} \leq \left( \frac{ \ce \cdot \ce^{- \frac{3}{4} (\log n)^{\sigma}}}{2}		 \right)^{4 |S_j| \cdot \ce^{-\frac{1}{4} (\log n)^{\sigma}}} = n^{-\omega(1)},
\]
since for any two integers $1 \leq y \leq x$, $\binom{x}{y} \leq \big( \frac{\ce x}{y} \big)^{y}$ (the same bound on the probability holds trivially when $ 4 |S_j| \cdot \ce^{-\frac{1}{4} (\log n)^{\sigma}}$ is larger than $n \cdot \ce^{-(\log n)^{\sigma}}$). Next, assume that $|S_j| \leq \ce^{\frac{1}{2} (\log n)^{\sigma}}$.
 For every token $i \in \{1,\ldots, \| x^{(t_1)} \|_{1} \}$,
 define $X_{i,j} = \chi_{w_i^{(t_2)} \in S_j}$, that is,
  $X_{i,j} = 1$, if token $i$ is located at a node in $S_j$ at the end of round $t_2$, and $X_{i,j} = 0$, otherwise. Let $X_j :=\sum_{i=1}^{\|x^{(t_1)} \|_1} X_{i,j}$. Then,
\begin{align*}
   \Ex{ X_j } &\leq \left\| x^{(t_1)} \right\|_{1} \cdot \frac{2|S_j|}{n} \leq  \ce^{-(\log n)^{\sigma}} \cdot 2 \ce^{\frac{1}{2} (\log n)^{\sigma}} = 2 \ce^{-\frac{1}{2} (\log n)^{\sigma}}.
\end{align*}
Using the Chernoff bound (\lemref{chernofftoken}),
\begin{align*}
 \Pro{ X_j \geq (1 + \beta) \Ex{X_j} } &\leq \left( \frac{\mathrm{e}}{\beta} \right)^{\beta \cdot \Ex{X_j}}.
\end{align*}
Choosing
$\beta := (\log n)^{ \delta} / \Ex{X_j}$, we conclude that
\begin{align*}
 \Pro{ X_j \geq 2\cdot (\log n)^{\delta} } &\leq \left( \Ex{X_j} \right)^{ (\log n)^{\delta}  } \leq \left(2 \ce^{-\frac{1}{2} (\log n)^{\sigma}} \right)^{ (\log n)^{\delta} }.
\end{align*}

By the union bound over at most $\lceil 2 \log_2 n \rceil + 1$ groups, we conclude that with probability at least
\[
1- (\lceil 2 \log_2 n \rceil + 1) \cdot \max\left\{n^{-\omega(1)}, \left(2 \ce^{-\frac{1}{2} (\log n)^{\sigma}} \right)^{ (\log n)^{\delta} } \right\} \geq 1 - \ce^{-\frac{1}{3}(\log n)^{\delta+\sigma}} ,
\]
\begin{align*}
 Z \leq \sum_{j=\lfloor \log_2 (1/\alpha) \rfloor}^{\lceil 2 \log_2 n \rceil}
 \sum_{v\in S_j} x_v^{(t_2)} \cdot 2^{-j} &= \sum_{j=\lfloor \log_2 (1/\alpha) \rfloor}^{\lceil 2 \log_2 n \rceil} X_j \cdot 2^{-j} \\ &\leq \sum_{j=\lfloor \log_2 (1/\alpha) \rfloor}^{\lceil 2 \log_2 n \rceil} 2^{-j} \cdot \left( \frac{4| S_j|}{ \ce^{\frac{1}{4} (\log n)^{\sigma}} } + 2\cdot (\log n)^{\delta} \right) \\ &\leq
 \sum_{j=\lfloor \log_2 (1/\alpha) \rfloor}^{\lceil 2 \log_2 n \rceil} 2^{-j} \cdot  \frac{4 \cdot 2^{j+1} }{  \ce^{\frac{1}{4} (\log n)^{\sigma}} } + \sum_{j=\lfloor \log_2 (1/\alpha) \rfloor}^{\lceil 2 \log_2 n \rceil} 2^{-j+1} \cdot (\log n)^{\delta} \\
  &\leq \frac{8\cdot \lceil 2 \log_2 n \rceil}{ \ce^{\frac{1}{4} (\log n)^{\sigma}} } + 8 \alpha \cdot(\log n)^{\delta}\\
  &\leq \ce^{-\frac{1}{5} (\log n)^{\sigma}}  + 8\alpha \cdot(\log n)^{\delta}.%\qedhere
  \end{align*}
%Using the Chernoff bound,
%\begin{align*}
% \Pro{ X \geq (1+\delta) \Ex{X} } &\leq \delta^{-\delta \Ex{X} / 2},
%\end{align*}
%where we choose $\delta = (\log n)^{c} / \Ex{X}$ to conclude that
%\begin{align*}
%  \Pro{ X \geq 2 (\log n)^{c} } &\leq \Ex{X}^{-\frac{1}{2} (\log n)^{c} } \leq \left( e^{-\frac{1}{2} (\log n)^{c}} \right)^{-\frac{1}{2} (\log n)^{c} } = e^{ -\frac{1}{4} (\log n)^{2c}}.
%\end{align*}
\end{proof}

\subsection{Bounding the Discrepancy in Arbitrary Graphs}\label{sec:randomwalkresults}

Throughout this subsection,
we consider the random matching or
balancing circuit model. We assume, without loss of generality, that  $x^{(0)}\in\mathbb{Z}^{n}$ is any initial load vector with  $\overline{x}\in[0,1)$ (cf.~Observation~\ref{obs:scaling} for a justification).
%
%\begin{thm}[Reducing discrepancy]\label{thm:potentialgeneral}
%For any graph $G=(V,E)$ and any arbitrarily small constant $c>0$, after $\Oh( \log (Kn) / (1-\lambda) )$ rounds,
%the discrepancy is at most $\Oh( (\log n)^{c})$ with probability at least $1-\ce^{(\log n)^{-\Omega(1)}}$.
%\end{thm}
%\begin{proof}%
%
%
%Let $\epsilon = c/2$. The load-balancing process will consist of $\lceil 1/\epsilon \rceil$ phases of length $c_1\cdot \frac{ \log n}{1-\lambda}$ for a constant $c_1$. Roughly speaking,
%we prove that at the end of the $i$-th phase the number of tokens which are above the threshold $2i \cdot \lceil (\log n)^{\epsilon} \rceil$ is upper bounded by
%$6n \cdot \mathrm{e}^{- \frac{1}{2}(\log n)^{i \epsilon}/8 }$. Hence for an $i$ being slightly larger than $\lceil 1/ \epsilon \rceil$, there are no token above the threshold $2i \cdot (\log n)^{\epsilon}$ establishing the theorem.
Let us fix any value $\epsilon > 0$, not necessarily constant. Then, define the following set of vectors for any $\ell \geq 1$:
\[
  \mathcal{E}_{\ell} := \left\{ x \in \mathbb{Z}^n \colon  \sum_{u \in V} \max \left\{ x_u  - 8 \ell \cdot \lceil (\log n)^{\eps} \rceil - \ell, 0 \right\}    \leq 4 n \cdot \ce^{-\frac{1}{4} \cdot (\log n)^{\ell \epsilon} }       \right\}.
\]
Roughly speaking, $\mathcal{E}_{\ell}$ includes all load vectors
in which the number of tokens exceeding  $8 \ell \cdot \lceil (\log n)^{\eps} \rceil+\ell$ for every node is not too large. In particular,
for any load vector $x\in\mathcal{E}_{\ell}$, $\ell \geq \lceil 2/\eps \rceil$,  the maximum load of $x$ is at most $8 \ell \cdot \lceil (\log n)^{\eps} \rceil + \ell$.

The next lemma shows that if we start with a load vector in $\mathcal{E}_{\ell-1}$, then the load vector after $\tau_{\cont}(1,n^{-2})$
rounds will be in $\mathcal{E}_{\ell}$ with high probability.

\begin{lemma}\label{lem:reducinglemma}
For any integer $\ell \geq 2$, round $t \in \mathbb{N}$, $\epsilon \geq 16/(\log \log n)$ and any vector $x \in \mathcal{E}_{\ell-1}$, it holds that
\begin{align*}
  \Pro{  x^{(t + \kappa) } \in   \mathcal{E}_{\ell}    \, \mid \,  x^{(t)} = x }  &\geq  1 - \ce^{- \frac{1}{4} (\log n)^{\ell \epsilon}} - n^{-1},
\end{align*}
where $\kappa:=\tau_{\cont}(1,n^{-2})$. In particular, $\Pro{ x^{(\kappa)} \in \mathcal{E}_{1} } \geq 1 - \ce^{- \frac{1}{4} (\log n)^{\epsilon} } - 3 n^{-1}$, if $\kappa := \tau_{\cont}(K,1/(2n))$.
\end{lemma}
Let us briefly describe the key steps in the proof of \lemref{reducinglemma}.
The proof that $x^{(\kappa)}\in\mathcal{E}_1$ makes use of the concentration inequality for the sum of rounding errors (\lemref{bounddifference}).
Based on \lemref{chernofftoken}, we prove that, starting with a load vector in $\mathcal{E}_{\ell-1}$, we obtain a load vector which is in $\mathcal{E}_{\ell}$ after $\kappa$ additional rounds.
\begin{proof}%[Proof of \lemref{reducinglemma}]
%Let $t_1:=\tau_{cont}(K) = \Oh\left( \log (Kn ) / (1-\lambda) \right)$ be the number of steps to reduce the discrepancy from $K$ to $n^2$ and for any integer $\ell \geq 2$ let
%\[
%t_\ell:=t_1+(\ell-1)\cdot c_1\cdot\frac{\log n}{1-\lambda} .
%\]
Recall that we assume here that $\overline{x} \in [0,1)$.
Let us first consider the event $x^{(\kappa)}\in\mathcal{E}_1$. Consider the following potential function in round $\kappa$:
\[
\Phi^{(\kappa)} := \sum_{u\in V} \exp\left( \left(x_u^{(\kappa)} -\overline{x} \right)^2 \Big\slash 16\right).
\]
Since $\kappa = \tau_{\cont}(K,1/(2n))$, it follows that with probability at least $1-n^{-1}$, the time-interval $[0,\kappa]$ is $(K,1/(2n))$--smoothing, which we will condition on in the remainder of the proof (Noticing  that this probability is $1$ in the balancing circuit model).
By the second statement of \lemref{bounddifference}, it holds for any node $u\in V$ and any $\delta > 1/n$ that
\begin{equation}
  \Pro{ \left|x_u^{(\kappa)} - \overline{x}\right| \geq \delta } \leq
  2 \cdot \exp\left( - \left(\delta -\frac{1}{2n} \right)^2 \bigg\slash 4 \right),\label{eq:generalgraphpr}
 \end{equation}
and therefore,
\begin{align*}
 \Ex{ \Phi^{(\kappa)} } &\leq n \cdot \max_{u\in V} \Ex{ \left\lceil \exp\left( \left(x_u^{(\kappa)} -\overline{x} \right)^2 \Big\slash 16\right) \right\rceil } \\
    &= n \cdot \max_{u \in V} \sum_{k=1}^{\infty} \Pro{ \left\lceil \exp\left( \left(x_u^{(\kappa)}  -\overline{x} \right)^2 \Big\slash 16 \right) \right\rceil  \geq k } \\
  &\leq n \cdot \max_{u \in V} \left(3+ \sum_{k=4}^{\infty} \Pro{ \left( x_u^{(\kappa)} - \overline{x} \right)^2 \Big\slash 16 \geq \log (k-1) } \right) \\
  &= n \cdot \max_{u \in V} \left( 3+ \sum_{k=3}^{\infty} \Pro{ \left| x_u^{(\kappa)} - \overline{x} \right| \geq 4 \sqrt{\log k} } \right) \ .
    \end{align*}
Combining this with \eq{generalgraphpr}, we get
\begin{align*}
  \Ex{ \Phi^{(\kappa)} } &\leq n \cdot \max_{u \in V} \left( 3+ \sum_{k=3}^{\infty} 2\cdot \exp\left( -
  \left(4\sqrt{\log k} - \frac{1}{2n}\right)^2\Big\slash 4 \right) \right) \\
  & \leq n\cdot \left( 3 + \sum_{k=3}^{\infty} 2\cdot \mathrm{e}^{-3\log k} \right) \leq 4n,
\end{align*}
where in the last inequality, we used the fact that $\sum_{k=3}^{\infty} 2 k^{-3} \leq 1$.
Hence, by Markov's inequality,
\begin{equation}\label{eq:markovphit1}
 \Pro{ \Phi^{(\kappa)} \geq 4n \cdot \mathrm{e}^{ \frac{1}{4} \cdot  (\log n)^{\eps}   } } \leq
 \Pro{ \Phi^{(\kappa)} \geq \mathrm{e}^{ \frac{1}{4} \cdot  (\log n)^{\eps}} \cdot \Ex{ \Phi^{(\kappa)} }   }
 \leq \mathrm{e}^{-\frac{1}{4} \cdot (\log n)^{\eps}  }\ .
\end{equation}
Furthermore, recall that by the first statement of \thmref{deviationmatching}, the maximum load at the end of round $\kappa$ is upper bounded by $\sqrt{8 \log n} + 2$ with probability at least $1-2 n^{-1}$.
If we assume that both $\Phi^{(\kappa)} \leq 4n \cdot \mathrm{e}^{\frac{1}{4} \cdot  (\log n)^{\eps}  }$ and $x^{(\kappa)}_{\max} \leq \sqrt{8 \log n} + 2$ hold,
%For any node $u\in V$, define
%\[\widetilde{x}^{(\kappa)} : = \max\left\{  x_u^{(\kappa)} - 8 \cdot \lceil (\log n)^{\eps} \rceil-1, 0 \right\}.
%\]
then
%the total number of tokens above the threshold $8 \cdot \lceil (\log n)^{\eps} \rceil +1$ in round $\kappa$
%  is upper bounded by
\begin{equation}\label{eq:boundtoken}
  %\sum_{u\in V}\widetilde{x}^{(\kappa)}_u =
  \sum_{u \in V} \max\left\{  x_u^{(\kappa)} - 8 \cdot \lceil (\log n)^{\eps} \rceil-1, 0 \right\} \leq   \frac{4 n \cdot \mathrm{e}^{\frac{1}{4} \cdot (\log n)^{\eps} } }{ \mathrm{e}^{4 \cdot (\log n)^{2\eps}}} \cdot \left( \sqrt{8 \log n} + 2 \right) \leq 4n \cdot \mathrm{e}^{- (\log n)^{\eps} },
\end{equation}
where the first inequality  is due to
the fact that every node with load more than $8\lceil (\log n)^{\varepsilon}\rceil+1$ contributes at least
\[
\exp\left(  \left( 8\cdot\lceil (\log n)^{\varepsilon}\rceil \right)^2 \big\slash 16 \right) \geq
\ce^{ 4 \cdot (\log n)^{2\eps}}
\]
to $\Phi^{(\kappa)}$.
Combining \eq{markovphit1} and \eq{boundtoken} yields
\begin{align*}
  \Pro{   \sum_{u \in V} \max \left\{ x_u^{(\kappa)} - 8 \cdot \lceil (\log n)^{\eps} \rceil - 1, 0 \right\}    \geq 4 n \cdot \ce^{- (\log n)^{\epsilon} }         } &\leq \ce^{-\frac{1}{4}\cdot (\log n)^{\epsilon}  } + 2n^{-1} + n^{-1},
\end{align*}
where the final term $n^{-1}$ corresponds to the event that $[0,\kappa]$ is $(K,1/(2n))$-smoothing. This implies that
\[
  \Pro{ x^{(\kappa)} \in \mathcal{E}_{1} }  \geq  1 - \ce^{- \frac{1}{4} \cdot (\log n)^{\epsilon }  } - 3 n^{-1},
\]
completing the proof of the base case.

For the statement with $\ell \geq 2$, we consider the probability space conditioned on $x^{(t)}= x\in\mathcal{E}_{\ell-1}$.
To simplify the notation, we will omit this condition in the following probabilities and expectations. In order to analyze the load vector $x^{(s)}$, $s \geq t$, we consider an auxiliary load vector $\widetilde{x}^{(s)}$, $s \geq t$, which is initialized in round $t$ by
\[
\widetilde{x}^{(t)}_u:=\max\left\{ x^{(t)}_u- 8 (\ell-1) \cdot \lceil (\log n)^{\varepsilon} \rceil - (\ell-1) ,0\right\} \mbox{~~ for every $u \in V$.}
\]
For any $s > t$, $\widetilde{x}^{(s)}$ is the load vector obtained by the execution of the discrete load balancing protocol starting with $\widetilde{x}^{(t)}$ in round $t$  which uses in every round $s > t$ the same matchings and orientations as for the load vector $x^{(s)}$.
We define another set of vectors $\tilde{\mathcal{E}_{\ell}}$ by
\[
 \tilde{\mathcal{E}_{\ell}} := \left\{ \widetilde{x} \in \mathbb{N}^{n} \colon \sum_{u\in V} \max\left\{ \widetilde{x}_u - 8 \cdot \lceil (\log n)^{\eps} \rceil -1, 0 \right\} \leq
  4 n \cdot \ce^{ - \frac{1}{4} (\log n)^{\ell \cdot \epsilon}} \right\}.
\]
We shall prove that
\begin{align}
  \Pro{ \widetilde{x}^{(t+\kappa)} \in \tilde{\mathcal{E}_{\ell}}  } \geq 1 - \mathrm{e}^{-\frac{3}{4}(\log n)^{\ell \cdot \epsilon}} - n^{-1} \label{eq:sparsecrucial}.
\end{align}
Before proving \eq{sparsecrucial}, we show
that $ \widetilde{x}^{(t+\kappa)} \in \tilde{\mathcal{E}_{\ell}} $
 implies $x^{(t+\kappa)} \in \mathcal{E}_{\ell}$. Define another load vector $\widehat{x}^{(s)}$, $s \geq t$, which is initialized in round $t$ by
\[
 \widehat{x}_u^{(t)} := \max \{ x_u^{(t)}, 8 (\ell-1) \cdot \lceil (\log n)^{\varepsilon} \rceil - (\ell-1)  \} \mbox{~~ for every $u \in V$.}
\]
Note that
\begin{align}
 \widetilde{x}_u^{(t)} &= \widehat{x}_u^{(t)} - 8 (\ell-1) \cdot \lceil (\log n)^{\varepsilon} \rceil - (\ell-1). \label{eq:simplee}
\end{align}
Hence,
\begin{align*}
 \sum_{u\in V} \max\left\{ x_u^{(t+\kappa)} - 8 \ell \cdot \lceil (\log n)^{\eps} \rceil - \ell, 0 \right\} &\leq
  \sum_{u\in V} \max\left\{ \widehat{x}_u^{(t+\kappa)} - 8 \ell \cdot \lceil (\log n)^{\eps} \rceil - \ell, 0 \right\} \\
&= \sum_{u\in V} \max\left\{ \widetilde{x}_u^{(t+\kappa)} - 8 \cdot \lceil (\log n)^{\eps} \rceil -1, 0 \right\} \\ &\leq
  4 n \cdot \ce^{ - \frac{1}{4} (\log n)^{\ell \cdot \epsilon}},
\end{align*}
where the first line follows from the second statement of \obsref{scaling}, and the second line follows from \eq{simplee} and the first statement of \obsref{scaling}. Hence, we conclude that
\begin{align*}
 \Pro{  x^{(t+\kappa)} \in   \mathcal{E}_{\ell}  } &\geq 1 -  \ce^{ - \frac{3}{4} (\log n)^{\ell \cdot \epsilon}} - n^{-1} \geq 1 -
 \ce^{ - \frac{1}{4} (\log n)^{\ell \cdot \epsilon}},
\end{align*}
which finishes the proof once \eq{sparsecrucial} is established.

It remains to prove  \eq{sparsecrucial}. We focus on the non-negative load vector $\widetilde{x}^{(s)}$, $s \geq t$, for the remainder of the proof. Fix an arbitrary node $u \in V$.
Let $Z_i = \chi_{w_i^{(t+\kappa)}=u}$ be the 0/1-indicator random variable for every token $i$, which is one if and only if token $i$ reaches node $u$ at the end of round $t+\kappa$.
Let $\beta:=\|\widetilde{x}^{(t)}\|_1$ be the number of tokens,  and  $Z:=\sum_{i=1}^{\beta} Z_i$. Clearly, $\widetilde{x}_{u}^{(t+\kappa)} = Z$. Further, for every token $i$, we have by \lemref{rightprobability} that
\[
  \Pro{ Z_i = 1} = \M_{w_i^{(t)},u}^{[t+1,t+\kappa]}.
\]
Since $\kappa = \tau_{\cont}(1,n^{-2})$, the time-interval $[t,t+\kappa]$ is $(1,n^{-2})$--smoothing with probability at least $1-n^{-1}$, which we will assume in the following. Hence, \lemref{boundbymixinglemma} yields for every pair of nodes $v,u \in V$,
\[
  \M_{v,u}^{[t+1,t+\kappa ]} \leq \frac{1}{n} + \frac{1}{n^2}.
\]

By definition of the set $\mathcal{E}_{\ell-1}$, we have that
\[
   \beta = \left\| \widetilde{x}^{(t)}  \right\|_{1} = \sum_{u \in V} \max \left\{ x_u^{(t)} - 8(\ell-1) \cdot \lceil (\log n)^{\eps} \rceil - (\ell-1), 0 \right\}    \leq 4 n \cdot \ce^{- \frac{1}{4}  (\log n)^{(\ell-1)\cdot\eps }}.
\]
Hence,
\begin{align}
 \Ex{Z}  \leq  \sum_{i=1}^{\beta} \left( \frac{1}{n} + \frac{1}{n^2} \right) &\leq 5\ce^{ - \frac{1}{4}  (\log n)^{(\ell-1)\cdot\eps} } \label{eq:sixteen}.
\end{align}
By \lemref{chernofftoken}, we can upper bound $Z$ as follows:
For any $\delta >0$,
\begin{align}
 \Pro{ Z \geq (1+\delta) \Ex{Z} } &\leq \left(\frac{\ce }{ \delta}\right)^{\delta\cdot \Ex{Z}}. \label{eq:boundZ}
\end{align}
Since $\Ex{Z} \leq 1$,
\begin{align}
  \Pro{ \widetilde{x}_{u}^{(t+\kappa)} \geq \delta \, \Ex{Z} + 1} & = \Pro{ Z \geq  \delta \, \Ex{Z} + 1} \leq \Pro{Z\geq (1+\delta) \Ex{Z} } \notag.
   \end{align}
Choosing $\delta = \delta(\alpha) =  \frac{1}{\Ex{Z}} \cdot ( 8 \cdot (\log n)^{\eps} + \alpha)$ (for any integer $\alpha \geq 0$) in \eq{boundZ}, we obtain
\begin{align}
     \Pro{ \widetilde{x}_{u}^{(t+\kappa)} \geq (8 \cdot (\log n)^{\eps} + \alpha ) + 1 }  &\leq \left(\frac{\ce}{\delta}\right)^{  8 \cdot (\log n)^{\eps} + \alpha         } \notag \\
      &\leq \left(  \frac{\Ex{Z}}{5}       \right)^{ 8 \cdot (\log n)^{\eps} + \alpha           }   \notag \\
       &\hspace{-1em}\stackrel{\text{by~(\ref{eq:sixteen})}}{\leq} \exp
      \left( - \frac{1}{4} \cdot (\log n)^{(\ell-1) \cdot \eps} \cdot \left( 8 \cdot (\log n)^{\eps} + \alpha  \right)\right), \label{eq:probupperbound}
\end{align}
where the second inequality follows by the assumption that $\epsilon\geq 16 /(\log\log n)$.  Our goal is now to bound the number of tokens in the load vector $\tilde{x}^{(t + \kappa )}$ that are
above the threshold $8 \cdot \lceil (\log n)^{\eps} \rceil +1$. To this end,
define the potential function $\Lambda^{(t+\kappa)}$ with respect to load vector $\widetilde{x}^{(t+k)}$ by
\[
\Lambda^{(t+\kappa)}:=\sum_{u\in V} \max\left\{ \widetilde{x}^{(t+\kappa)}_u - 8  \cdot \lceil (\log n)^{\eps} \rceil  -1 , 0\right\}.
\]
Then, we can upper bound the expectation of $\Lambda^{(t+\kappa)}$ as follows:
\begin{align*}
\Ex{ \Lambda^{(t+\kappa)} }
  &=\sum_{u \in V} \sum_{\alpha=1}^{\infty} \Pro{ \max\left\{ \widetilde{x}^{(t+\kappa)}_u - 8  \cdot \lceil (\log n)^{\eps} \rceil  -1 , 0\right\} \geq \alpha          }  \\
  &=\sum_{u \in V} \sum_{\alpha=1}^{\infty} \Pro{ \widetilde{x}_u^{(t+\kappa)} \geq 8 \cdot \lceil (\log n)^{\eps} \rceil +1  + \alpha}  \\
 &\hspace{-1em}\stackrel{\text{by~(\ref{eq:probupperbound})}}{\leq} \sum_{u \in V} \sum_{\alpha = 1}^{\infty} \exp \left(- \frac{1}{4} \cdot (\log n)^{(\ell - 1) \cdot \epsilon } \cdot ( 8 \cdot (\log n)^{\epsilon} + \alpha )  \right)\enspace.
 \end{align*}
 Since $\frac{1}{4}\cdot (\log n)^{(\ell-1)\cdot \epsilon}\geq 1$ by assumption on $\epsilon$, we have
 \begin{align*}
 \Ex{ \Lambda^{(t+\kappa)} } &\leq n \cdot \exp \left( -  (\log n)^{(\ell-1) \cdot \epsilon  + \epsilon} \right) \cdot \sum_{\alpha=1}^{\infty}
 \ce^{ - \alpha}  \\
  &\leq n \cdot \ce^{-  (\log n)^{ \ell \cdot \epsilon} } \cdot \frac{1}{1 - \ce^{-1}} \leq 4n \cdot \ce^{ - (\log n)^{\ell \cdot \epsilon}}.
\end{align*}
By Markov's inequality, we have that
\begin{align*}
 \Pro{ \Lambda^{(t+\kappa)} \geq 4 n \cdot \ce^{ - \frac{1}{4} (\log n)^{\ell \cdot \epsilon} } } &\leq\ce^{- \frac{3}{4} (\log n)^{\ell \cdot \epsilon} }.
\end{align*}
Hence, by definition of $\tilde{\mathcal{E}}_{\ell}$, it holds that
\begin{align*}
 \Pro{ \widetilde{x}^{(t+\kappa)} \in \tilde{\mathcal{E}_{\ell}}  } \geq 1 - \mathrm{e}^{-\frac{3}{4}(\log n)^{\ell \cdot \epsilon}},
\end{align*}
under the assumption that $[t,t+\kappa]$ is $(1,n^{-2})$-smoothing, which holds with probability at least $1-n^{-1}$. This establishes \eq{sparsecrucial} and completes the proof of the theorem.
%Assuming that $\Lambda^{(t+\kappa)} \leq 4 n \cdot \ce^{ - \frac{1}{4} (\log n)^{\ell \cdot \epsilon } }$ occurs, it follows by the definition of $\Lambda^{(t+\kappa)}$ that
%\begin{align*}
%   \Lambda^{(t+\kappa)} &= \sum_{u\in V} \max\left\{ \widetilde{x}_u^{(t+\kappa)} - 8 \cdot \lceil (\log n)^{\eps} \rceil -1, 0 \right\} \leq
%  4 n \cdot \ce^{ - \frac{1}{4} (\log n)^{\ell \cdot \epsilon}},
%\end{align*}
%which establishes \eq{sparsecrucial} and therefore also the proof of the theorem.
\end{proof}

Iterating \lemref{reducinglemma} reveals an interesting tradeoff.
First, we obtain a discrepancy of $\Oh( (\log n)^{\eps})$ for an arbitrarily small constant $\epsilon > 0$ by increasing the runtime $\tau_{\cont}(K,n^{-2})$ only by a constant factor. Furthermore, choosing $\epsilon = \Theta(1/(\log \log n))$ and $\ell$ appropriately, we obtain a discrepancy of $\Oh( \log \log n)$ by increasing the runtime by a factor of $\Oh(\log \log n)$.
\begin{theorem}\label{thm:logg}
Let $G$ be any graph, and consider the random matching or balancing circuit model. The following statements hold:
%\vspace{-2pt}
\begin{itemize}
 \item Let $\epsilon > 0$ be an arbitrarily small constant.
Then,
after $\Oh(\tau_{\cont}(K,n^{-2}))$ rounds, the discrepancy is  $\Oh( (\log n)^{\eps} )$ with probability at least $1-\ce^{-\frac{1}{5} (\log n)^{\eps}}$.
\item After $\Oh( \tau_{\cont}(K,n^{-2}) \cdot \log \log n)$ rounds, the discrepancy is  $\Oh(\log \log n)$ with probability at least $1-\frac{1}{\log n}$.
\end{itemize}
\end{theorem}

\begin{proof}
We start with the proof of the first statement.
By~\lemref{reducinglemma} with \[\kappa:=\tau_{\cont}(K,n^{-2})\geq
\max\{\tau_{\cont}(K,1/(2n)),\tau_{\cont}(1,n^{-2}) \},\]
 for any vector $x \in \mathcal{E}_{\ell-1}, \ell \geq 2$ and any round $t \in \N$, it holds that
\begin{align}
 \Pro{ x^{(t+\kappa)} \in \mathcal{E}_{\ell} \, \mid \, x^{(t)}=x } &\geq 1 - \ce^{- \frac{1}{4} (\log n)^{\ell \cdot \epsilon }} - 3 n^{-1} \notag \\ &\geq 1 - \ce^{ - \frac{1}{4} (\log n)^{\epsilon }} - 3 n^{-1} =: p, \label{eq:probsmall}
\end{align}
and the same lower bound also holds for $\Pro{ x^{(\kappa)} \in \mathcal{E}_{1} }$.
Our goal is to show that $x^{(\ell \cdot \kappa)}$ is in $\mathcal{E}_{\ell}$, where $\ell := \lceil \frac{2}{\epsilon} \rceil$. Applying \eq{probsmall} $\ell$-times and the union bound,
\begin{align*}
  \Pro{ x^{(\ell \cdot \kappa)} \in \mathcal{E}_{\ell} } \geq 1 - \ell \cdot \left( \ce^{ - \frac{1}{4} (\log n)^{\epsilon} } + 3 n^{-1} \right) \geq
   1 - \frac{1}{2} \cdot \ce^{ - \frac{1}{5} (\log n)^{\epsilon}},
\end{align*}
where the second inequality holds since $\eps$ and $\ell$ are constants.
If the load vector $x^{(\ell \cdot \kappa)}$ is in $\mathcal{E}_{\ell}$, then
\begin{align*}
  \sum_{u \in V} \max \left\{ x_u^{(\ell \cdot \kappa)} - 8 \left\lceil \frac{2}{\epsilon} \right\rceil \cdot \lceil (\log n)^{\eps} \rceil - \left\lceil \frac{2}{\epsilon} \right\rceil , 0  \right\} &\leq 4 n \cdot \exp\left(-\frac{1}{4} \cdot (\log n)^{\lceil \frac{2}{\epsilon} \rceil\cdot \epsilon} \right) < 1,
\end{align*}
which implies that the maximum load in round $\ell \cdot \kappa$ is $\Oh( (\log n)^{\eps} )$. The corresponding lower bound on the minimum load follows by symmetry (\lemref{maxminrelation}). Hence with probability $1-2 \cdot \frac{1}{2} \ce^{-\frac{1}{5}(\log n)^{\eps}}= 1-\ce^{-\frac{1}{5}(\log n)^{\eps}}$,
the discrepancy in round $\ell \cdot \kappa$ is $\Oh( (\log n)^{\eps} )$, completing the proof of the first statement.

Let us now prove the second statement. First, observe that if
$x^{(t)} \in \mathcal{E}_{\ell}$ for some round $t$, then also $x^{(t+1)} \in \mathcal{E}_{\ell}$.
We now choose $\epsilon := \frac{16}{\log \log n}$, $\ell := \lceil \log \log n \rceil$ and bound the number of rounds required to reach a load vector  in $\mathcal{E}_{\ell}$.
We divide this time into phases, each of which being of length $\kappa :=\tau_{\cont}(K,n^{-2}) $. As the success probability $p$ in \eq{probsmall} is only a positive constant for our choice of $\eps$, we may have to repeat some of the phases. However, the number of repetitions $R$ before we reach a load vector in $\mathcal{E}_{\ell}$ is stochastically smaller than the sum of $\ell$ independent geometric random variables, each of which having success probability $p$. Hence, by \lemref{chernoffgeo}, with probability at least $1 - \frac{1}{2 \log n}$, it holds that $R=\Oh(\ell)$, i.e., the
 load vector is in $\mathcal{E}_{\ell}$ after at most $\Oh( \ell)$ repetitions of $\kappa$ many rounds. If this indeed happens, then
\begin{align*}
 \sum_{u \in V} \max \left\{ x_{u}^{(R \cdot \kappa)} - 8 \ell \cdot \lceil (\log n)^{\epsilon} \rceil - \ell, 0   \right\} \leq 4 n \cdot \ce^{-\frac{1}{4} (\log n)^{\ell \epsilon}}.
\end{align*}
Plugging in the values of $\epsilon$ and $\ell$ yields
\begin{align*}
  \lefteqn{\hspace{-1cm} \sum_{u \in V} \max \left\{ x_{u}^{(R \cdot \kappa)} - 8 \left\lceil \log \log n \right\rceil \cdot \left\lceil (\log n)^{\frac{16}{\log \log n}} \right\rceil - \lceil \log \log n \rceil, 0   \right\} } \\
  &\leq 4 n \cdot \exp\left(-\frac{1}{4} \cdot (\log n)^{\lceil \log \log n\rceil \cdot \frac{16}{\log \log n}} \right) < 1.
\end{align*}
From the last inequality, it follows directly that every nodes $u \in V$ satisfies
\[
   x_{u}^{(R \cdot \kappa)} \leq 8 \lceil \log \log n \rceil \cdot \lceil \ce^{16} \rceil + \lceil \log \log n \rceil.
\]
To get the corresponding lower bound for the minimum load, we use \lemref{maxminrelation}. Thus with probability at least $1-\frac{1}{\log n}$, the discrepancy in round $R \cdot \kappa$ is upper bounded by $\Oh( \log \log n)$. This finishes the proof of the second statement and the proof of the theorem.
\end{proof}

\section{Proof of the Main Theorem (Theorem 1.1)}\label{sec:main}

The first part of this section gives a sketch of the proof of \thmref{main}. For the ease of the analysis, we remove the same number of tokens from every node such that the resulting load vector $x$ satisfies $\overline{x}\in[0,1)$ (cf.~\obsref{scaling}).
As illustrated in \figref{proof}, our proof consists of the following three main steps:
\begin{enumerate}\itemsep 0pt
\item \textbf{Reducing the Discrepancy to $(\log n)^{\epsilon_d}$.} We first use \thmref{logg} from \secref{randomwalk} to show that in round $t_1:=\Oh\left(\tau_{\cont}(K,n^{-2})\right)=\Oh\big(\frac{\log (Kn)}{1-\lambda}\big)$, the discrepancy is at most $(\log n)^{\epsilon_d}$, where $\epsilon_d > 0$ is a sufficiently small constant.
\item \textbf{Sparsification of the Load Vector.}
Since our goal is to achieve a constant discrepancy, we fix a constant $C>0$ and only consider nodes with more than $C$ tokens. We prove in \thmref{firstsparsification} that the number of tokens above the threshold $C$ on these nodes is at most $n \cdot \ce^{-(\log n)^{1-\eps}}$ in round $t_2:=t_1+\Oh\big(\frac{\log n}{1-\lambda}\big)$. The proof of this step is based on a polynomial potential function and the small discrepancy of the load vector in round $t_1$.

\item \textbf{Reducing the Discrepancy to a Constant.} Now we only need to analyze the $n \cdot \ce^{-(\log n)^{1-\eps}}$ tokens above the threshold $C$. This problem can be reduced to the analysis of a non-negative load vector with at most $n \cdot \ce^{-(\log n)^{1-\eps}}$ tokens (Observation~\ref{obs:scaling}).
 We prove in \thmref{final} that in round $t_3:=t_2+\Oh\big(\frac{\log n}{1-\lambda}\big)$, there is no token above the threshold $C+1$. The proof employs the token-based analysis via random walks from~\secref{randomwalk}. The lower bound on the minimum load follows by symmetry (\lemref{maxminrelation}). These two bounds on the maximum and minimum load together imply that the discrepancy in round $t_3$ is at most $2C+2$.
\end{enumerate}

\begin{figure}[htb]
\centering
\begin{tikzpicture}[xscale=0.55,yscale=0.7,thick, >=stealth, knoten/.style={circle, scale=0.5, draw=black}, rknoten/.style={circle, scale=0.5, draw=red}]
    %  \draw[white] (-0.5,-0.5) -- (6,-0.5) -- (6,7.8) -- (-0.5,7.8);
\draw[fill=black!10!white, thin] (0,9) -- (10,5) -- (17,5) -- (24,1.25) -- (0,1.25) -- (0,9);

  \draw[->] (-1,0) -- (25,0);
  \draw[->] (0,-1.5) -- (0,10);
  \draw (-0.4,9) -- (0.4,9);
  \draw (-0.4,5) -- (0.4,5);
  \draw (-0.4,1.25) -- (0.4,1.25);
  \node at (-0.4,9) [left] {$K$};
  \node at (-0.4,5) [left] {$(\log n)^{\epsilon_d}$};
  \node at (-0.4,1.25) [left] {$C+1$};
  \node at (0,10) [left] {$x_{\max}^{(t)}$};
  \draw (10,-0.4) -- (10,+0.4);
  \draw (17,-0.4) -- (17,+0.4);
  \draw (24,-0.4) -- (24,+0.4);
  \node at (25,0) [right] {$t$};
  \draw [gray, dashed, very thin] (10,-1.5) -- (10,10);
  \draw [gray, dashed, very thin] (17,-1.5) -- (17,10);
  \draw [gray, dashed, very thin] (24,-1.5) -- (24,10);
  \draw [gray, dashed, very thin] (0,1.25) -- (25,1.25);
  \draw [gray, dashed, very thin] (0,5) -- (25,5);
  \draw [gray, dashed, very thin] (0,9) -- (25,9);

  \node at (10,-0.5) [right] {$t_1$};
  \node at (17,-0.5) [right] {$t_2$};
  \node at (24,-0.5) [right] {$t_3$};

  \draw[<-] (0.25,-1.2) -- (1.25,-1.2);
  \node at (5,-1.2) [] {$\mathcal{O}\big( \frac{\log (Kn)}{1-\lambda} \big)$};
  \draw[->] (8.75,-1.2) -- (9.75,-1.2);

  \draw[<-] (10.25,-1.2) -- (11.25,-1.2);
  \node at (13.5,-1.2) [] {$\mathcal{O}\big( \frac{\log n}{1-\lambda} \big)$};
  \draw[->] (15.75,-1.2) -- (16.75,-1.2);

  \draw[<-] (17.25,-1.2) -- (18.25,-1.2);
  \node at (20.5,-1.2) [] {$\mathcal{O}\big( \frac{\log n}{1-\lambda} \big)$};
  \draw[->] (22.75,-1.2) -- (23.75,-1.2);

  \node at (4,6.4) [] {Theorem~\ref{thm:logg}};
  \node at (4,5.6) [] {\small{$x^{(0)}_{\max} \leq K$}};
  \node at (4,4.92) [] {$\Downarrow$};
  \node at (4,4.1) [] {\small{$x^{(t_1)}_{\max} \leq (\log n)^{\epsilon_d}$}};

  \node at (13.5,4.5) [] {Theorem~\ref{thm:firstsparsification}};
  \node at (13.5,3.7) [] {\small{$x^{(t_1)}_{\max} \leq (\log n)^{\epsilon_d}$}};
  \node at (13.5,3) [] {$\Downarrow$};
  \node at (13.5,2.2) [] {\footnotesize{$\sum\limits_{u \in V} \max\{ x_u^{(t_2)} - C, 0 \}$}};
  \node at (13.5,1.6) [] {\footnotesize{$\leq n \ce^{-(\log n)^{1-\epsilon}}$}};

  \node at (20.5,3.45) [] {Theorem~\ref{thm:final}};

  \node at (20.35,2.55) [] {\footnotesize{$\sum\limits_{u \in V} \max\{ x_u^{(t_2)} - C, 0 \}$}};
  \node at (20.35,1.95) [] {\footnotesize{$\leq n \ce^{-(\log n)^{1-\epsilon}}$}};
  \node at (20.35,1.2) [] {$\Downarrow$};
  \node at (20.35,0.55) [] {\small{$x^{(t_3)}_{\max} \leq C+1$}};
  \end{tikzpicture}

\centering
\caption{The above diagram illustrates how \thmref{logg}, \thmref{firstsparsification} and \thmref{final} are combined to prove \thmref{main}. We assume w.l.o.g.\ that $\overline{x} \in [0,1)$ and consider the decrease of the maximum load.
}\label{fig:proof}
\end{figure}

\begin{rem}
All results and arguments in this section will hold for the balancing circuit model (with constant $d$) and the random matching model for any regular graph $G$  as described in \secref{matching}, unless mentioned otherwise.
In the analysis, one round in the random matching model corresponds to $d$ consecutive rounds in the balancing circuit model, which ensures  smooth convergence as we periodically apply the same sequence of $d$ matchings. In fact, many of the complications in the proof come from the random matching model, as some nodes may not be part of any matching for up to $O(\log n)$ rounds.
\end{rem}

%\subsection{Sparsification of the Load Vector}\label{sec:sparsification}

\subsection{Proof of \thmref{main}}

In this subsection, we state \thmref{firstsparsification} and \thmref{final} with their proofs deferred to \secref{firstsparsification} and \secref{final}, respectively. By assuming the correctness of these two theorems, we prove our main result, \thmref{main}, at the end of this subsection.

\begin{theorem}\label{thm:firstsparsification}
Let $G$ be any  graph,  and let $0 < \epsilon < 1$ be any constant.
Then, there are constants $\epsilon_d = \epsilon_d(\epsilon) > 0$, $C=C(\epsilon) > 0$ and $\nu=\nu(\epsilon) \in (0,1)$
such that the following holds. For any load vector $x^{(0)}$ with discrepancy at most $(\log n)^{\eps_d}$ and $\overline{x} \in [0,1)$,  it holds after  $\tau:=\Oh\big( \frac{\log n}{1-\lambda}\big)$ rounds with probability at least $1-\ce^{-(\log n)^{\nu}}$ that
\begin{align*}
  \sum_{u \in V} \max \left\{ x_u^{(\tau)} - C, 0 \right\} \leq n \cdot \ce^{-(\log n)^{1-\eps}}.
\end{align*}
\end{theorem}

 %The basic idea to prove  \thmref{firstsparsification} is to consider the following potential function $\Phi^{(t)}  = \sum_{u \in V \colon x_u^{(t)} \geq 11} \big( x_u^{(t)} \big)^8 $ and show that it drops significantly.
%We will then show in \lemref{firstsparsificationlemma} that  after $\Oh(\log n/(1-\lambda))$ rounds, $\Phi$ decreases significantly, although not exponentially.
The complete proof of \thmref{firstsparsification} is given in \secref{firstsparsification}.

\begin{theorem}\label{thm:final}
%\NOTE{Th}{Bound on $\epsilon$ changed/corrected. Also note that the $\epsilon$ in the proof of Theorem is the $\epsilon$ from the statement, and cannot be chosen in the proof (although the proof suggested this!!!)}
Let $G$ be any regular graph and $0 < \epsilon \leq \frac{1}{192}$ be any constant. Assume that $x^{(0)}$ is a non-negative load vector with $\| x^{(0)} \|_{1} \leq n \cdot \ce^{-(\log n)^{1-\epsilon}  }$. Then with probability  at least $1 - 3 \ce^{-(\log n)^{1-3 \eps}}$, it holds after $\kappa := \Oh\big(\frac{\log n}{1  - \lambda}\big)$ rounds that $ x_{\max}^{(\kappa)} \leq 1$.
\end{theorem}

%To show \thmref{final}, we proceed similarly as in the proof of \thmref{firstsparsification}. However, here we employ an exponential potential function that runs over all nodes with load larger than $1$. Exploiting the sparseness of $x^{(0)}$, we show that after $\Oh\left(\frac{ \log n}{1-\lambda}\right)$ rounds, the value of the potential function is at most $n^2$. Then we exploit the sparseness again to derive an upper bound on the collision probability of two tokens. From this we conclude that after $\beta$ rounds, the potential
%is reduced by a factor of $\ce^{\Omega(\beta \cdot (1-\lambda))}$, meaning that on average the potential function drops exponentially every $\Oh\left(\frac{1}{1-\lambda}\right)$ rounds.

We defer the proof of \thmref{final} to \secref{final} and first prove \thmref{main}, assuming the correctness of \thmref{firstsparsification} and \thmref{final}.

\begin{proof}%[Proof of \thmref{main}]
Set the value of $\epsilon$ in \thmref{final} to $1/192$, which in turn gives us a constant $\epsilon_d = \epsilon_d(\epsilon) > 0$ for the precondition in \thmref{firstsparsification}. By~\thmref{logg}, the discrepancy is at most $\mathcal{O}\left((\log n)^{\epsilon_d}\right)$ with probability at least $1-\ce^{-\frac{1}{5}(\log n)^{\epsilon_d}}$ in round $t_1:=\Oh\big(\frac{\log (Kn)}{1-\lambda}\big)$.
Next, we apply \thmref{firstsparsification} to prove that with probability at least $1-\ce^{-(\log n)^{\nu}}$, the load vector $x^{(t_2)}$ in round $t_2:=t_1+\Oh\big(\frac{\log n}{1-\lambda}\big)$ satisfies
\[
  \sum_{u \in V} \max \big\{ x_u^{(t_2)} - C, 0 \big\} \leq n \cdot \ce^{-(\log n)^{1-\eps}}.
\]
Consider now an auxiliary load vector $\widetilde{x}^{(s)}$, $s \geq t_2$, which is initialized in round $t_2$ by
$
 \widetilde{x}_u^{(t_2)} := \max \bigl\{ x_u^{(t_2)} - C, 0 \bigr\}
$ for any $u \in V$. For any $s > t_2$, $\widetilde{x}^{(s)}$ is obtained by the execution of the discrete load balancing protocol, starting with $\widetilde{x}^{(t_2)}$ in round $t_2$, which uses  the same matchings and orientations as for the load vector $x^{(s)}$ in every round $s > t_2$.
By Observation~\ref{obs:scaling}, $x_u^{(s)} \leq \widetilde{x}_u^{(s)} + C$ for every $s \geq t_2$ and thus  suffices to bound the maximum load of the non-negative load vector $\widetilde{x}^{(s)}$.
As $\| \widetilde{x}^{(t_2)} \|_{1} \leq n \cdot \ce^{-(\log n)^{1-\eps}}$,
we may apply \thmref{final} to conclude that in round   $t_3 := t_2 + \Oh\bigl(\frac{\log n}{1-\lambda}\bigr)$, $ \widetilde{x}_{\max}^{(t_3)} \leq 1$ holds with probability at least $1-3\mathrm{e}^{-(\log n)^{1-3\varepsilon}}$. Hence by the union bound and the relation between $\widetilde{x}^{(t_3)}$ and $x^{(t_3)}$, the maximum load of $x^{(t_3)}$ is at most $C+1$ with probability at least $1-\frac{1}{2} \ce^{-(\log n)^{c_1}}$, where $c_1 > 0$ is some constant. The corresponding lower bound on the minimum load is derived by symmetry (\lemref{maxminrelation}). Hence, with probability at least $1-\ce^{-(\log n)^{c_1}}$, the discrepancy is at most $c_2= 2C+2$. This finishes the proof of the first result in \thmref{main}.

The second result on the expected discrepancy can be derived as follows. First, consider the case where the initial discrepancy $K$ is at most $n$. Then, with probability $1-n^{-1}$ after $\tau:=\tau_{\cont}(n,1)=\Oh \big( \frac{\log n}{1-\lambda} \big)$ rounds, the discrepancy is at most $1$ in the continuous case. Using the first statement of \thmref{deviationmatching} with $\kappa=5$, with probability at least $1-2n^{-4}$, we have $\max_{w \in V}  \left| x_w^{(\tau)} - \xi_w^{(\tau)} \right| = \Oh( \sqrt{\log n})$. Hence, with probability at least $1-2n^{-4}-n^{-1}$ the discrepancy is at most $\Oh(\sqrt{\log n})$. If the discrepancy is at most $\Oh(\sqrt{\log n})$ after $\tau$ rounds, then by the first result,  the discrepancy is at most $c_2$ after $\ell=\Oh \big( \frac{\log n}{1-\lambda} \big)$ additional rounds with probability at least $1-\mathrm{e}^{-(\log n)^{c_1}}$. Combining this, we conclude that the expected discrepancy after $\tau+\ell$ rounds is at most
\[
  \Ex{ \disc\big(x^{(\tau+\ell)}\big)} \leq K \cdot \left( 2n^{-4} + n^{-1} \right) + \Oh\big(\sqrt{\log n}\big) \cdot  \ce^{-(\log n)^{c_1}} + c_2 \leq c_3,
\]
where $c_3 > 0$ is a constant.

Next, consider the case where the initial discrepancy $K$ is larger than $n$. Then, we first consider $4 \cdot \log_{n}(K) = 4 \cdot \frac{\log K}{\log n}$ consecutive intervals of length $ \tau_{\cont}(n,1) = \Oh \big( \frac{\log n}{1-\lambda} \big)$ each. Let $\kappa:=4 \cdot \log_{n}(K) \cdot \tau_{\cont}(n,1)$ be the final round at the end of these intervals. If at the beginning of an interval the discrepancy is larger than $n$, then by using the same arguments as above, it follows that with probability at least $1-2n^{-1}$ the discrepancy is reduced by a factor of $n/\Oh(\sqrt{\log n}) \geq \sqrt{n}$. If $\disc(x^{(\kappa)}) \leq n$, then from the first case we know that after additional $\ell$ rounds, the expected discrepancy is at most $c_2$.
Conditioning on the value of $\disc(x^{(\kappa)})$, we conclude that
\begin{align*}
 \Ex{ \disc\big(x^{(\kappa+\ell)}\big) }
 &\leq \Pro{ \disc\big(x^{(\kappa)}\big) \leq n } \cdot \Ex{ \disc\big(x^{(\kappa+\ell)}\big)               \, \mid \,  \disc\big(x^{(\kappa)}\big) \leq n  } \\ & ~~~~ +
 \Pro{ \disc\big(x^{(\kappa)}\big) > n }
 \cdot \Ex{  \disc\big(x^{(\kappa+\ell)}\big)               \, \mid \, \disc\big(x^{(\kappa)}\big) > n  } \\
 &\leq 1 \cdot c_3 +
 \Pro{ \disc\big(x^{(\kappa)}\big) > n } \cdot K.
\end{align*}
If $ \disc\big(x^{(\kappa)}\big) > n$ occurs, then in
less than $\log_{\sqrt{n}}(K) = 2 \cdot \log_{n}(K)$ intervals, the discrepancy is reduced by a factor of at most $\sqrt{n}$. Hence,
\begin{align*}
  \Pro{  \disc\big(x^{(\kappa)}\big) > n  } &\leq
  \binom{ 4 \cdot \log_{n}(K)}{ 2 \cdot \log_{n}(K)} \cdot
  \left( \frac{2}{n}\right)^{2 \cdot \log_{n}(K)}
  \leq \left( \frac{4\cdot\mathrm{e}}{n} \right)^{2 \cdot \log_{n}(K)}  %\leq \left( \frac{1}{\sqrt{n}} \right)^{ \log_{\sqrt{n}}(K)}
  \leq \frac{1}{K},
\end{align*}
and therefore, $\Ex{ \disc\big(x^{(\kappa+\ell)}\big) }=\Oh(1)$.
\end{proof}

\subsection{Proof of \thmref{firstsparsification}}\label{sec:firstsparsification}

Throughout  the proof of \thmref{firstsparsification}, we use the following potential function:
 \begin{align}\label{eq:def_potential}
 \Phi^{(t)} := \sum_{u \in V \colon x_u^{(t)} \geq 11} \big( x_u^{(t)} \big)^8 .
  \end{align}
  Occasionally, we will also apply this potential function to a different sequence of load vectors $\widetilde{x}^{(t)}, t \geq 0,$ and denote this by $\Phi^{(t)}(\widetilde{x})$. Our next observation is that $\Phi^{(t)}$ is non-increasing in $t$. Indeed, since our protocol only transfers tokens from nodes with larger load to ones with smaller load, it suffices to show that $x \mapsto x^ 8 \cdot \mathbf{1}_{x \geq 11}$ is convex, which follows from the convexity of $x \mapsto x^8$ and $11^8 - 0 \leq 12^8 - 11^8$.

The key step in proving \thmref{firstsparsification} is
  to analyze the drop of the potential function $\Phi$, which is formalized in the following lemma:

\begin{lemma}\label{lem:firstsparsificationlemma}
Let $\tau:=\Oh\big( \frac{\log n}{1-\lambda}\big)$. Then, the following two statements hold:
\begin{itemize}\itemsep 0pt
\item
For any load vector $x^{(0)}$ with discrepancy at most $ (\log n)^{13/900}$, it holds with probability at least $1-\ce^{-(\log n)^{\nu}}$, where $\nu \in (0,1)$ is a constant, that
\[
 \Phi^{(\tau)} \leq n \cdot \ce^{-(\log n)^{\frac{1}{100}}}.
 \]
\item
For any non-negative load vector $x^{(0)}$ with discrepancy at most $(\log n)^{\epsilon_d}$, where $0 < \epsilon_d \leq 13/900$, and $\| x^{(0)} \|_{1} \leq n \cdot \ce^{-(\log n)^{\sigma}}$ for some constant $\sigma \in (0,1)$, where $1-27 \epsilon_d - \sigma > 0$, it holds with probability at least $1-\ce^{-(\log n)^{\nu}}$, where $\nu \in (0,1)$ is a constant, that
\[ \Phi^{(\tau)} \leq n \cdot \exp\left(-(\log n)^{1 - 27 \eps_d - \frac{105}{106} (1 - 27 \eps_d - \sigma)   }\right).
\]
\end{itemize}
\end{lemma}
 Since $\sum_{u \in V} \max\big\{ x_u^{(\tau)} - 10, 0 \big\} \leq \Phi^{(\tau)}$, the second statement of \lemref{firstsparsificationlemma} states that the number of tokens above the threshold $10$ at the end of round $\tau$ is much smaller than $n \cdot \ce^{-(\log n)^{\sigma}}$. This argument can be iterated a constant number of times so that the number of tokens above the threshold $10 \cdot k$ at the end of round $k \cdot \tau$ is at most $n \cdot \ce^{-(\log n)^{1-\eps}}$ for a sufficiently large constant $k$, yielding~\thmref{firstsparsification}.

 \subsubsection{Proof of \thmref{firstsparsification}  using \lemref{firstsparsificationlemma}} We now proceed with the formal proof of~\thmref{firstsparsification} assuming the correctness of \lemref{firstsparsificationlemma}, whose proof is given in
 \secref{later}.

\begin{proof}%[Proof of \thmref{firstsparsification}]
We choose $\epsilon_d \leq \min\{ \epsilon/54, 13/900 \}$. Assume that the discrepancy of the initial load vector $x^{(0)}$ is at most $(\log n)^{\epsilon_d}$. Then, the first statement of \lemref{firstsparsificationlemma} implies that with probability $1-\ce^{-(\log n)^{\nu}}$,
 \begin{align}
  \sum_{u \in V} \max \big\{ x_u^{(\tau)} - 10, 0 \big\} \leq \Phi^{(\tau)} \leq   n\cdot \ce^{-(\log n)^{\frac{1}{100}} }, \label{eq:refff}
\end{align}
where $\tau:= \Oh \big( \frac{\log n}{1-\lambda} \big)$ is as defined in \lemref{firstsparsificationlemma}.
Let us now define an auxiliary load vector $\widetilde{x}^{(s)}$ for any $s \geq \tau$. This load vector is initialized by $\widetilde{x}^{(\tau)}:= \max \big\{ x_u^{(\tau)} - 10, 0 \big\}$ and uses for any round $s > \tau$ the same matchings and orientations as the load vector $x^{(s)}$.
By Observation~\ref{obs:scaling}, it holds for any $s \geq \tau$ and node $u \in V$ that
\[
   x_u^{(s)} \leq \widetilde{x}_u^{(s)} + 10,
\]
which allows us to work with the non-negative load vector $\widetilde{x}$ in the following. By definition of $\widetilde{x}$ and \eq{refff},
$
  \left\| \widetilde{x}^{(\tau)}  \right\|_{1} \leq  n\cdot \ce^{-(\log n)^{\frac{1}{100}} }.
$
 Applying the second statement of \lemref{firstsparsificationlemma} to $\widetilde{x}^{(\tau)}$, it follows with probability $1-\ce^{-(\log n)^{\nu}}$ that
\begin{align*}
  \sum_{u \in V} \max \left\{ \widetilde{x}_u^{(2 \tau)} - 10, 0 \right\} \leq \Phi^{(2 \tau)}(\tilde{x})& \leq n \cdot \exp\left(-(\log n)^{1-27 \epsilon_d - \frac{105}{106}(1 - 27 \epsilon_d - a(1))} \right),
\end{align*}
where $a(1):=\frac{1}{100}$. Consequently,
\begin{align*}
  \sum_{u \in V} \max \left\{ x_u^{(2 \tau)} - 2 \cdot 10, 0 \right\} &\leq  \sum_{u \in V} \max \left\{ \widetilde{x}_u^{(2 \tau)} - 10, 0 \right\} \\
  &\leq n\cdot \exp\left(-(\log n)^{1- 27 \epsilon_d - \frac{105}{106}(1 - 27 \epsilon_d - a(1))} \right).
\end{align*}
Since the sequence $a(i), i \geq 2$, defined by the recursion
\begin{align*}
a(i) &:= (1- 27 \epsilon_d) - \frac{105}{106} \Big( (1 - 27 \epsilon_d) - a(i-1) \Big) = \frac{1}{106} \cdot (1 - 27 \epsilon_d) + \frac{105}{106} a(i-1)
\end{align*}
and $a(1) = \frac{1}{100}$, is non-decreasing in $i$
(as $1 - 27 \epsilon_d \geq \frac{1}{100}$),
and converges to $1- 27 \epsilon_d$, it follows for any integer $k \in \N$ by the union bound that with probability at least $1- k \cdot \ce^{-(\log n)^{\nu}}$, \[
  \sum_{u \in V} \max \left\{ x_u^{(k \cdot \tau)} - k \cdot 10, 0 \right\} \leq \Phi^{(k \cdot \tau)} \leq n\cdot \ce^{-(\log n)^{a(k)} }.
\]
Further, for any $\eps > 0$, there exists a constant $\iota=\iota(\eps) \in \mathbb{N}$ so that for any $k \geq \iota$, $a(k) \geq 1 - 27 \epsilon_d - \epsilon/2$. Since $\epsilon_d \leq \epsilon/54$, \thmref{firstsparsification} follows.
\end{proof}

\subsubsection{Proof of \lemref{firstsparsificationlemma}}\label{sec:later}

This part is devoted to the proof of \lemref{firstsparsificationlemma}. First, we define canonical paths.
\begin{defi}[{\cite{FS09}}]
    \label{def:canonicalpathitself}
 The sequence $\mathcal{P}_v^{(t_1)}=(\mathcal{P}_v^{(t_1)}(t_1)=v, \mathcal{P}_v^{(t_1)}(t_1+1), \ldots)$ is called the \emph{canonical path}
    of $v$ from round $t_1$ if for all rounds $t$ with $t > t_1$ the
    following holds:
    If $v_{t}:=\mathcal{P}_v^{(t_1)}(t)$  is unmatched in  $\M^{(t+1)}$, then $\mathcal{P}_v^{(t_1)}(t+1) := v_{t+1}$, where $v_{t+1} := v_t$.  Otherwise,
    let $u\in V$ be the node such that $\{v_t,u\}\in\M^{(t+1)}$.  Then,
    \begin{itemize}
    \setlength{\itemsep}{0pt}
    \setlength{\parskip}{0pt}
    \item
    if $x_{v_t}^{(t)}\geq x_{u}^{(t)}$ and $\Phi_{v_t,u}^{(t+1)}=1$,
    then $v_{t+1} = v_t$,
    \item
    if $x_{v_t}^{(t)}\geq x_{u}^{(t)}$ and $\Phi_{v_t,u}^{(t+1)}=-1$,
    then $v_{t+1} = u$,
    \item
    if $x_{v_t}^{(t)}< x_{u}^{(t)}$ and $\Phi_{v_t,u}^{(t+1)}=1$,
    then $v_{t+1} = u$,
    \item
    if $x_{v_t}^{(t)}< x_{u}^{(t)}$ and $\Phi_{v_t,u}^{(t+1)}=-1$,
    then $v_{t+1} = v_t$.
    \end{itemize}
 \end{defi}

%\begin{defi}[{\cite{FS09}}]
%    \label{def:canonicalpathitself}
%    The sequence $\pat_{u}=(\pat_u^{(t_1)}=u, \pat_u^{(t_1+1)}, \ldots)$ is called the \emph{canonical path}
%    of $\pat_u^{(t_1)}=u$ from round $t_1$ if for all rounds $t$ with $t_1\leq t\leq t_2$ the
%    following holds:
%    If $\pat_u^{(t)}$ is unmatched in  $\M^{(t+1)}$, then $\pat_u^{(t+1)} = \pat_u^{(t)}$.  Otherwise,
%    let $w \in V$ be such that $\{\pat_u^{(t)},w\}\in\M^{(t+1)}$.  Then,
%    \begin{itemize}
%    \setlength{\itemsep}{0pt}
%    \setlength{\parskip}{0pt}
%    \item
%    if $x_{\pat_u^{(t)}}^{(t)}\geq x_{w}^{(t)}$ and $\Phi_{\pat_u^{(t)},w}^{(t)}=1$
%    then $\pat_u^{(t+1)} = \pat_u^{(t)}$,
%    \item
%    if $x_{\pat_u^{(t)}}^{(t)}\geq x_{w}^{(t)}$ and $\Phi_{\pat_u^{(t)},w}^{(t)}=-1$
%    then $\pat_u^{(t+1)} = w$,
%    \item
%    if $x_{\pat_u^{(t)}}^{(t)}< x_{w}^{(t)}$ and $\Phi_{\pat_u^{(t)},w}^{(t)}=1$
%    then $\pat_u^{(t+1)} = w$,
%    \item
%    if $x_{\pat_u^{(t)}}^{(t)}< x_{w}^{(t)}$ and $\Phi_{\pat_u^{(t)},u}^{(t)}=-1$
%    then $\pat_u^{(t+1)} = \pat_u^{(t)}$.
%    \end{itemize}
%\end{defi}
An illustration of this definition is given in \figref{canonical}.

\begin{figure}[h]

\begin{tikzpicture}[xscale=0.65,yscale=0.65,thick, >=stealth, knoten/.style={circle, scale=0.4, draw=black, fill=black}, rknoten/.style={circle, scale=0.5, draw=red}]

\draw (2,4) -- (6,4);
\draw (2,2) -- (6,2);
%\draw[color=black!50, opacity=0.8, line width=3pt, <-] (4,4) -- (4,2);
%\draw[line width=5 pt, opacity=0.6, join=round, cap=round, color=red!40, ->] (2,4) -- (6,4);
\draw[color=black!50, line width=3pt, <-] (4,4) -- (4,2);
\draw[line width=3 pt, join=round, cap=round, color=red!40, ->] (2,4) -- (6,4);

\node[] () at (2,4) [label=left:$v_{t}$] {};
\node[] () at (2,2) [label=left:$u$] {};
\node[] () at (2.2,4) [label=above:$5$] {};
\node[] () at (2.2,2) [label=above:$2$] {};
\node[] () at (5.8,4) [label=above:$4$] {};
\node[] () at (5.8,2) [label=above:$3$] {};
\node[] () at (4,1)
[] {$\Phi_{v_t,u}^{(t+1)}=1$};
\end{tikzpicture}
\hspace{0.5cm}
\begin{tikzpicture}[xscale=0.65,yscale=0.65,thick, >=stealth, knoten/.style={circle, scale=0.4, draw=black, fill=black}, rknoten/.style={circle, scale=0.5, draw=red}]

\draw (2,4) -- (6,4);
\draw (2,2) -- (6,2);
%\draw[color=black!50, opacity=0.8, line width=3pt, ->] (4,4) -- (4,2);
%\draw[line width=5 pt, opacity=0.6, join=round, cap=round, color=red!40, ->] (2,4)  to  (3.5,4) to [bend left=30] (4,3.5) to (4,2.5) to [bend right=30] (4.5,2) to (6,2);
\draw[color=black!50,  line width=3pt, ->] (4,4) -- (4,2);
\draw[line width=3 pt,  join=round, cap=round, color=red!40, ->] (2,4)  to  (3.5,4) to [bend left=30] (4,3.5) to (4,2.5) to [bend right=30] (4.5,2) to (6,2);

\node[] () at (2,4) [label=left:$v_{t}$] {};
\node[] () at (2,2) [label=left:$u$] {};
\node[] () at (2.2,4) [label=above:$5$] {};
\node[] () at (2.2,2) [label=above:$2$] {};
\node[] () at (5.8,4) [label=above:$3$] {};
\node[] () at (5.8,2) [label=above:$4$] {};
\node[] () at (4,1)
[] {$\Phi_{v_t,u}^{(t+1)}=-1$};
\end{tikzpicture}
\hspace{0.5cm}
\begin{tikzpicture}[xscale=0.65,yscale=0.65,thick, >=stealth, knoten/.style={circle, scale=0.4, draw=black, fill=black}, rknoten/.style={circle, scale=0.5, draw=red}]

\draw (2,4) -- (6,4);
\draw (2,2) -- (6,2);
%\draw[color=black!50, opacity=0.8, line width=3pt, <-] (4,4) -- (4,2);
%\draw[line width=5 pt, opacity=0.6, join=round, cap=round, color=red!40, ->] (2,4)  to  (3.5,4) to [bend left=30] (4,3.5) to (4,2.5) to [bend right=30] (4.5,2) to (6,2);
\draw[color=black!50, line width=3pt, <-] (4,4) -- (4,2);
\draw[line width=3 pt, join=round, cap=round, color=red!40, ->] (2,4)  to  (3.5,4) to [bend left=30] (4,3.5) to (4,2.5) to [bend right=30] (4.5,2) to (6,2);

\node[] () at (2,4) [label=left:$v_{t}$] {};
\node[] () at (2,2) [label=left:$u$] {};
\node[] () at (2.2,4) [label=above:$2$] {};
\node[] () at (2.2,2) [label=above:$5$] {};
\node[] () at (5.8,4) [label=above:$4$] {};
\node[] () at (5.8,2) [label=above:$3$] {};
\node[] () at (4,1)
[] {$\Phi_{v_t,u}^{(t+1)}=1$};
\end{tikzpicture}
\hspace{0.5cm}
\begin{tikzpicture}[xscale=0.65,yscale=0.65,thick, >=stealth, knoten/.style={circle, scale=0.4, draw=black, fill=black}, rknoten/.style={circle, scale=0.5, draw=red}]

\draw (2,4) -- (6,4);
\draw (2,2) -- (6,2);
%\draw[color=black!50, opacity=0.8, line width=3pt, ->] (4,4) -- (4,2);
%\draw[line width=5 pt, opacity=0.6, join=round, cap=round, color=red!40, ->] (2,4) -- (6,4);
\draw[color=black!50, line width=3pt, ->] (4,4) -- (4,2);
\draw[line width=3 pt, join=round, cap=round, color=red!40, ->] (2,4) -- (6,4);

\node[] () at (2,4) [label=left:$v_{t}$] {};
\node[] () at (2,2) [label=left:$u$] {};
\node[] () at (2.2,4) [label=above:$2$] {};
\node[] () at (2.2,2) [label=above:$5$] {};
\node[] () at (5.8,4) [label=above:$3$] {};
\node[] () at (5.8,2) [label=above:$4$] {};
\node[] () at (4,1)
[] {$\Phi_{v_t,u}^{(t+1)}=-1$};
\end{tikzpicture}

\caption{Illustration of the four cases in the definition of canonical path.
%The vertical arrow indicates the orientation of matching edge $\{v_t,u\}$.
}\label{fig:canonical}
\end{figure}

Note that there are always exactly $n$ canonical paths, which are all vertex-disjoint. We define canonical paths so that if
two of them are connected by a matching edge,  they continue in a
way so that the changes of the load (in absolute values) on each of the two paths is minimized. Hence, any increase or decrease of the load in a canonical path implies that the load vector becomes more balanced.

The next lemma relates the evolution of a canonical path to random walks and provides a lower bound on the probability that canonical paths meet, i.e., are connected by a matching edge in at least one round.

\begin{lemma}\label{lem:canonical}
Let $t_1 < t_2$ be two rounds. Fix a sequence of matchings $\langle \M^{(t_1+1)},\M^{(t_1+2)}, \ldots, \M^{(t_2)} \rangle$ and the load vector $x^{(t_1)}$.
Then, the following statements hold:
\begin{itemize}
\item
For any canonical path $\pat_v^{(t_1)}$ and node $w \in V$, it holds that
\begin{align*}
  \Pro{ \pat_v^{(t_1)}(t_2) = w } &= \M_{v,w}^{[t_1+1,t_2]}.
\end{align*}
\item For any pair of different nodes $u,v \in V$ and two canonical paths $\pat^{(t_1)}_u$ and $\pat^{(t_1)}_v$, it holds that
\begin{align*}
 \Pro{ \exists t\in[t_1, t_2-1]: \left\{ \pat^{(t_1)}_u(t), \pat^{(t_1)}_v(t) \right\} \in \M^{(t+1)} } \geq \sum_{w \in V} \M_{u,w}^{[t_1+1,t_2]} \cdot \M_{v,w}^{[t_1+1,t_2]}.
\end{align*}
\end{itemize}
\end{lemma}
\begin{proof}
Consider two independent random walks that start at nodes $u, v$ at the end of round $t_ 1$ and terminate at the end of round $t_2$. If a random walk is located at a node $w \in V$ at the end of a round $t \in [t_1,t_2-1]$, it can make the following transition. If node $w \in V$ is unmatched in round $t+1$, then the random walk stays at $w$. If node $w \in V$ is matched with a node $k \in V$, then the random walk stays at $w$ with probability $1/2$ and, otherwise, switches to $k$. Hence, the probability distribution for a random walk starting from $u$ in round $t_2$ is given by $\M_{u,\cdot}^{[t_1+1,t_2]}$ (cf.~proof of \lemref{rightprobability}). By Definition~\ref{def:canonicalpathitself}, a canonical path has the same transition probabilities as the aforementioned random walk. Therefore, the first statement of the lemma follows.

We now prove the second statement. The first statement implies that the probability that both independent random walks are located at the same node at the end of round $t_2$ is given by
\[
\sum_{w \in V} \M_{u,w}^{[t_1+1,t_2]} \cdot \M_{v,w}^{[t_1+1,t_2]}.
\]
\begin{figure}
\begin{center}
\begin{tikzpicture}[xscale=0.65,yscale=0.65,thick, >=stealth, knoten/.style={circle, scale=0.4, draw=black, fill=black}, rknoten/.style={circle, scale=0.5, draw=red}]

\draw[->] (0,10) -- (0,-1.7);
\draw (-0.25,9.75) -- (0.25,9.75);
\draw (-0.25,2.75) -- (0.25,2.75);
\draw (-0.25,-1.25) -- (0.25,-1.25);
\node[] () at (-0.25,9.75) [label=left:$t_1$] {};
\node[] () at (-0.25,-1.25) [label=left:$t_2$] {};
\node[] () at (-0.25,2.75) [label=left:$t$] {};

\node[knoten] (1) at (4,9.75) {};
\node[] () at (4,9.75) [label=above:$u$] {};

%\draw[color=black!50, opacity=0.8, line width=3pt, <-] (4,9.75) -- (5,9.75);
%\draw[color=black!50, opacity=0.8, line width=3pt, ->] (4,7.75) -- (5,7.75);
%\draw[color=black!50, opacity=0.8, line width=3pt, ->] (4,6.75) -- (5,6.75);
%\draw[color=black!50, opacity=0.8, line width=3pt, ->] (5,5.75) -- (6,5.75);
%\draw[color=black!50, opacity=0.8, line width=3pt, ->] (6,3.75) -- (7,3.75);
%\draw[color=black!50, opacity=0.8, line width=3pt, <-] (7,2.75) -- (8,2.75);
%\draw[color=black!50, opacity=0.8, line width=3pt, <-] (7,0.75) -- (8,0.75);
%\draw[color=black!50, opacity=0.8, line width=3pt, <-] (6,-0.25) -- (7,-0.25);

\draw[color=black!50, line width=3pt, <-] (4,9.75) -- (5,9.75);
\draw[color=black!50, line width=3pt, ->] (4,7.75) -- (5,7.75);
\draw[color=black!50, line width=3pt, ->] (4,6.75) -- (5,6.75);
\draw[color=black!50, line width=3pt, ->] (5,5.75) -- (6,5.75);
\draw[color=black!50, line width=3pt, ->] (6,3.75) -- (7,3.75);
\draw[color=black!50, line width=3pt, <-] (7,2.75) -- (8,2.75);
\draw[color=black!50, line width=3pt, <-] (7,0.75) -- (8,0.75);
\draw[color=black!50, line width=3pt, <-] (6,-0.25) -- (7,-0.25);

%\draw[line width=4 pt, opacity=0.4, join=round, cap=round, color=red!40, ->]
(4,9.75) -- (4,8.75) -- (4,8) [bend right=30] to (4.25,7.75) -- (4.75,7.75) [bend left=30] to (5,7.5) -- (5,6.75) --
(5,6) [bend right=30] to (5.25,5.75) -- (5.75,5.75)
[bend left=30] to (6,5.5) -- (6,4.75) -- (6,4) [bend right=30] to (6.25,3.75) -- (6.75,3.75) [bend left=30] to (7,3.5) -- (7,2.75) -- (7,1.75) -- (7,0.75) -- (7,0) [bend left=30] to (6.75,-0.25) -- (6.25,-0.25) [bend right=30] to (6,-0.5) -- (6,-0.75) -- (6,-1.75)
;
\draw[line width=4 pt, join=round, cap=round, color=red!40, ->]
(4,9.75) -- (4,8.75) -- (4,8) [bend right=30] to (4.25,7.75) -- (4.75,7.75) [bend left=30] to (5,7.5) -- (5,6.75) --
(5,6) [bend right=30] to (5.25,5.75) -- (5.75,5.75)
[bend left=30] to (6,5.5) -- (6,4.75) -- (6,4) [bend right=30] to (6.25,3.75) -- (6.75,3.75) [bend left=30] to (7,3.5) -- (7,2.75) -- (7,1.75) -- (7,0.75) -- (7,0) [bend left=30] to (6.75,-0.25) -- (6.25,-0.25) [bend right=30] to (6,-0.5) -- (6,-0.75) -- (6,-1.75)
;

%\draw[color=black!50, opacity=0.8, line width=3pt, <-] (9,8.75) -- (10,8.75);
%\draw[color=black!50, opacity=0.8, line width=3pt, ->] (10,7.75) -- (11,7.75);
%\draw[color=black!50, opacity=0.8, line width=3pt, ->] (9,6.75) -- (10,6.75);
%\draw[color=black!50, opacity=0.8, line width=3pt, ->] (9,5.75) -- (10,5.75);
%\draw[color=black!50, opacity=0.8, line width=3pt, ->] (8,4.75) -- (9,4.75);
%\draw[color=black!50, opacity=0.8, line width=3pt, <-] (8,1.75) -- (9,1.75);
%\draw[color=black!50, opacity=0.8, line width=3pt, ->] (9,-0.25) -- (10,-0.25);
%%\draw[color=black!50, opacity=0.8, line width=3pt, <-] (9,-0.25) -- (10,-0.25);
%\draw[color=black!50, opacity=0.8, line width=3pt, <-] (9,-1.25) -- (10,-1.25);

\draw[color=black!50, line width=3pt, <-] (9,8.75) -- (10,8.75);
\draw[color=black!50, line width=3pt, ->] (10,7.75) -- (11,7.75);
\draw[color=black!50, line width=3pt, ->] (9,6.75) -- (10,6.75);
\draw[color=black!50, line width=3pt, ->] (9,5.75) -- (10,5.75);
\draw[color=black!50, line width=3pt, ->] (8,4.75) -- (9,4.75);
\draw[color=black!50, line width=3pt, <-] (8,1.75) -- (9,1.75);
\draw[color=black!50, line width=3pt, ->] (9,-0.25) -- (10,-0.25);
%\draw[color=black!50, opacity=0.8, line width=3pt, <-] (9,-0.25) -- (10,-0.25);
\draw[color=black!50, line width=3pt, <-] (9,-1.25) -- (10,-1.25);

%\draw[line width=4 pt, opacity=0.4, join=round, cap=round, color=blue!40, ->]
(10,9.75) -- (10,8.75) --
(10,7.75) --
(10,7) [bend left=30] to (9.75,6.75) -- (9.25,6.75) [bend right=30] to (9,6.5) -- (9,5.75) -- (9,5)
[bend left=30] to (8.75,4.75) -- (8.25,4.75) [bend right=30] to (8,4.5) -- (8,3.75) -- (8,2.75) -- (8,2) [bend right=30] to (8.25,1.75) -- (8.75,1.75) [bend left=30] to (9,1.5) -- (9,0.75) -- (9,-0.25) -- (9,-1) [bend right=30] to (9.25,-1.25) -- (9.75,-1.25) [bend left=30] to (10,-1.5) -- (10,-1.75);

\draw[line width=4 pt, join=round, cap=round, color=blue!40, ->]
(10,9.75) -- (10,8.75) --
(10,7.75) --
(10,7) [bend left=30] to (9.75,6.75) -- (9.25,6.75) [bend right=30] to (9,6.5) -- (9,5.75) -- (9,5)
[bend left=30] to (8.75,4.75) -- (8.25,4.75) [bend right=30] to (8,4.5) -- (8,3.75) -- (8,2.75) -- (8,2) [bend right=30] to (8.25,1.75) -- (8.75,1.75) [bend left=30] to (9,1.5) -- (9,0.75) -- (9,-0.25) -- (9,-1) [bend right=30] to (9.25,-1.25) -- (9.75,-1.25) [bend left=30] to (10,-1.5) -- (10,-1.75);

\node[knoten] (1) at (4,8.75) {};
\node[knoten] (1) at (4,7.75) {};
\node[knoten] (1) at (5,7.75) {};
\node[knoten] (1) at (5,6.75) {};
\node[knoten] (1) at (5,5.75) {};
\node[knoten] (1) at (6,5.75) {};
\node[knoten] (1) at (6,4.75) {};
\node[knoten] (1) at (6,3.75) {};
\node[knoten] (1) at (7,3.75) {};
\node[knoten] (1) at (7,2.75) {};
\node[knoten] (1) at (8,2.75) {};
\node[knoten] (1) at (7,1.75) {};
\node[knoten] (1) at (7,0.75) {};
\node[knoten] (1) at (7,-0.25) {};
\node[knoten] (1) at (6,-0.25) {};
\node[knoten] (1) at (6,-1.25) {};

\node[knoten] (1) at (10,9.75) {};
\node[] () at (10,9.75) [label=above:$v$] {};
\node[] () at (6,-1.5) [label=below:$w$] {};
\node[] () at (10,-1.5) [label=below:$z$] {};

\node[knoten] (1) at (10,8.75) {};
\node[knoten] (1) at (10,7.75) {};
\node[knoten] (1) at (10,6.75) {};
\node[knoten] (1) at (9,6.75) {};
\node[knoten] (1) at (9,5.75) {};
\node[knoten] (1) at (9,4.75) {};
\node[knoten] (1) at (8,4.75) {};
\node[knoten] (1) at (8,3.75) {};
\node[knoten] (1) at (8,1.75) {};
\node[knoten] (1) at (9,1.75) {};
\node[knoten] (1) at (9,0.75) {};
\node[knoten] (1) at (9,-0.25) {};
\node[knoten] (1) at (9,-1.25) {};
\node[knoten] (1) at (10,-1.25) {};

\end{tikzpicture}
\end{center}

%\begin{figure}[htb!]
%\centering \scalebox{0.65}[0.65]{\includegraphics{pic1}}
\caption{An illustration of two canonical paths $\pat_{u}^{(t_1)}$ and $\pat_{v}^{(t_1)}$ reaching nodes $w$ and $z$ at the end of round $t_2$, respectively. The two paths evolve independently except for round $t$ in which they are connected by a matching edge. In this illustration, the load of the canonical path $\pat_{u}^{(t_1)}$ is always at least the load at the other endpoint of the incident matching edges, while the opposite holds for $\pat_{v}^{(t_1)}$. Hence $\pat_{u}^{(t_1)}$ always follows the direction of the orientation, whereas $\pat_{v}^{(t_1)}$ does the opposite.
\label{fig:twocanonical}}
\end{figure}

Consider now the two canonical paths $\pat_{u}^{(t_1)}$ and $\pat_{v}^{(t_1)}$. Further, consider any round $t > t_1$, and fix the load vector $x^{(t-1)}$. By definition, both canonical paths make the same transition as a random walk, and furthermore, the two transitions are independent unless $\{\pat_{u}^{(t_1)}(t), \pat_{v}^{(t_1)}(t) \} \in \M^{(t+1)}$ (see \figref{twocanonical} for an illustration). Hence,
we can couple the transitions of the two canonical paths with the transitions of the two random walks until the first time when the two canonical paths are connected by a matching edge. If the two random walks are located at the same node $w$ at the end of round $t_2$, then, as two canonical paths cannot be located on the same vertex,
 the coupling above implies that there is one round $t \in [t_1,t_2-1]$ with $\left\{\pat_{u}^{(t_1)}(t), \pat_{v}^{(t_1)}(t) \right\} \in \M^{(t+1)}$, and thus,
\begin{align*}
 \Pro{  \exists t\in [t_1, t_2-1]: \left\{ \pat^{(t_1)}_u(t), \pat^{(t_1)}_v(t) \right\} \in \M^{(t+1)} } \geq \sum_{w \in V} \M_{u,w}^{[t_1+1,t_2]} \cdot \M_{v,w}^{[t_1+1,t_2]}.
\end{align*}
\end{proof}

Now we sketch the key ideas of the proof of \lemref{firstsparsificationlemma}. We use the polynomial potential function $\Phi^{(t)}$ that only involves nodes with load at least $11$ (see \eq{def_potential}). Using the condition that the discrepancy of the load vector is at most $(\log n)^{\epsilon_d}$, it follows directly that the initial value of the potential $\Phi$ is upper bounded by an almost linear function in $n$ (see Observation~\ref{obs:potentialinitial} below). To prove that $\Phi$ decreases, we consider phases of length $\beta:=(\log n)^{\epsilon_t}$, where $\epsilon_t \in (0,1)$ is a small constant. In each such phase, we consider canonical paths starting from nodes with load at least $11$ together with canonical paths starting from nodes with load at most $9$.

To lower bound the probability that two canonical paths meet, we use the relation between canonical paths and random walks, i.e., as long as both canonical paths have not been  connected by a matching edge, they evolve like  independent
random walks. Then, we focus on the nodes with load at least $11$ from which a canonical path has a large probability to meet with a canonical path starting from a node with load at most $9$ (Definition~\ref{def:good} and \lemref{goodnodess}). Using this, we establish in \lemref{inner} that there are indeed sufficiently many encounters and the potential drops.

We continue with the formal proof.
The next observation provides a simple bound on the initial value of the potential function $\Phi^{(t)}$ defined by \eq{def_potential}, exploiting the small discrepancy of the load vector.

\begin{obs}\label{obs:potentialinitial}
Consider any load vector $x^{(t)}$ with $\overline{x} \in [0,1)$ and discrepancy at most $(\log n)^{\epsilon_d}$. Then,
$
  \Phi^{(t)} \leq n \cdot ( (\log n)^{\epsilon_d} + 1)^{8}.
$
\end{obs}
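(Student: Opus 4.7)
The plan is essentially a one-step calculation exploiting integrality. First I would observe that after the normalization (as explained in the preceding paragraphs and in Observation~\ref{obs:scaling}), the load vector $x^{(t)}$ is integer-valued and $\overline{x} \in [0,1)$. Since the average of $n$ integers cannot be strictly less than their minimum, it follows that $\min_{u \in V} x_u^{(t)} \leq \overline{x} < 1$, and because this minimum is itself an integer we get $\min_{u \in V} x_u^{(t)} \leq 0$.

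Combining this with the discrepancy assumption $\max_u x_u^{(t)} - \min_u x_u^{(t)} \leq (\log n)^{\epsilon_d}$, we obtain the pointwise upper bound $x_u^{(t)} \leq (\log n)^{\epsilon_d}$ for every node $u \in V$. In particular, for every $u$ contributing to $\Phi^{(t)}$ (i.e., those with $x_u^{(t)} \geq 11$), we have $(x_u^{(t)})^8 \leq ((\log n)^{\epsilon_d})^8 \leq ((\log n)^{\epsilon_d} + 1)^8$.

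Since at most $n$ nodes contribute to the sum defining $\Phi^{(t)}$, summing the pointwise bound yields
\[
  \Phi^{(t)} = \sum_{u \in V \colon x_u^{(t)} \geq 11} \big(x_u^{(t)}\big)^8 \leq n \cdot \big((\log n)^{\epsilon_d} + 1\big)^{8},
\]
as required. There is no real obstacle: the argument is a direct consequence of integrality together with the normalization $\overline{x}\in[0,1)$ and the hypothesis on the discrepancy, and the extra ``$+1$'' in the bound is only a cosmetic slack that accommodates the possibility that $(\log n)^{\epsilon_d}$ is not an integer.
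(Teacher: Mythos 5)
Your proof is correct, and since the paper states this as an Observation without writing out an argument, it is essentially the intended one: bound the maximum load by roughly $(\log n)^{\epsilon_d}$ using the small average and small discrepancy, then sum the pointwise bound over at most $n$ nodes. The one minor remark is that the ``$+1$'' in the statement falls out even more directly without invoking integrality: since $\min_u x_u^{(t)} \le \overline{x} < 1$, the discrepancy bound already gives $\max_u x_u^{(t)} < 1 + (\log n)^{\epsilon_d}$, so $(x_u^{(t)})^8 < ((\log n)^{\epsilon_d}+1)^8$ for every contributing node. Your integrality detour yields the marginally sharper $\max_u x_u^{(t)} \le (\log n)^{\epsilon_d}$, which of course also suffices; either route is fine.
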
%
%\begin{proof}
%First, let $t_0$ be a round with $t_0 \geq \tau_{\cont}(K,n^{-3})$. Since $\overline{x}\in[0,1)$, we can estimate the expectation of the potential $\Phi^{(t_0)}$ as follows:
%\begin{align*}
%  \Ex{ \Phi^{(t_0)} } & = \sum_{\substack{u \in V\\ x_u^{(t)}\geq 11}} \sum_{k=1}^{\infty} \Pro{ \left(x_u^{(t_0)} \right)^8 \geq k } \\
%  &= \sum_{\substack{u \in V\\ x_u^{(t)}\geq 11}} \sum_{k=1}^{\infty} \Pro{ \left|x_u^{(t_0)} \right| \geq k^{1/8} } \\
%   &\leq \sum_{\substack{u \in V\\ x_u^{(t)}\geq 11}} \sum_{k=1}^{\infty} \Pro{ x_u^{(t_0)} -\overline{x} \geq k^{1/8} } \\
%  &\leq \sum_{u \in V} \sum_{k=1}^{\infty} \Pro{ \left|x_u^{(t_0)}-\overline{x} \right| \geq k^{1/8} } \\
%  &\leq \sum_{u \in V} \sum_{k=1}^{\infty} 2\cdot\ce^{-\Omega(k^{1/4})}.
%\end{align*}
%where the last inequality is due to \lemref{bounddifference}. Hence $\Ex{ \Phi^{(t_0)} } = \Oh(n)$ and by Markov's inequality,
%\begin{align*}
% \Pro{ \Phi^{(t_0)} \geq  n \cdot \ce^{(\log n)^{\epsilon_d} } }
% \leq
% \ce^{-(\log n)^{\epsilon_d/2}  },
%\end{align*}
%which finishes the proof of the lemma.
%\end{proof}

%Fix a step $t_1$
%and let
%\[
 % V_{1}^{(t_1)} := \left\{ u \in V \colon %  x_{u}^{(t_1)} \geq 5 \right\}.
%\]
Our goal is to consider phases of length $\beta=(\log n)^{\epsilon_t}$, where $0 < \epsilon_t <1$ is a constant,  and prove that after each such phase, the expected value of the potential drops. Before analyzing the potential change, we introduce three conditions for a pair of node $u$ and round $t$.

%which are necessary to cope with the inherent randomness of the balancing circuit and especially the random matching model.

%One complication that arises in the proof is that in the random matching model, there could be nodes $u$ for which the neighborhood induced by the matchings in the next $\beta$ rounds does not expand, i.e., for a node $u$ it holds that $\left\|\M_{u,\cdot}^{[t+1,t+\beta]} \right\|_2^2$ is not small.
%Alternatively, in both the random matching and the balancing circuit model, also the following could happen.
%For a node $u$, $\left\|\M_{u,\cdot}^{[t+1,t+\beta]} \right\|_2^2$ is small, but nevertheless the expected load of the destination of the canonical path from $u$ is still large which makes it unlikely for the canonical path from $u$ with load at least $11$ in round $t$ to meet with a canonical path with load at most $9$ within the time-interval $[t+1,t+\beta]$.
%To handle the types of nodes, we define the following conditions for any node $u$ and round $t$.

\begin{defi}\label{defi:conditions}
Let $\beta=(\log n)^{\eps_t}$, where $0 < \eps_t < 1$ is any constant.
For any node $u \in V$ and round $t \in \mathbb{N}$, define the following three conditions:
\begin{itemize}\itemsep -0pt
\item $\coni(u,t) :\quad x_u^{(t)}\geq 11 $.
\item  $\conii(u,t) :\quad \max \left\{ \left\| \M_{u,\cdot}^{[t+1,t+\beta]} \right\|_2^{2}  ,  \left\| \M_{\cdot,u}^{[t+1,t+\beta]} \right\|_2^{2} \right\} \leq (\log n)^{-\eps_t/9}$.
%\item $\sum_{v \in V} x_v^{(t)} \M_{u,v}^{[t+1,t+\beta]} > 2$.
 \item $\coniii(u,t) :\quad \sum_{w \in V}\left( \M_{u,w}^{[t+1,t+\beta]} \sum_{v \in V} x_v^{(t)} \M_{v,w}^{[t+1,t+\beta]}\right) < 4 $.
\end{itemize}
Moreover let $\mathcal{S}_1^{(t)}$ be the set of nodes $u$ satisfying $\coni(u,t)$. The sets $\mathcal{S}_2^{(t)}$ and $\mathcal{S}_3^{(t)}$ are defined in the same way.
\end{defi}

Note that $\conii(u,t)$ ensures that the local neighborhood around the node $u$, with respect to the graph induced by the matchings within the time-interval $[t+1,t+\beta]$, expands sufficiently.
Regarding $\coniii(u,t)$, recall that the probability distribution of $\pat_u^{(t)}(t+\beta)$ is given by $\M_{u,\cdot}^{[t+1,t+\beta]}$. Moreover, for any $w \in V$ and fixed load vector $x^{(t)}$, $\sum_{v \in V} x_v^{(t)} \M_{v,w}^{[t+1,t+\beta]}$ is the expected load on node $w$ in round $t+\beta$. Hence, if $\coniii(u,t)$ holds, then at the (random) location $\pat_u^{(t)}(t+\beta)$, the expected load is less than $4$. Thus, if $\coni(u,t)$ and $\coniii(u,t)$ both hold for a node $u \in V$, we would expect that $\pat_u^{(t)}$ has  a good chance to contribute to a decrease of the potential. This intuition motivates the following definition of set $V_1^{(t)}$,
which can be considered as the set of ``good'' nodes.

\begin{defi}\label{def:good}
Define
\[
  V_1^{(t)} := {\cal S}_1^{(t)} \cap {\cal S}_3^{(t)}.
\]
\end{defi}
%Intuitively, $V_1^{(t)}$ contains all nodes with a large load but small expected load at the location of the canonical path in round $t+\beta$.

%Intuitively, a node $u$ is bad if despite enjoying good expansion in the graph induced by the matchings within the time-interval $[t+1,t+\beta]$, the expected number of tokens at the endpoint of the canonical path is too large ($ \geq 4$).

%In other words, a node $u$ is bad if despite enjoying good expansion in the graph induced by the matchings within the time-interval $[t+1,t+\beta]$, the expected number of tokens at the endpoint of the canonical path is at least $4$.

The  next lemma provides two lower bounds on the size of $V_1^{(t)}$, where the tail bound of the second lower bound is exponentially small in the ``sparseness'' $\sigma$.
\begin{lemma}\label{lem:goodnodess}
Let $\epsilon_t \in (0,1), \epsilon_d \in (0,1)$ be two arbitrary constants. Fix an arbitrary load vector $x^{(0)}$ with discrepancy at most $(\log n)^{\epsilon_d}$ and $\overline{x} \in [0,1)$. Then, the following statements hold:
\begin{itemize}\itemsep -0pt
\item For any round $t \geq \tau_{\cont}(n,n^{-3})$,
\begin{align*}
  \Pro{ \left| V_1^{(t)} \right| \geq  \left|{\cal S}_1^{(t)} \right| - n \cdot \ce^{-(\log n)^{\epsilon_t/11}}} &\geq 1 - \ce^{ - (\log n)^{\epsilon_t/11} } - n^{-1}.
\end{align*}
\item If  $x^{(0)}$ is non-negative and satisfies $\| x^{(0)} \|_{1} \leq n \cdot \ce^{-(\log n)^{\sigma}}$ for some constant $\sigma\in(0,1)$, then for any round $t \geq \tau_{\cont}(n,n^{-3})$,
\begin{align*}
  \Pro{ \left| V_1^{(t)} \right| \geq \frac{1}{2} \left|{\cal S}_1^{(t)} \right| - n \cdot \ce^{-(\log n)^{\epsilon_t/20 + \sigma}}  } &\geq 1 -\ce^{-(\log n)^{\epsilon_t/20 + \sigma}} - \ce^{-\frac{1}{2}(\log n)^{\epsilon_t/2}} - n^{-1}.
\end{align*}
\end{itemize}
\end{lemma}
\begin{proof}
We begin by proving the first statement, which will be established by an upper bound on ${\cal S}_3^{(t)}$. Fix any node $u \in V$ in round $t$, and consider $\left\| \M_{u,\cdot}^{[t+1,t+\beta]} \right\|_{2}^2$ (recall $\beta=(\log n)^{\epsilon_t}$). While this is a deterministic value for the balancing circuit model, it is a random variable in the random matching model. By \lemref{matchinglocalexpansion}, it holds for the random matching model that
\begin{align*}
  \Pro{\max \left\{ \left\| \M_{u,\cdot}^{[t+1,t+\beta ]} \right\|_{2}^2, \left\| \M_{\cdot,u}^{[t+1,t+\beta ]} \right\|_{2}^2 \right\} \geq (\log n)^{-\epsilon_t/7} } \leq \ce^{-(
 \log n)^{ \epsilon_t/2} }.
\end{align*}
For the balancing circuit model, we replace the $\beta$ rounds by $d \cdot\beta$ rounds, which corresponds to applying the round matrix $\M=\prod_{i=1}^{d} \M^{(i)}$ $\beta$ times. By \lemref{circuitlocalexpansion},
 we have for any pair of nodes $u,v \in V$ that
\begin{align*}
    \M_{u,v}^{(\beta)} \leq (\log n)^{-\epsilon_t/9},
\end{align*}
 and hence, for any $t$ being a multiple of $d$,
\begin{align*}
  \max\left\{ \left\| \M_{u,\cdot}^{[t+1,t+d \beta]} \right\|_2^2,  \left\| \M_{\cdot,u}^{[t+1,t+d \beta]} \right\|_2^2  \right\} \leq (\log n)^{-\epsilon_t/9}.
\end{align*}
Let us now return to the random matching model. By Markov's inequality, we get
\begin{align*}
 \Pro{ \left| \left\{ u \in \mathcal{S}^{(t)}_1 \colon  u \mbox{ does not satisfy } \conii(u,t)  \right\} \right| \geq \left| \mathcal{S}_1^{(t)} \right| \cdot \ce^{-\frac{1}{2} (\log n)^{\epsilon_t/2} }} \leq  \ce^{-\frac{1}{2} (\log n)^{ \epsilon_t/2} },
\end{align*}
that is,
\begin{align}\label{eq:boundS2}
 \Pro{ \left| \mathcal{S}_1^{(t)} \setminus \mathcal{S}^{(t)}_2\right| \leq \left| \mathcal{S}_1^{(t)} \right| \cdot \ce^{-\frac{1}{2} (\log n)^{\epsilon_t/2} } } \geq  1- \ce^{-\frac{1}{2} (\log n)^{ \epsilon_t/2} }.
\end{align}
Similarly,
\begin{align}\label{eq:boundS2add}
 \Pro{ \left| V \setminus \mathcal{S}^{(t)}_2\right| \leq n \cdot \ce^{-\frac{1}{2} (\log n)^{\epsilon_t/2} } } \geq  1- \ce^{-\frac{1}{2}  (\log n)^{ \epsilon_t/2} }.
\end{align}

Note that $\mathcal{S}^{(t)}_2$ depends only on the random choices
for the matchings within the time-interval $[t+1,t+\beta]$, which is independent
of the load vector $x^{(t)}$ and the matchings in the time-interval $[0,t]$. Since $t \geq \tau_{\cont}(n,n^{-3})$, it follows that
\begin{align}
  \Pro{ \mbox{ $[0,t]$ is $(n,n^{-3})$--smoothing } } \geq 1 - n^{-1} \label{eq:smoothing},
\end{align}
and this probability is $1$ for the balancing circuit model.
For the remainder of the proof, we tacitly assume that $[0,t]$ is $(n,n^{-3})$--smoothing and focus on the random matching model. (The same arguments apply to the balancing circuit model by replacing a single round by $d$ rounds corresponding to the $d$ matchings, which are applied periodically.) Hence, by \lemref{bounddifference}, it holds for any sequence of matchings $\langle \M^{(t+1)}, \M^{(t+2)},\ldots,\M^{(t+\beta)} \rangle$, any node $u \in V$, and $\delta > 1/(2n)$ that
\[
 \Pro{ \left| \sum_{v \in V} x_v^{(t)} \M_{v,u}^{[t+1,t+\beta]} - \overline{x} \right| \geq \delta } \leq 2 \exp \left(- \frac{ (\delta - 1/(2n) )^2}{ 4\cdot \left\| \M_{.,u}^{[t+1,t+\beta]} \right\|_{2}^{2}     } \right).
\]
Choosing $\delta=1$ and recalling $\overline{x} \leq 1$ yields
\[
 \Pro{  \sum_{v \in V} x_v^{(t)} \M_{v,u}^{[t+1,t+\beta]}  \geq 2 \,\, \bigg| \,\, u \in \mathcal{S}_2^{(t)} } \leq 2 \exp \left(- \frac{(1- 1/(2n))^2}{ 4\cdot (\log n)^{-\epsilon_t/9}   } \right) \leq \ce^{ - (\log n)^{\epsilon_t/10}}.
\]
Hence, by Markov's inequality,
\[
 \Pro{ \left| \left\{ u \in \mathcal{S}_2^{(t)} \colon \sum_{v \in V} x_v^{(t)} \M_{v,u}^{[t+1,t+\beta]}  \geq 2  \right\} \right| \geq \left| \mathcal{S}_2^{(t)} \right| \cdot \ce^{- \frac{1}{2} (\log n)^{\epsilon_t/10}}  } \leq
  \ce^{- \frac{1}{2}(\log n)^{\epsilon_t/10} }.
\]
Let $\mathcal{R}^{(t)} := \left\{
 u \in V \colon  \sum_{v \in V} x_v^{(t)} \M_{v,u}^{[t+1,t+\beta]}   \geq 2 \right\}
$. Then, we infer from the above inequality that
\begin{align} \label{eq:boundR}
 \Pro{ \left|\mathcal{R}^{(t)}  \cap \mathcal{S}_2^{(t)} \right| \leq \left| \mathcal{S}_2^{(t)} \right| \cdot \ce^{- \frac{1}{2}  (\log n)^{\epsilon_t/10}} } \geq 1- \ce^{- \frac{1}{2}  (\log n)^{\epsilon_t/10} }.
\end{align}
Combining \eq{boundS2add}, \eq{boundR}, and $\mathcal{R}^{(t)} \subseteq \left( \mathcal{R}^{(t)}  \cap \mathcal{S}_2^{(t)} \right) \bigcup \left(  V \setminus \mathcal{S}_2^{(t)} \right)$ gives
\begin{align}
 \Pro{ \left| \mathcal{R}^{(t)} \right| \leq n \cdot \ce^{- \frac{1}{2} (\log n)^{\epsilon_t/10}}
 + n \cdot \ce^{ -\frac{1}{2} (\log n)^{\epsilon_t/2}   } }
 \geq 1 - 2 \ce^{- \frac{1}{2} (\log n)^{\epsilon_t/10} }. \label{eq:finall}
\end{align}
Our next goal is to upper bound the size of $\mathcal{S}_3^{(t)}$ in terms of $ | \mathcal{R}^{(t)} |  $. By definition,  we have
\begin{align}
\sum_{u\in V \setminus \mathcal{S}_3^{(t)}}\sum_{w \in V} \left( \M_{u,w}^{[t+1,t+\beta]} \sum_{v \in V} x_v^{(t)} \M_{v,w}^{[t+1,t+\beta]} \right) \geq 4\cdot \left| V \setminus \mathcal{S}_3^{(t)}\right|.
\label{eq:bound1}
\end{align}
On the other hand, since $x_{\max}^{(t)} \leq x_{\max}^{(0)} \leq (\log n)^{\epsilon_d} + 1 \leq 2 (\log n)^{\epsilon_d}$, we have
\begin{align}
\lefteqn{\sum_{u\in V \setminus \mathcal{S}_3^{(t)}}\sum_{w \in V} \left( \M_{u,w}^{[t+1,t+\beta]} \sum_{v \in V} x_v^{(t)} \M_{v,w}^{[t+1,t+\beta]} \right) }\nonumber\\
&=  \sum_{u\in V \setminus \mathcal{S}_3^{(t)}} \sum_{w \in \mathcal{R}^{(t)}} \left( \M_{u,w}^{[t+1,t+\beta]} \sum_{v \in V} x_v^{(t)} \M_{v,w}^{[t+1,t+\beta]} \right) \nonumber \\ &\qquad +
\sum_{u\in V \setminus \mathcal{S}_3^{(t)}} \sum_{w \in V\setminus\mathcal{R}^{(t)}} \left(  \M_{u,w}^{[t+1,t+\beta]} \sum_{v \in V} x_v^{(t)} \M_{v,w}^{[t+1,t+\beta]} \right) \nonumber\\
&\leq 2(\log n)^{\epsilon_d}\cdot \sum_{u\in V \setminus \mathcal{S}_3^{(t)}} \sum_{w \in \mathcal{R}^{(t)}} \M_{u,w}^{[t+1,t+\beta]} \sum_{v \in V} \M_{v,w}^{[t+1,t+\beta]} + \sum_{u \in V \setminus \mathcal{S}_3^{(t)}} 2 \cdot \sum_{w \in V} \M_{u,w}^{[t+1,t+\beta]}   \nonumber\\
& \leq 2(\log n)^{\epsilon_d}\cdot\sum_{w \in \mathcal{R}^{(t)}} \sum_{u\in V} \M_{u,w}^{[t+1,t+\beta]}   + 2\left|V \setminus \mathcal{S}_3^{(t)}\right|\nonumber\\
&\leq 2(\log n)^{\epsilon_d}\cdot  \left| \mathcal{R}^{(t)}\right| +2\left|V \setminus \mathcal{S}_3^{(t)}\right|.\label{eq:bound2}
\end{align}
Combining \eq{bound1} and \eq{bound2} yields
\begin{align*}
  \left| V \setminus \mathcal{S}_3^{(t)} \right| &\leq (\log n)^{\eps_d} \cdot
  \left|\mathcal{R}^{(t)}\right| .
\end{align*}
Using this, \eq{finall} and
\[
V_1^{(t)} = \mathcal{S}_1^{(t)} \cap \mathcal{S}_3^{(t)} = \mathcal{S}_1^{(t)} \setminus \left(V \setminus \mathcal{S}_3^{(t)}\right)\]
 complete the proof of the first statement.
%, which in turn used the assumption that $[0,t]$ is $(n,n^{-3})$--smoothing, which holds with probability at least $1-n^{-1}$.

Let us now turn to the second statement of the lemma.
As in the proof of the first statement, we tacitly assume that the time-interval $[0,t]$ is $(n,n^{-3})$--smoothing, which holds with probability at least $1-n^{-1}$, and we focus on the random matching model. Fix any node  $u \in \mathcal{S}_2^{(t)}$.
%Our goal is to apply \lemref{sparsiter} for
%$y_v = \mathbf{M}_{u,v}^{[t+1,t+\beta]}$.
%By the definition of $\mathcal{S}_2^{(t)}$ and the fact that $\|y\|^{2}_{\infty}\leq \|y\|^2_2\leq \|y\|_{\infty}\cdot \|y\|_1$, we can bound the maximum norm of $y$ as
% \[
% \| y \|_{\infty} \leq \left\| \mathbf{M}_{u,\cdot}^{[t+1,t+\beta]}   \right\|_2 \leq
% (\log n)^{-\epsilon_t/14},
%\]
%and hence we choose $\delta = \epsilon_t/16$ in \lemref{sparsiter} to obtain that
%\begin{align*}
%  \Pro{ \sum_{v \in V} x_v^{(t)} \mathbf{M}_{u,v}^{[t+1,t+\beta]}  \geq 4 \, \Big| \, u \in \mathcal{S}_2^{(t)}} &\leq
%  2 \log_2 n \cdot \exp\left(- (\log n)^{\epsilon_t / 16 + \sigma } \big/ 6  \right).
%\end{align*}
Note that
\begin{align*}
\sum_{w \in V} \left( \M_{u,w}^{[t+1,t+\beta]}  \sum_{v \in V} x_v^{(t)} \M_{v,w}^{[t+1,t+\beta]} \right) &= \sum_{v \in V} \left( \sum_{w \in V} \M_{u,w}^{[t+1,t+\beta]} \cdot \M_{v,w}^{[t+1,t+\beta]} \right) \cdot x_v^{(t)} .
\end{align*}
Now, we apply \lemref{sparsiter} with $y_v = \sum_{w \in V} \M_{u,w}^{[t+1,t+\beta]} \cdot \M_{v,w}^{[t+1,t+\beta]}$. Since $u \in \mathcal{S}_2^{(t)}$, the vector $y$ satisfies
\[
  \| y \|_{\infty} \leq \left\| \mathbf{M}_{u,\cdot}^{[t+1,t+\beta]} \right\|_{\infty} \leq \left\| \mathbf{M}_{u,\cdot}^{[t+1,t+\beta]} \right\|_{2} \leq (\log n)^{-\epsilon_t/18},
\]
and choosing $\delta=\epsilon_t/19$ in \lemref{sparsiter} gives
\begin{align*}
\lefteqn{ \Pro{u\not\in \mathcal{S}_3^{(t)} } } \\
  &=
  \Pro{  \sum_{v \in V} \left( \sum_{w \in V} \M_{u,w}^{[t+1,t+\beta]} \cdot \M_{v,w}^{[t+1,t+\beta]} \right) \cdot x_v^{(t)}  \geq 4 }  \\
  &\leq  \Pro{  \sum_{v \in V} \left( \sum_{w \in V} \M_{u,w}^{[t+1,t+\beta]} \cdot \M_{v,w}^{[t+1,t+\beta]} \right) \cdot x_v^{(t)}  \geq \ce^{-\frac{1}{5} (\log n)^{\sigma}} + 8 (\log n)^{-\epsilon_t/18} \cdot (\log n)^{\epsilon_t/19} } \\
   & \leq  \exp\left(- \frac{1}{3} (\log n)^{\epsilon_t / 19 + \sigma }  \right) \leq \exp\left(- 2\cdot (\log n)^{\epsilon_t / 20 + \sigma }   \right).
\end{align*}
Therefore, by Markov's inequality,
\begin{align*}
 \Pro{ \left| \mathcal{S}_2^{(t)} \setminus \mathcal{S}_3^{(t)}  \right| \geq n \cdot 	 \ce^{- (\log n)^{\epsilon_t / 20 + \sigma }  }} &\leq \ce^{- (\log n)^{\epsilon_t / 20 + \sigma }  }\enspace.
\end{align*}
Moreover, it follows from \eq{boundS2} that
\begin{align*}
 \Pro{ \left| \mathcal{S}_1^{(t)} \cap \mathcal{S}_2^{(t)} \right| \geq \frac{1}{2} \cdot \left| \mathcal{S}_1^{(t)} \right| } \geq 1 - \ce^{-\frac{1}{2} (\log n)^{\epsilon_t/2}}.
\end{align*}
Using this and the fact that $V_1^{(t)}= \mathcal{S}^{(t)}_1 \cap \mathcal{S}^{(t)}_3 \supseteq \left(\mathcal{S}^{(t)}_1 \cap \mathcal{S}^{(t)}_2\right) \setminus \left(\mathcal{S}^{(t)}_2 \setminus \mathcal{S}^{(t)}_3\right)$ implies that, under the condition that $[0,t]$ is $(n,n^{-3})$-smoothing,
\begin{align*}
 \Pro{ \Big| V_1^{(t)} \Big| \geq \frac{1}{2} \left|{\cal S}_1^{(t)}\right|  - n \cdot \ce^{-(\log n)^{\epsilon_t/20 + \sigma}}  } &\geq 1 -\ce^{-(\log n)^{\epsilon_t/20 + \sigma}} - \ce^{-\frac{1}{2}(\log n)^{\epsilon_t/2}},
\end{align*}
which finishes the proof.
\end{proof}

%For any round $t$, define $\mathcal{S}^{(t)}_{12}$ to be the set of
%nodes $u$ satisfying $\coni(u,t)$ and $\conii(u,t)$  and let
%\[
% V_1^{(t)} := \mathcal{S}_{12}^{(t)} \setminus \mathcal{B}^{(t)},
%\]
%be the set of nodes $u$ satisfying $\coni(u,t)$ and $\conii(u,t)$, but not $\coniii(u,t)$. Intuitively, every node in $V_1^{(t)}$ has (i)  large load,
%(ii)  expanding neighborhood and (iii)  small expected load at the location of the canonical path of $u$ in round $t+\beta$. Hence if there are many nodes in $V_1^{(t)}$, then we would hope for a decrease in the potential $\Phi$, which is formalized in the following lemma.

\begin{lemma}\label{lem:inner}
Let $\epsilon_b, \epsilon_d$ and $\epsilon_t$ be three arbitrary constants in the interval $(0,1)$. Fix a load vector $x^{(t)}$ with $\overline{x} \in [0,1)$ and discrepancy at most $(\log n)^{\epsilon_d}$.
Assume that the load vector $x^{(t)}=x$ and the matchings in the time-interval $[t+1,t+\beta]$ satisfy
\begin{align*}
 \left|V_{1}^{(t)} \right| &\geq \frac{1}{2} \left| \mathcal{S}_1^{(t)} \right| - n\cdot\ce^{-(\log n)^{\epsilon_b} }.
\end{align*}
Then, it holds in round
$t+\beta, \beta:=(\log n)^{\eps_t}$ that
\begin{align*}
 \Ex{ \Phi^{(t+\beta)} \, \middle| \, x^{(t)} = x } &\leq \max \left\{ \left(1 -  \frac{1}{9 (\log n)^{\epsilon_t + 9 \epsilon_d}} \right) \cdot \Phi^{(t)}, 4n \cdot \ce^{-(\log n)^{\epsilon_b}  } \cdot \left( (\log n)^{ \epsilon_d} + 1 \right)^{8} \right \}.
\end{align*}
\end{lemma}

\begin{proof}
We remark that the precondition on $V_{1}^{(t)}$ depends only on the random choices for the matching edges in the interval $[t+1,t+\beta]$, but not on the orientation of these edges. Therefore, the orientations of the matchings in the time-interval $[t+1,t+\beta]$ are still chosen independently and uniformly at random. In the proof we focus on the random matching model, but exactly the same arguments apply to the balancing circuit model, where one round of the random matching model corresponds to $d$ matchings in the balancing circuit model.

We first observe that the claim of the lemma holds trivially if $\left|\mathcal{S}_1^{(t)}\right| \leq 4 n \cdot \mathrm{e}^{-(\log n)^{\epsilon_b}}$, since, in this case, we have that
\begin{align*}
 \Phi^{(t+\beta)} \leq \Phi^{(t)} &=  \sum_{\substack{u\in V\\ x^{(t)}_u\geq 11 }} \big( x_u^{(t)} \big)^8 \leq \left| \mathcal{S}_1^{(t)} \right| \cdot \left( (\log n)^{ \epsilon_d} + 1 \right)^{8} \leq 4 n \cdot \ce^{-(\log n)^{\epsilon_b}  } \cdot \left( (\log n)^{ \epsilon_d} + 1 \right)^{8}.
\end{align*}
Hence  we may assume that
\begin{align}
 \left| \mathcal{S}_1^{(t)} \right| &\geq 4 n \cdot \mathrm{e}^{-(\log n)^{\epsilon_b}}. \label{eq:assumption}
\end{align}
In the following, we fix an arbitrary node $u \in V_1^{(t)}$.
By definition,  node $u$ satisfies
\[
  \sum_{w \in V} \left( \M_{u,w}^{[t+1,t+\beta]} \sum_{v \in V} x_v^{(t)} \M_{v,w}^{[t+1,t+\beta]} \right) < 4.
\]
Let
\[
  \mathcal{G}^{(t+\beta)} := \left\{ w \in V \colon \sum_{v \in V} x_v^{(t)} \M_{v,w}^{[t+1,t+\beta]} \leq 8 \right\}.
\]
Since $\mathcal{G}^{(t+\beta)} \subseteq V$, we have that
\[
 \sum_{w \not\in \mathcal{G}^{(t+\beta)}} \left( \M_{u,w}^{[t+1,t+\beta]} \sum_{v \in V} x_v^{(t)} \M_{v,w}^{[t+1,t+\beta]} \right) < 4.
\]
Combining this with the fact that
\[
 \sum_{w \not\in \mathcal{G}^{(t+\beta)}} \left( \M_{u,w}^{[t+1,t+\beta]} \sum_{v \in V} x_v^{(t)} \M_{v,w}^{[t+1,t+\beta]} \right) \geq  8\cdot \sum_{w \not\in \mathcal{G}^{(t+\beta)}}\M_{u,w}^{[t+1,t+\beta]}
\]
gives $\sum_{ w\not\in \mathcal{G}^{(t+\beta)}}\M_{u,w}^{[t+1,t+\beta]} \leq 1/2$.
Since the probability distribution of $\mathcal{P}_{u}^{(t)}(t+\beta)$, the location of the canonical path of $u$ from round $t$ in round $t+\beta$, is given by $\M_{u,\cdot}^{[t+1,t+\beta]}$, we have that
\begin{align}
 \Pro{ \pat_{u}^{(t)}(t+\beta) \in \mathcal{G}^{(t+\beta)} } = \sum_{w \in \mathcal{G}^{(t+\beta)}} \M_{u,w}^{[t+1,t+\beta]} \geq \frac{1}{2}. \label{eq:starstar}
\end{align}
Fix now a node $w \in \mathcal{G}^{(t+\beta)}$.
Our next aim is to derive a lower bound on the probability that there is a canonical path starting from a node $v$ with $x_v^{(t)} \leq 9$ that reaches $w$ in round $t+\beta$.
To this end, let us define $
  V_2^{(t)} := \left\{ v \in V \colon x_{v}^{(t)} \leq 9 \right\}
$ and $\alpha_w := \sum_{v \in V_2^{(t)}} \M_{v,w}^{[t+1,t+\beta]}$ for node $w$.
  Then,
\begin{align*}
  8 &\geq \sum_{v \in V} x_v^{(t)} \cdot \M_{v,w}^{[t+1,t+\beta]} \\
   &=
  \sum_{v \in V_2^{(t)}} x_{v}^{(t)} \cdot \M_{v,w}^{[t+1,t+\beta]} + \sum_{v \in V \setminus V_2^{(t)}} x_{v}^{(t)} \cdot \M_{v,w}^{[t+1,t+\beta]} \\
  &\geq-(\log n)^{\epsilon_d} \cdot \alpha_w + 10 \cdot (1-\alpha_w),
\end{align*}
and rearranging yields for any $w \in \mathcal{G}^{(t+\beta)}$ that
\begin{align}\label{eq:boundalphaw}
\alpha_w &\geq \frac{2}{(\log n)^{\epsilon_d} + 10} \geq \frac{1}{(\log n)^{\epsilon_d}}.
\end{align}

Let us now lower bound the probability for the event that the canonical path starting from $u\in V_1^{(t)}$ meets with a canonical path starting from a node $v \in V_2^{(t)}$ within the time-interval $[t+1,t+\beta]$.
By the second statement of \lemref{canonical}, it holds that
%
%%By \eq{starstar}, the canonical path of $u$ reaches a node $w \in \mathcal{G}^{(t)}$ with probability at least $1/2$.
%Suppose that the canonical path of $u$ reaches a node $w \in \mathcal{G}^{(t)}$ in round $t+\beta$. Then a canonical path $\pat_v^{(t)}$ from a different node $v \neq u$ performs an independent random walk based on the random orientations as long as $\{ \pat_u^{(t)}(s),\pat_v^{(t)}(s) \} \not\in \M^{(s)}$, i.e., as long as it does not reach a node in round $s \in [t+1,t+\beta]$ which is matched with the location of the canonical path from $u$ (see \figref{sets} for an illustration). Hence,
\begin{align*}
  \Pro{ \exists s \in [t,t+\beta-1] \colon \left\{ \pat_{u}^{(t)}(s), \pat_{v}^{(t)}(s) \right\} \in \M^{(s+1)}} \geq  \sum_{w \in V} \M_{u,w}^{[t+1,t+\beta]} \cdot \M_{v,w}^{[t+1,t+\beta]}.
\end{align*}

%The problem with the collision between two canonical paths $\pat_u$ and $\pat_v$ in some round $s$ is that it could potentially not reduce the potential, as one of the paths may have changed their load in a round before $s$.
%However, we will now argue that the expected drop of the potential is approximately the number of collisions divided by $\beta$.
To aggregate the contribution from all nodes in $V_1^{(t)}$,
we now define an auxiliary bipartite graph $H=\big(V_1^{(t)} \cup V_2^{(t)}, E(H)\big)$, which depends on the load balancing process within the time-interval $[t+1,t+\beta]$. We place an edge $\{u,v\} \in E(H)$ if the canonical path of $u \in V_1^{(t)}$ meets with the canonical path of $v \in V_2^{(t)}$ within the time-interval $[t+1,t+\beta]$. We first lower bound the expected number of edges in $H$:
\begin{align*}
\Ex{ |E(H)| }
 &= \sum_{u \in V_1^{(t)}} \sum_{v \in V_2^{(t)}}
 \Pro{  \exists s \in [t,t+\beta-1] \colon  \left\{ \pat_{u}^{(t)}(s), \pat_{v}^{(t)}(s) \right\} \in \M^{(s+1)}              } \\
 &\geq \sum_{u \in V_1^{(t_1)}} \sum_{v \in V_2^{(t)}} \sum_{w \in V}  \M_{u,w}^{[t+1,t+\beta]} \cdot \M_{v,w}^{[t+1,t+\beta]}        \\
  &\geq \sum_{u \in V_1^{(t)}} \sum_{w \in \mathcal{G}^{(t+\beta)}} \M_{u,w}^{[t+1,t+\beta]} \cdot \sum_{v \in V_2^{(t)}}  \M_{v,w}^{[t+1,t+\beta]}.
  \end{align*}
By \eq{boundalphaw}, it holds for $w \in \mathcal{G}^{(t+\beta)}$ that $\sum_{v \in V_2^{(t)}}  \M_{v,w}^{[t+1,t+\beta]}\geq \frac{1}{(\log n)^{\epsilon_d}}$. Hence,
  \begin{align*}
  \Ex{ |E(H)| }&\geq
 \frac{1}{(\log n)^{\epsilon_d}} \cdot \sum_{u \in V_1^{(t)}} \sum_{w \in \mathcal{G}^{(t+\beta)}}  \M_{u,w}^{[t+1,t+\beta]} \geq \frac{1}{(\log n)^{\epsilon_d}} \cdot \left|V_1^{(t)} \right| \cdot \frac{1}{2}~,
 %= \frac{1}{2 \cdot (\log n)^{\epsilon_d} } \cdot \left|V_1^{(t)} \right|~.
 \end{align*}
 where the second inequality follows from \eq{starstar}.
 By the precondition of this lemma,
 \begin{align*}
  \left|V_1^{(t)} \right| &\geq \frac{1}{2}  \left|\mathcal{S}_1^{(t)} \right| - n \cdot \ce^{-(\log n)^{\epsilon_b}}.
 \end{align*}
 Since $\left|\mathcal{S}_1^{(t)}\right| \geq 4 n\cdot \ce^{-(\log n)^{\epsilon_b}}$ by \eq{assumption}, we conclude that $\left| V_1^{(t)} \right| \geq \frac{1}{4} \left|\mathcal{S}_1^{(t)}\right|$. This allows us to lower bound $\Ex{|E(H)|}$ as follows:
 \begin{align}
\Ex{|E(H)|} &\geq \frac{1}{8  (\log n)^{\epsilon_d} } \cdot \left| \mathcal{S}_1^{(t)} \right| \notag \\
 &\geq \frac{1}{8  (\log n)^{\epsilon_d} } \cdot \Phi^{(t)} \cdot \frac{1}{\left( (\log n)^{ \epsilon_d} + 1 \right)^{8} } \notag \\
 &\geq \frac{1}{9 (\log n)^{9 \epsilon_d} } \cdot \Phi^{(t)}. \label{eq:dro}
\end{align}

Consider now the auxiliary graph $H$ again. By definition, $H$ contains all those edges $\{u,v\}$ with the property that two canonical paths of $u \in V_1^{(t)}$ and $v \in V_2^{(t)}$ meet within the time-interval $[t+1,t+\beta]$. We consider all the edges of this bipartite graph $H=\left(V_1^{(t_1)} \cup V_2^{(t_1)},E(H)\right)$
in a round-by-round fashion as the load balancing protocol proceeds. We remove an edge once the load at one of the two endpoints changes (in which case the potential drops by at least $1$), and we also remove an edge
when the two corresponding canonical paths meet (noticing that, also in this case, the potential drops by
at least $1$ unless the edge was removed earlier).

%Fix now an arbitrary edge $\{u,v\} \in E(H)$ and let $s \in [t+1,t+\beta]$ be the round when the canonical paths of $u$ and $v$ meet for the first time. Then in round $s$  these two canonical paths contribute a decrease of at least one to the potential function $\Phi$ if $x_u^{(s)} \geq 11$ and $x_v^{(s)} \leq 9$. If this does not happen, then we know that either $x_u^{(s)} < 11$ or $x_v^{(s)} > 9$ which means that one of the two canonical paths has contributed a decrease of at least one before round $s$.
%Every edge $\{u,k\} \in H$ contributes to a reduction of the potential function by one, unless at the time where the collision occurs, the load of $u$ has been reduced by at least one or the load of $k$ has been increased by at least one. However, in this case, the canonical path starting from $u$ (or the canonical path starting from $v$) has already decreased the potential function by one.
%Put differently, every node in $V_1^{(t)}$
%which is not isolated in $H$ contributes at least $1/2$ to the reduction of the potential function. Since the maximum degree of graph $H$, denoted by $\Delta(H)$, is at most $\beta= (\log n)^{\epsilon_t}$, it follows that the number of nodes in $V_1^{(t)}$ which are not isolated is at least $|E(H)| / (\log n)^{\epsilon_t}$.
Since by the end of round $t+\beta$, all edges in $H$ are removed, we conclude that the potential drops by at least
\[
  \Phi^{(t)} - \Phi^{(t+\beta)} \geq 1 \cdot \frac{|E(H)|}{\Delta(H)} \geq  \frac{|E(H)|}{(\log n)^{\epsilon_t}}.
\]
Taking expectations on both sides and using our lower bound from \eq{dro} on $\Ex{|E(H)|}$ finally gives
\begin{align*}
 \Ex{\Phi^{(t)} - \Phi^{(t+\beta)} \, \middle| \, x^{(t)} = x } &\geq
 \frac{\Ex{|E(H)|}}{(\log n)^{\epsilon_t}} \geq \frac{1}{9  (\log n)^{\epsilon_t+9 \epsilon_d} } \cdot \Phi^{(t)},
\end{align*}
which completes the proof.
\end{proof}

Finally, we are now ready to prove \lemref{firstsparsificationlemma} by combining \obsref{potentialinitial}, \lemref{goodnodess}, and \lemref{inner}.

\begin{proof}%[Proof of \lemref{firstsparsificationlemma}]
In this proof, we focus on the random matching model, but exactly the same arguments apply to the balancing circuit model where one round of the random matching model corresponds to $d$ matchings in the balancing circuit model.

For the proof of the first statement, we choose the following constants:
$\epsilon_t := 3/25 , \epsilon_d := 13/900 $, and $\epsilon_b := \epsilon_t / 11 =  3/275 $. Fix a round $t_0 :=\tau_{\cont}(n,n^{-3})$. Define for any $t \geq t_0$ the event
%\NOTE{Th}{Minor Comment: It seems we could drop the factor $\frac{1}{2}$ below. Would this make it a bit clearer? \textcolor{blue}{I suggest to keep it, otherwise we need to check all the rest..}}
\[
\mathcal{E}^{(t)} := \left\{  \left|   V_1^{(t)} \right|  \geq \frac{1}{2} \left| \mathcal{S}_1^{(t)} \right|  - n \cdot \ce^{-(\log n)^{\epsilon_b}} \right\},
\]
and $\mathcal{E} := \bigcup_{t=t_0}^{t_0+\log n-1} \mathcal{E}^{(t)}$. By the first statement of \lemref{goodnodess},
\begin{align*}
 \Pro{ \mathcal{E}^{(t)} } &\geq 1 - \ce^{-(\log n)^{\epsilon_t/11} } - n^{-1},
\end{align*}
and taking the union bound over the time-interval $[t_0,t_0+\log n-1]$,
\begin{align*}
 \Pro{ \mathcal{E} } &\geq 1 - \ce^{-(\log n)^{\epsilon_t/12} }.
\end{align*}
By the first statement of \lemref{inner} with $\epsilon_b = \epsilon_t/11$, for any round $t \in [t_0,t_0+\log n]$ such that the load vector $x^{(t)}=x$  and the matchings in the time-interval $[t,t+\beta]$ satisfy $\mathcal{E}^{(t)}$ and $\Phi^{(t)} \geq n \cdot \ce^{-(\log n)^{\epsilon_t/12}}$, it holds that
\[
  \left(1 -  \frac{1}{9 (\log n)^{\epsilon_t + 9 \epsilon_d}} \right) \cdot \Phi^{(t)}\geq  4n \cdot \ce^{-(\log n)^{\epsilon_b}  } \cdot \left( (\log n)^{ \epsilon_d} + 1 \right)^{8},
\]
and thus,
\begin{align}
 \Ex{ \Phi^{(t+\beta)} \, \middle| \, x^{(t)} = x } &\leq  \left(1 -  \frac{1}{9 (\log n)^{\epsilon_t + 9 \epsilon_d}} \right) \cdot \Phi^{(t)}. \label{eq:equalityf}
\end{align}
%Furthermore,
%\begin{align*}
%\lefteqn{ \Ex{  \Phi^{(t+\beta)} \, \middle| \, x^{(t)}=x  } } \\ &\geq \Pro{  \Phi^{(t+\beta)} \geq \left(1-\frac{1}{81  (\log n)^{2\epsilon_t + 18 \epsilon_d}}\right) \Phi^{(t)} \, \middle| \, x^{(t)}=x }  \cdot \left(1-\frac{1}{81\cdot  (\log n)^{2\epsilon_t + 18 \epsilon_d}}\right) \Phi^{(t)},
%\end{align*}
By Markov's inequality, we have that
\begin{align*}
 \lefteqn{\Pro{  \Phi^{(t+\beta)} \geq \left(1-\frac{1}{81 (\log n)^{2\epsilon_t + 18 \epsilon_d}}\right) \Phi^{(t)} \, \middle| \,  x^{(t)}=x }} \\
 &\leq
 \frac{  1-\frac{1}{9 (\log n)^{\epsilon_t + 9 \epsilon_d}}    }{1-\frac{1}{81 (\log n)^{2\epsilon_t + 18 \epsilon_d}}}
 \leq 1 - \frac{1}{18 \cdot (\log n)^{\epsilon_t + 9 \epsilon_d}},
\end{align*}
where the last inequality holds since $\frac{1-\frac{1}{x}}{1-\frac{1}{x^2}} \leq 1 - \frac{1}{2x}$ for any $x \geq 2$.
Equivalently,
\begin{align}
\Pro{  \Phi^{(t+\beta)} \leq \left(1-\frac{1}{81  (\log n)^{2\epsilon_t + 18 \epsilon_d}}\right) \Phi^{(t)} \, \middle| \,  x^{(t)}=x } \geq \frac{1}{18 \cdot (\log n)^{\epsilon_t + 9 \epsilon_d}} =: p. \label{eq:crc}
\end{align}

Now divide the time-interval $[t_0,t_0+\log n]$ into $(\log n)^{1-\epsilon_t}$ consecutive sections of length $\beta:=(\log n)^{\epsilon_t}$ each. In the following, we call a section $k \in \{1,\ldots,(\log n)^{1-\epsilon_t}\}$ good if (i) $\mathcal{E}^{(t_0+(k-1) \cdot \beta)}$ does not occur, or (ii) $
\Phi^{(t_0+k\cdot\beta)} \leq   \left(1-\frac{1}{81  (\log n)^{2\epsilon_t + 18 \epsilon_d}}\right) \Phi^{(t_0+(k-1)\cdot\beta)}
$, or (iii) $\Phi^{(t_0+(k-1)\cdot\beta)} \leq n \cdot \ce^{-(\log n)^{\epsilon_t/12}}$. Then by \eq{crc}, every section $k \in \{1,\ldots,(\log n)^{1-\epsilon_t}\}$ is good with probability at least $p$, independently of all previous sections.
Consider a sequence of $(\log n)^{1- \epsilon_t}$ independent Bernoulli random variables $X_1,X_2,\ldots,X_{(\log n)^{1- \epsilon_t}}$ with success probability $p$ each and let $X:=\sum_{k=1}^{(\log n)^{1- \epsilon_t}} X_i$. Then,
\[
 \Ex{X} = (\log n)^{1- \epsilon_t} \cdot p = (\log n)^{1-\epsilon_t} \cdot \frac{1}{18 \cdot (\log n)^{\epsilon_t + 9 \epsilon_d}} = \frac{\log n}{18 \cdot (\log n)^{2 \epsilon_t + 9 \epsilon_d} }.
\]
Hence by a Chernoff bound we have that
\begin{align*}
 \Pro{ X \geq \frac{1}{2} \cdot \Ex{X} } &\geq 1 - \mathrm{exp}\left(-\frac{\Ex{X}}{8}\right) \geq
 1 - \mathrm{exp}\left(-(\log n)^{1-3 \epsilon_t-9 \epsilon_d}\right),
\end{align*}
and by the union bound it holds that
\begin{align}
 \Pro{ \left( X \geq \frac{1}{2} \cdot \Ex{X}\right) \cap \mathcal{E} } \geq 1 - \mathrm{exp}\left({-(\log n)^{1-3 \epsilon_t - 9 \epsilon_d}}\right) - \ce^{-(\log n)^{\epsilon_t/12}}. \label{eq:uboundone}
\end{align}
If $( X \geq \frac{1}{2} \cdot \Ex{X}) \cap \mathcal{E}$ occurs, then in at least $\frac{\log n}{36 \cdot (\log n)^{2 \epsilon_t + 9 \epsilon_d}}$ sections, the potential $\Phi$ decreases by a factor of $(1 - \frac{1}{81 (\log n)^{2 \epsilon_t+ 18 \epsilon_d}})$, unless the potential is already smaller than $n \cdot \ce^{-(\log n)^{\epsilon_t/12}}$.
Hence, if  the event $( X \geq \frac{1}{2} \cdot \Ex{X}) \cap \mathcal{E}$ occurs, then
\[
 \Phi^{(t_0+\log n)} \leq \max\left\{ \Phi^{(t_0)} \cdot \left(1 - \frac{1}{81 (\log n)^{2 \epsilon_t+ 18 \epsilon_d}} \right)^{\frac{\log n}{36 \cdot (\log n)^{2 \epsilon_t + 9 \epsilon_d}} }, n \cdot \ce^{-(\log n)^{\epsilon_t/12}}
 \right\}.
\]
Since the initial potential satisfies $\Phi^{(t_0)} \leq n \cdot ( (\log n)^{\epsilon_d} + 1)^{8}$ (cf.~\obsref{potentialinitial}), we have
\begin{align*}
  \Phi^{(t_0+\log n)} &\leq  \max \left\{
 n \cdot ( (\log n)^{\epsilon_d} + 1)^{8} \cdot \mathrm{exp}\left({- \frac{\log n}{2916 \cdot (\log n)^{4 \epsilon_t+27\epsilon_d}   }}\right),  n \cdot \ce^{-(\log n)^{\epsilon_t/12}}
  \right\} \\
  &\leq  \max \left\{
   n  \cdot \mathrm{exp}\left(- \frac{\log n}{(\log n)^{5 \epsilon_t+27\epsilon_d}   }\right), n \cdot \ce^{-(\log n)^{\epsilon_t/12}}
  \right\}.
\end{align*}
Recall that we choose $\epsilon_d = \frac{13}{900}$ and $\epsilon_t = \frac{3}{25}$, which gives that \[1 - 5 \epsilon_t - 27 \epsilon_d = \epsilon_t/12 = \frac{1}{100} .\]
 Thus, the  inequality above and \eq{uboundone} imply
\begin{align*}
\Pro{ \Phi^{(t_0+\log n)} \leq n \cdot \mathrm{e}^{-(\log n)^{ \frac{1}{100}   }} } &\geq 1 - 2 \cdot \ce^{-(\log n)^{\epsilon_t/12}},
\end{align*}
which completes the proof of the first statement.

The proof of the second statement is very similar. Here, $0 < \epsilon_d \leq 13/900$ and $0 < \sigma < 1$ are given, and we choose $\epsilon_t := (1 - 27 \epsilon_d - \sigma) \cdot 21/106 > 0$ and  $\epsilon_{b} := \epsilon_t/20 + \sigma$. By the second statement of \lemref{goodnodess} and the union bound over the time-interval $[t_0,t_0+\log n-1]$,
\begin{align*}
 \Pro{ \mathcal{E} } \geq 1 - \ce^{-(\log n)^{\epsilon_t/20}}.
\end{align*}
Then, by \lemref{inner}, for any load vector $x^{(t)}=x$ and sequence of matchings in the time-interval $[t+1,t+\beta]$  that satisfy $\mathcal{E}^{(t)}$ and $\Phi^{(t)} \geq n \cdot \ce^{-(\log n)^{\epsilon_t/21 + \sigma}}$, it holds that
\begin{align*}
 \Ex{ \Phi^{(t+\beta)} \, \middle| \, x^{(t)} = x } &\leq  \left(1 -  \frac{1}{9 (\log n)^{\epsilon_t + 9 \epsilon_d}} \right) \cdot \Phi^{(t)},
\end{align*}
and as before, we obtain
\begin{align}
\Pro{  \Phi^{(t+\beta)} \leq \left(1-\frac{1}{81\cdot  (\log n)^{2\epsilon_t + 18 \epsilon_d}}\right) \Phi^{(t)} \, \middle| \,  x^{(t)}=x } \geq \frac{1}{18 \cdot (\log n)^{\epsilon_t + 9 \epsilon_d}} =: p. \label{eq:crctwo}
\end{align}
In the following, we call a section $k \in \left\{1,\ldots,(\log n)^{1-\epsilon_t}\right\}$ good if (i) $\mathcal{E}^{(t_0+(k-1) \cdot \beta)}$ does not occur, or (ii) $
\Phi^{(t_0+k \cdot \beta)} \leq   \left(1-\frac{1}{81\cdot  (\log n)^{2\epsilon_t + 18 \epsilon_d}}\right) \Phi^{(t_0+(k-1) \cdot \beta)}
$, or (iii) $\Phi^{(t_0+(k-1) \cdot \beta)} \leq n \cdot \ce^{-(\log n)^{\epsilon_t/21 + \sigma}}$. With $X$ denoting the number of good sections, the Chernoff bound and the union bound yield
\begin{align*}
 \Pro{ \left(X \geq \frac{1}{2} \cdot \Ex{X}\right) \cap \mathcal{E} } \geq 1 - \ce^{-(\log n)^{1- 3 \epsilon_t - 9 \epsilon_d}} - \ce^{-(\log n)^{\epsilon_t/20}}.
\end{align*}
If $(X \geq \frac{1}{2} \cdot \Ex{X}) \cap \mathcal{E}$ occurs, then the potential at the end of the last section is upper bounded by
\begin{align*}
 \Phi^{(t_0+\log n)} &\leq \max \left\{
  n \cdot \mathrm{exp}\left({- \frac{\log n}{ (\log n)^{5 \epsilon_t+27 \epsilon_d}}}\right), n \cdot \ce^{-(\log n)^{\epsilon_t/21+\sigma}}
  \right\}.
\end{align*}
Due to our choice of $\epsilon_t$, we have \[1-5 \epsilon_t - 27 \epsilon_d = \epsilon_t/21 + \sigma,
\] and thus,
\begin{align*}
\Pro{\Phi^{(t_0+\log n)} \leq n \cdot \mathrm{exp}\left({-(\log n)^{ 1 - 27 \epsilon_d - \frac{105}{106} \cdot (1 - 27 \epsilon_d - \sigma)   }}\right)} &\geq 1 - \ce^{-(\log n)^{\epsilon_t/21}},
\end{align*}
and the proof of the second statement is complete.
\end{proof}

\subsection{Proof of \thmref{final}}\label{sec:final}

We start with an outline of the proof of \thmref{final} (confining ourselves to the case where the degree of $G$ is small).
 Some techniques used here are similar to the ones used in proving \thmref{firstsparsification}. For instance, we also  use a potential function and short phases of polylogarithmic length to show that the value of the potential function decreases. However, to prove \thmref{final}, we have to eliminate {\em all} tokens above a constant threshold, while in \thmref{firstsparsification} it is sufficient to make this number smaller than $n \cdot \ce^{-(\log n)^{1-\eps}}$, which is still polynomial in $n$.
 To achieve this improvement, (i) we employ a super-exponential potential function, (ii) switch to a token-based viewpoint, and (iii) exploit the sparseness of the load vector by using the relation between random walks and the movement of tokens from \secref{randomwalk}.

First, fix a node $u$ with $x_u^{(t)} \geq 2$ and consider the next $\beta=\polylog(n)$  rounds. Clearly, the number of nodes from which a token could meet with a token in $B_{\beta}(u)$, i.e., the set of nodes with distance at most $\beta$ to $u$, is at most $d^{2 \beta}$. Now if $\beta$ is small enough, then a relatively straightforward calculation shows that the total number of tokens located at nodes in $B_{\beta}(u)$ is at most $16 (\log n)^{\eps}$ (\lemref{smallh}). Moreover, if the nodes holding these $16 (\log n)^{\eps}$ tokens expand in the graph induced by the random matchings, then we expect the probability for {\em any} pair of these tokens meeting in round $t+\beta$ to be small. This in turn implies that the load along the canonical path starting from $u$ in round $t$ is decreased by at least one within the time-interval $[t+1,t+\beta]$.

Formally, we use a super-exponential potential function $\Lambda^{(t)}$
defined by \begin{align*}
 \Lambda^{(t)} := \sum_{u \in V} \Lambda_{u}^{(t)},
\end{align*}
and
\begin{equation*}
\Lambda_u^{(t)} :=
\begin{cases}
 \ce^{ \frac{1}{8} (\log n)^{1-\eps} \cdot x_{u}^{(t)}}& \mbox{if $x_u^{(t)} \geq 2$,} \\
 0 & \mbox{otherwise,}
\end{cases}
\end{equation*}
where here and in the remainder of the proof, $0 < \epsilon \leq 1/192$ is the constant from the statement of \thmref{final}.

%We shall prove that $\Lambda$ drops by a factor of $\ce^{\Omega(\beta \cdot (1-\lambda))}$ within $\beta$ rounds. Since initially, $\Lambda$ is polynomial in $n$ with high probability (\lemref{potential}), it follows that after $\Oh\left(\frac{\log n}{1-\lambda}\right)$ rounds, the value of $\Lambda$ becomes zero with high probability and hence the maximum load is at most one.

We first derive some basic properties of this potential function. After that, we turn to the more involved task of establishing an expected drop of the potential.
\begin{lemma}\label{lem:potential}
Let $x^{(0)}$ be an arbitrary, non-negative load vector. Then, the following statements  hold:
\begin{itemize}\itemsep 0pt

\item Let $\|x^{(0)}\|_{1} \leq n \cdot \ce^{- (\log n)^{1-\eps}}$. Then, in $t := \tau_{\cont}(n,n^{-2})$ rounds, it holds with probability at least $1-2 n^{-1}$ that $\Lambda^{(t)} \leq 9n^2$.
 \item If two nodes $u$ and $v$ are matched in round $t$ and $x_u^{(t-1)} - x_v^{(t-1)} \geq 2$, then
     \begin{align*}
      \Lambda_u^{(t-1)} + \Lambda_v^{(t-1)} - \Lambda_u^{(t)} - \Lambda_v^{(t)}   \geq  \Lambda_u^{(t-1)} \cdot \left(1 - \frac{2}{\ce^{\frac{1}{8} (\log n)^{1-\eps} }} \right).
\end{align*}
\item For any $u \in V$, the function $\Lambda_u^{(t)}$ is convex in $x_u^{(t)}$, and hence, $\Lambda^{(t)}$ is non-increasing in $t$.
\end{itemize}
\end{lemma}
\begin{proof}
We start with the proof of the first statement.
Since $t\geq \tau_{\cont}(n,n^{-2})$,
 the time-interval $[0,t]$ is $(n,n^{-2})$--smoothing with probability at least $1-n^{-1}$. Fix now all the matchings in $[1,t]$ and consider the orientations of the matching edges  in the time-interval $[1,t]$, which, together with $x^{(0)}$, determine
the load vector $x^{(t)}$. We fix any node $u \in V$ and consider $x_u^{(t)}$. By \lemref{rightprobability} and \lemref{boundbymixinglemma},
\begin{align*}
\Ex{x_u^{(t)}} &\leq  \| x^{(0)} \|_{1} \cdot \max_{v,w \in V} \M_{v,w}^{(t)} \leq 2 \cdot \ce^{-(\log n)^{1 - \eps}} < 1.
\end{align*}
By \lemref{chernofftoken}, we have for any $\delta \geq 1$,
\begin{align*}
  \Pro{ x_u^{(t)} \geq (1+\delta) \Ex{x_u^{(t)}} } \leq \left( \frac{\mathrm{e}}{\delta} \right)^{\delta \cdot \Ex{x_u^{(t)}} }.
\end{align*}
Choosing $\delta = \rho / \Ex{x_u^{(t)} }$ for any real number $\rho \geq 1$ yields
\begin{align*}
  \Pro{ x_u^{(t)} \geq 2\rho } \leq \left( \frac{\mathrm{e} \cdot \Ex{x_u^{(t)}}}{\rho} \right)^{\rho} \leq
    \ce^{-\frac{1}{2} (\log n)^{1 - \eps} \cdot \rho}.
\end{align*}
Hence,
 \begin{align*}
   \Ex{ \Lambda_{u}^{(t)} }
   %&= \sum_{k=2}^{\infty} \Pro{ x_{u}^{(t)} = k } \cdot \ce^{ \frac{1}{8}\cdot (\log n)^{1-\eps} \cdot k} \\
&\leq \sum_{k=2}^{\infty} \Pro{ x_{u}^{(t)} \geq k }
  \cdot \ce^{ \frac{1}{8} (\log n)^{1-\eps} \cdot k} \\
&\leq \sum_{k=2}^{\infty} \ce^{-\frac{1}{4} (\log n)^{1 - \eps} \cdot k} \cdot
\ce^{ \frac{1}{8} (\log n)^{1-\eps} \cdot k} \leq \sum_{k=2}^{\infty} \ce^{-k/8} \leq \frac{1}{1-\ce^{-1/8}} < 9.
\end{align*}
Therefore, we have
 \[\Ex{ \Lambda^{(t)} } = \sum_{u \in V} \Ex{ \Lambda_u^{(t)} } \leq 9n, \]  and by Markov's inequality, it holds that
\[
 \Pro{ \Lambda^{(t)} \geq 9 n^2 } \leq n^{-1}.
\]
This completes the proof of the first statement.

For the second statement, notice that
\begin{align*}
  \Lambda_u^{(t-1)} + \Lambda_v^{(t-1)} - \Lambda_u^{(t)} - \Lambda_v^{(t)}
  %&\geq \Lambda_u^{(t-1)} \cdot \left(1 - \frac{\Lambda_u^{(t)} + \Lambda_v^{(t)} - \Lambda_v^{(t-1)} }{\Lambda_u^{(t-1)}} \right) \\
  &\geq \Lambda_u^{(t-1)} \cdot \left(1 - \frac{\Lambda_u^{(t)} + \Lambda_v^{(t)}}{\Lambda_u^{(t-1)}} \right) \\
  &\geq \Lambda_u^{(t-1)} \cdot
  \left(1 - \frac{2 \cdot \ce^{\frac{1}{8}  (\log n)^{1-\eps}
  \cdot (x_u^{(t-1)}-1)}}{\ce^{\frac{1}{8}  (\log n)^{1-\eps}
  \cdot x_u^{(t-1)}}} \right) \\ &= \Lambda_u^{(t-1)} \cdot \left(1 - \frac{2}{\ce^{\frac{1}{8}  (\log n)^{1-\eps}
  }} \right),
\end{align*}
where, in the second inequality, we used the fact that $x_u^{(t)} \leq x_u^{(t-1)} - 1$ and $x_v^{(t)} \leq x_u^{(t-1)} - 1$.

For the third statement, it suffices to note that $\Lambda_u^{(t)}$ is convex since $x \mapsto \ce^{\frac{1}{8} (\log n)^{1-\eps} \cdot x}$ is convex and $2 \cdot \ce^{\frac{1}{8} (\log n)^{1-\eps} \cdot 2} \leq
\ce^{\frac{1}{8} (\log n)^{1-\eps} \cdot 3}$, which holds for sufficiently large $n$. Therefore, tokens being transferred along a matching edge $\{u,v\} \in \M^{(t)}$ do not increase $\Lambda_u^{(t)}+\Lambda_v^{(t)}$, and hence, $\Lambda^{(t)}$ is non-increasing in $t$.
\end{proof}

%For the remainder of the proof, define $t_0 := \tau_{\cont}(n,n^{-2}) = \Oh\left( \frac{\log n}{1-\lambda}\right)$. Roughly speaking, these $t_0$ rounds are used to ensure a coarse balancing of the sparse initial load vector. More precisely, the next lemma proves that every small set $S$ contains only a polylogarithmic number of tokens.

\begin{lemma}\label{lem:smallh}
Consider the random matching model. Fix an arbitrary, non-negative load vector $x^{(0)}$ with $\|x^{(0)} \|_{1} \leq n \cdot \ce^{-(\log n)^{1- \eps}}$. Assume that the time-interval $[0,t]$ is $(n,n^{-2})$--smoothing. Then, for any subset of nodes $S \subseteq V$ with $|S| \leq 4 \cdot \ce^{\frac{1}{2} (\log n)^{1-\eps} }$, it holds that
\begin{align*}
  \Pro{ \sum_{u \in S} x_{u}^{(t)} \geq 16\cdot (\log n)^{\eps} } \leq n^{-7}.
\end{align*}
\end{lemma}

\begin{proof}
  Let us consider the total number of tokens located in $S$ at the end of round  $t$. Define
  \[
    Z:= \sum_{i=1}^{\| x^{(0)} \|_{1}   } \chi_{w_i^{(t)} \in S},
  \]
where $w_i^{(t)}$ is the location (node) of the token $i$ at the end of round $t$.
 Since the time-interval $[0,t]$ is $(n,n^{-2})$--smoothing, every token is located at a fixed node in $S$ with probability at most $2/n$ (cf.~\lemref{boundbymixinglemma}), which implies  \begin{align*}
   \Ex{Z} & \leq \left\| x^{(0)} \right\|_{1} \cdot \frac{2 |S|}{n} \leq \ce^{-(\log n)^{1-\eps}} \cdot 8 \cdot \ce^{\frac{1}{2} (\log n)^{1-\eps}} = 8 \cdot \ce^{-\frac{1}{2} (\log n)^{1-\eps}} < 1.
  \end{align*}
 By \lemref{chernofftoken}, we have that
\[
  \Pro{Z \geq (1+\delta)\cdot \Ex{Z}} \leq
  \left(\frac{\ce}{\delta}\right)^{\delta\cdot\Ex{Z}},
\]
 and choosing $\delta = 15 (\log n)^{\eps}/  \Ex{Z}$ gives
  \begin{align*}
   \Pro{ Z \geq 16 (\log n)^{\eps} }
  & \leq  \left(\frac{\ce\cdot \Ex{Z}}{15(\log n)^{\eps} }\right)^{15 (\log n)^{\eps} }  \leq \ce^{-\frac{1}{2} (\log n)^{1-\eps} \cdot 15(\log n)^{\eps} } \leq n^{-7}.
  \end{align*}
 This completes the proof of the lemma.
\end{proof}

After these preparations, we are now able to analyze the drop of the potential function $\Lambda$. First, we consider the case where the graph is sparse, i.e., the degree satisfies $d \leq \ce^{(\log n)^{1/2}}$, and after that, we consider the dense case, where $d > \ce^{(\log n)^{1/2}}$.

\paragraph{Analysis for Sparse Graphs (Random Matching Model)}
For any node $u\in V$ and integer $r$, we define the ball $B_r(u)$ as the set of nodes whose distance to $u$ is at most $r$, i.e., $B_{r}(u) := \{ v \in V \colon \dist(u,v) \leq r \}$. Let $t_0:=\tau_{\cont}(n,n^{-2})=\Oh \big( \frac{\log n}{1-\lambda} \big)$. For any round $t \geq t_0$, define the event $\mathcal{E}^{(t)}$ as
\[
  \mathcal{E}^{(t)} := \bigcup_{u \in V} \left\{ \sum_{v \in B_{r}(u)} x_v^{(t)} \leq 16 \cdot (\log n)^{\eps}  \right\},
\]
where $r := (\log n)^{1/3}$.

%Note that if $\mathcal{E}_t$ happens, then the total number of tokens located at the nodes in each $B_{r}(u)$ is small.

\begin{lemma}\label{lem:smallt}
Let $G$ be any $d$-regular graph with $d \leq \ce^{(\log n)^{1/2}}$, and consider the random matching model.
Let $x^{(0)}$ be a non-negative load vector  with
$\| x^{(0)} \|_{1} \leq n\cdot \ce^{-(\log n)^{1-\epsilon}}$. Then, it holds that
\[
 \Pro{ \bigcap_{t=t_0}^{t_0+n-1} \mathcal{E}^{(t)}   } \geq 1 - 2 n^{-1}.
\]
\end{lemma}
\begin{proof}
Since $t_0 = \tau_{\cont}(n,n^{-2})$ by definition, the time-interval
$[0,t_0]$ is $(n,n^{-2})$--smoothing with probability at least $1-n^{-1}$. We assume that this event happens. Clearly, for any round $t \geq t_0$, the time-interval $[0,t]$ is also $(n,n^{-2})$--smoothing. Since for every $u \in V$ and $ \epsilon \leq \frac{1}{192}$,
\[ \left| B_{(\log n)^{1/3}}(u) \right| \leq d^{(\log n)^{1/3}} \leq
\left( \ce^{(\log n)^{1/2} } \right)^{(\log n)^{1/3}} = \ce^{(\log n)^{5/6}} < 4 \cdot \ce^{\frac{1}{2}  (\log n)^{1-\epsilon}}, \]
and, therefore, by \lemref{smallh} and the union bound over $n$ nodes, it holds that $
 \Pro{ \neg \mathcal{E}^{(t)}  } \leq n^{-6}.
$
By the union bound over the time-interval $[t_0,t_0+n-1]$, we have
\[
 \Pro{ \bigcup_{t=t_0}^{t_0+n-1} \left(\neg \mathcal{E}^{(t)} \right)} \leq n^{-5},
\]
which yields the claim of the lemma.
\end{proof}

Next, we lower bound the potential drop of $\Lambda$ for load vectors $x^{(t)}$ satisfying $\mathcal{E}^{(t)}$.

\begin{figure}
\centering
\begin{tikzpicture}[xscale=0.7,yscale=0.7,thick, >=stealth, knoten/.style={circle, scale=0.4, draw=black, fill=black}, rknoten/.style={circle, scale=0.5, draw=red}]
    %  \draw[white] (-0.5,-0.5) -- (6,-0.5) -- (6,7.8) -- (-0.5,7.8);
 \draw (10,5) circle (6cm);
 \draw (10,5) circle (3cm);
  \node[knoten] (1) at (10,5) [label=below:$u$] {};
  \node[knoten] (2) at (7,2) [label=below:$w_j^{(t)}$] {};
  \node[knoten] (3) at (11.5,9) [label=below:$w_i^{(t)}$] {};
  \node[knoten] (4) at (7.2,8) [label=left:$w$] {};
  \node[knoten] (5) at (17.5,8.5) [label=below:$w_k^{(t)}$] {};
  \draw[gray, thin, ->] (10,5) -- (13,5);
  \draw[gray, thin, ->] (10,5) -- (15.5,2.7);

  \draw[->, dashed] (7,2.75) -- (6.75,3.75) -- (6.5, 4.5) -- (6.5,5.7) -- (7.25,6.4) -- (7.5,6.3) -- (8,5.8) -- (8.5,5.5) -- (8.75, 5.7) -- (9,6.5) -- (8.5,7) -- (7.5,7.25) -- (7.2,7.9);

  \draw[->, dashed] (11.5,9.75) -- (12.25,9.5) -- (12.5, 9.25) -- (12.5,8) -- (11.85,7.2) -- (11.35, 7) -- (11.2, 7) -- (10.9,7.5) -- (10,8.3) -- (9.5,8.5) -- (8.5,8) -- (7.75,7.5) -- (7.2,7.9);

  \draw[->, dashed] (17.2,9.15) -- (16.4,9) -- (16,8.9) -- (15.8,8.6) -- (15.2,8) -- (14.6,7.8) -- (14,7.6) -- (13.5,7.7) -- (12.8,7.5);

  \draw[->] (10,5) to (9.5,4.5) to (9.25,3.75) to (9,3) to (9.25,2.25);

  \draw[fill=white] (11,9.25) rectangle (12,9.85);
  \draw[fill=white] (6.5,2.25) rectangle (7.5,2.85);

  \draw[fill=white] (9.5,5.25) rectangle (10.5,5.85);
  \draw[fill=white] (9.5,5.85) rectangle (10.5,6.45);
  \draw[fill=white] (9.5,6.45) rectangle (10.5,7.05);
  \draw[fill=white] (9.5,7.05) rectangle (10.5,7.65);

  \draw[fill=white] (17,8.75) rectangle (18,9.35);

 \node at (11.85,5) [above] {$B_{\beta}(u)$};
 \node at (14.4,3.1) [above, rotate=-20] {$B_{2\beta}(u)$};

 \node at (11.5,9.55) [] {$i$};
 \node at (7,2.55) [] {$j$};

 \node at (17.5,9.05) [] {$k$};

  \end{tikzpicture}

\caption{The above diagram illustrates the proof of \lemref{lowdegree}, the key step in proving \thmref{final}. We show that the load along the canonical path starting from $u$ decreases by at least one within the time-interval $[t+1,t+\beta]$ by analyzing all pairs of tokens $i,j$ in $B_{2\beta}(u)$ and showing that none of them is located on the same node $w$ at the end of round $[t+\beta]$. Note that tokens $k$ with $w_{k}^{(t)} \notin B_{2\beta}(u)$ cannot intersect with the canonical path from $u$ within $[t+1,t+\beta]$.
}\label{fig:final}
\end{figure}

\begin{lemma}\label{lem:lowdegree}
Let $G$ be any $d$-regular graph with $d \leq \ce^{(\log n)^{1/2}}$, and consider the random matching model. Let $x^{(t)}=x$ be any fixed load vector that satisfies $\mathcal{E}^{(t)}$. Then, the following two statements hold:
\begin{itemize}\itemsep 0pt
\item If $\frac{1}{1-\lambda(\mathbf{P})} \leq (\log n)^{1/4}$, then
define $\beta:=\frac{(\log n)^{2 \eps}}{ 1-\lambda(\mathbf{P})}$. Then, there is a constant $c>0$ such that
\begin{align*}
 \Ex{ \Lambda^{(t+\beta)} \, \mid \, x^{(t)} = x } &\leq \ce^{-\frac{c}{4} (\log n)^{2 \eps}} \cdot \Lambda^{(t)},
\end{align*}
\item If $\frac{1}{1-\lambda(\mathbf{P})} > (\log n)^{1/4}$, then define
$\beta := (\log n)^{48 \eps}$. Then,
\begin{align*}
 \Ex{ \Lambda^{(t+\beta)}  \, \mid \, x^{(t)} = x  } &\leq (\log n)^{-\eps}\cdot \Lambda^{(t)}.
\end{align*}
\end{itemize}
\end{lemma}

\begin{proof}
Note that in both cases, $\eps \leq \frac{1}{192}$ is small enough so that $2 \cdot \beta \leq (\log n)^{1/3}= r$ holds.

\textbf{Case 1:} $\frac{1}{1-\lambda(\mathbf{P})} \leq (\log n)^{1/4}$.
Fix any node $u \in V$ with $x_u^{(t)} \geq 2$.
Our goal is to prove that the stack of $x_u^{(t)}$ tokens,
which is located at node $u \in V$, gets smoothed out in the next $\beta$ rounds. To this end, we consider all tokens in the set $B_{2\beta}(u)$ and bound the probability that any two tokens share the same location in round $t+\beta$ (see \figref{final} for an illustration).

By \corref{muthu}, it holds for any node $v$ that
\begin{align*}
  \Pro{ \bigcap_{w \in V} \left\{ \left| \bM_{v,w}^{[t+1,t+\beta]} - \frac{1}{n} \right| \leq \ce^{-c\cdot (\log n)^{2 \epsilon}} \right\}  } \geq 1 - \ce^{-c\cdot(\log n)^{2 \epsilon}},
\end{align*}
where $c>0$ is a constant. Hence, since $\epsilon < 1/4$,
\begin{align}
  \Pro{ \left\| \bM_{v,.}^{[t+1,t+\beta]}\right\|_{\infty} \leq \ce^{-\frac{c}{2}\cdot(\log n)^{2 \eps}}  } \geq 1 - \ce^{-c\cdot(\log n)^{2 \epsilon} }. \label{eq:goodexpansion}
\end{align}

Let us now define the following event:
\[
 \mathcal{A}_u := \bigcap_{v \in B_{2\beta}(u)} \bigg( \left\{ x_v^{(t)} = 0 \right\} \,\bigcup\, \left\{ \left\| \bM_{v,\cdot}^{[t+1,t+\beta]} \right\|_{\infty} \leq \ce^{-\frac{c}{2}\cdot (\log n)^{2 \eps}} \right\} \bigg).
\]
Intuitively, the event $\mathcal{A}_u$ ensures that,
 when looking at the graph induced by the matchings in the interval $[t+1,t+\beta]$, the neighborhood of all nodes in $B_{2\beta}(u)$ containing a token in round $t$ expands.
Since the load vector $x^{(t)}$ satisfies $\mathcal{E}^{(t)}$ by the precondition of the lemma, the total number of tokens in $B_{2\beta}(u)$ is upper bounded by $16\cdot (\log n)^{\eps}$. Consequently, the number of nodes with at least one token is at most $16\cdot (\log n)^{\eps}$, and thus,
\begin{align}
 \Pro{ \mathcal{A}_u }
 &\geq 1 - 16\cdot (\log n)^{\eps} \cdot \ce^{-c \cdot (\log n)^{2 \eps}}
  \geq 1 - \ce^{-\frac{c}{2} \cdot (\log n)^{2 \eps}}\label{eq:upperbounda}.
\end{align}
Let $\tilde{\mathcal{T}}$ denote the set of tokens located in $B_{2\beta}(u)$ at the end of round $t$. Consider any pair of tokens $i,j \in \tilde{\mathcal{T}}$. We upper bound the probability that $i$ and $j$ are located on the same node in round $t+\beta$, conditioned on $\mathcal{A}_u$ as follows:
\begin{align}
\Pro{ w_i^{(t+\beta)} = w_j^{(t+\beta)}  \, \mid \, \mathcal{A}_u  }
 &= \sum_{w \in V} \Pro{ \left\{ w_i^{(t+\beta)} = w \right\} \bigcap  \left\{ w_j^{(t+\beta)} = w \right\} \, \mid \, \mathcal{A}_u }\notag \\
&\leq \sum_{w \in V} \Pro{ w_i^{(t+\beta)} = w \, \mid \, \mathcal{A}_u } \cdot \Pro{  w_j^{(t+\beta)} = w \, \mid \, \mathcal{A}_u } \notag\\
&\leq \max_{k \in V}\left\{ \Pro{ w_i^{(t+\beta)} = k \, \mid \, \mathcal{A}_u } \right\}\cdot \sum_{w \in V} \Pro{  w_j^{(t+\beta)} = w \, \mid \, \mathcal{A}_u  }\notag \\
&\leq  \ce^{-\frac{c}{2}\cdot (\log n)^{2 \eps} } \cdot 1 = \ce^{-\frac{c}{2}\cdot (\log n)^{2 \eps} }\label{eq:boundcollision},
\end{align}
where the first inequality follows by \lemref{randomwalklemma}, and the last inequality follows from the definition of $\mathcal{A}_u$.
Since $x^{(t)}$ satisfies $\mathcal{E}^{(t)}$ and $B_{2\beta}(u) \subseteq B_{(\log n)^{1/3}}(u)$, there are at most
$16\cdot (\log n)^{\eps}$ tokens in $B_{2\beta}(u)$. Hence,
\begin{align*}
 \lefteqn{ \Pro{ \bigcup_{i,j \in \tilde{T}} \left\{ w_i^{(t+\beta)} = w_j^{(t+\beta)}     \right\} } } \\
 &\leq \Pro{ \bigcup_{i,j \in \tilde{T}} \left\{ w_i^{(t+\beta)} = w_j^{(t+\beta)}     \right\} ~~ \bigg| ~~ \mathcal{A}_u  } + \Pro{\neg \mathcal{A}_u  } \\
 &\leq 256 (\log n)^{2 \epsilon} \cdot \max_{i,j \in \tilde{T}}
 \Pro{ w_i^{(t+\beta)} =   w_j^{(t+\beta)}  \, \middle| \, \mathcal{A}_u } + \ce^{-\frac{c}{2}\cdot (\log n)^{2 \eps}} \\
 &\leq 256 (\log n)^{2 \epsilon} \cdot \ce^{- \frac{c}{2}\cdot (\log n)^{2 \eps}} + \ce^{-\frac{c}{2} \cdot (\log n)^{2 \eps}} \leq \ce^{-\frac{c}{3}\cdot (\log n)^{2 \eps}},
\end{align*}
where the second to last inequality follows from \eq{boundcollision}.
Note that in case there are no tokens in $B_{2 \beta}(u)$ that share the same node in round $t+\beta$, then following the canonical path $\mathcal{P}_{u}^{(t)}$, there is at least one round $t' \in [t+1,t+\beta]$ such that $x_{\mathcal{P}_{u}^{(t)}(t')}^{(t')} < x_{\mathcal{P}_{u}^{(t)}(t'-1)}^{(t'-1)}$, i.e., the load of the canonical path $\mathcal{P}_u^{(t)}$ decreases.

Let us now consider the decrease of $\Lambda^{(t)}$ by taking into account all nodes $u \in V$ with $x_{u}^{(t)} \geq 2$.
Since two canonical paths that meet in  round $t'$ cannot reduce their value simultaneously, we obtain by the second statement of \lemref{potential} that
\begin{align}
 \Ex{ \Lambda^{(t)} - \Lambda^{(t+\beta)} \, \mid \, x^{(t)} = x} &\geq \sum_{u \colon x_u^{(t)} \geq 2}  \Lambda_u^{(t)} \cdot \left( 1 - \frac{2}{\ce^{\frac{1}{8} (\log n)^{1-\eps}}} \right) \cdot \left( 1-\ce^{-\frac{c}{3}\cdot (\log n)^{2 \eps}} \right) \notag \\
 &\geq  \sum_{u \colon x_u^{(t)} \geq 2}  \Lambda_u^{(t)} \cdot \left( 1-\ce^{-\frac{c}{4}\cdot (\log n)^{2 \eps}} \right) = \left( 1-\ce^{-\frac{c}{4}\cdot (\log n)^{2 \eps}} \right) \cdot \Lambda^{(t)}, \label{eq:lambdadrop}
\end{align}
where the second inequality holds since $\epsilon < 1/3$. This completes the proof of the first case.
%Rearranging gives
% \[
% \Ex{ \Lambda^{(t+\beta)} \, \mid \, x^{(t)} = x} \leq  \ce^{-\frac{c}{4} (\log n)^{2 \eps}}  \cdot \Lambda^{(t)},
% \]
%which finishes the first case.

\textbf{Case 2:} $\frac{1}{1-\lambda(\mathbf{P})} > (\log n)^{1/4}$.
We proceed similarly as in the first case, but here we have $\beta= (\log n)^{48 \eps}$.
By \lemref{matchinglocalexpansion}, it holds that for any node $v \in V$
\begin{align}
 \Pro{ \left\| \bM_{v,\cdot}^{[t+1,t+\beta]}   \right\|_{2}^2 \leq (\log n)^{-8 \eps}  } \geq 1 - \ce^{- (\log n)^{32 \eps}.   } \label{eq:badexpansion}
\end{align}
We redefine $\mathcal{A}_u$ as
\[
 \mathcal{A}_u := \bigcap_{ v \in B_{2\beta}(u)} \bigg( \left\{ x_v^{(t)} = 0 \right\} \, \bigcup \, \left\{ \left\| \bM_{v,\cdot}^{[t+1,t+\beta]} \right\|_{\infty} \leq (\log n)^{-4 \eps} \right\} \bigg)  .
\]
As $x^{(t)}$ satisfies $\mathcal{E}^{(t)}$ by the precondition of the lemma, the set $B_{2\beta}(u)$ contains at most $16\cdot (\log n)^{\eps}$ tokens in round $t$. Similar to \eq{upperbounda}, we have
\begin{align*}
 \Pro{ \mathcal{A}_u  } \geq 1 - 16\cdot (\log n)^{\eps} \cdot \ce^{- (\log n)^{32 \eps}   } \geq 1 - \ce^{- \frac{1}{2} (\log n)^{32 \eps}  }.
\end{align*}
As in the first case, we conclude that
\begin{align*}
\Pro{ \bigcup_{i,j \in \tilde{T}} \left\{ w_i^{(t+\beta)} = w_j^{(t+\beta)}     \right\} } &\leq \Pro{ \bigcup_{i,j \in \tilde{T}} \left\{ w_i^{(t+\beta)} = w_j^{(t+\beta)}     \right\} \, \bigg| \, \mathcal{A}_u } + \Pro{\neg  \mathcal{A}_u } \\
 &\leq  256 \cdot (\log n)^{2 \eps} \cdot (\log n)^{-4 \eps} + \ce^{-\frac{1}{2} (\log n)^{32 \eps}   } \leq (\log n)^{-\frac{3}{2} \eps},
\end{align*}
and as in \eq{lambdadrop}, we conclude that
\[
 \Ex{ \Lambda^{(t + \beta)} \, \mid \, x^{(t)} = x } \leq (\log n)^{-\eps} \cdot \Lambda^{(t)},
\]
which finishes the proof.
\end{proof}

\paragraph{Analysis for Dense Graphs (Random Matching Model)}
We now consider the dense case where the degree of the graph satisfies $d > \ce^{(\log n)^{1/2}}$.
This case turns out to be easier than the sparse case, as the average load around every node is smaller than $1/2$ for all nodes and all rounds with high probability. Hence, most  neighbors of each node have zero load, which implies that, as long as the maximum load is larger than $1$, there is an expected exponential drop of the potential function within a single round.

To formalize this, let $t_0:=\tau_{\cont}(n,n^{-2})=\Oh \big( \frac{\log n}{1-\lambda} \big)$. Define for any round $t \geq t_0$ the following event:
\[
 \mathcal{F}^{(t)} := \bigcap_{u \in V} \left\{ \sum_{v \in N(u)} x_v^{(t)} \leq \frac{d}{2} \right\}.
\]
Similar to \lemref{smallt}, we now prove the following lemma:

\begin{lemma}\label{lem:bigt}
Let $G$ be any $d$-regular graph with $d > \ce^{(\log n)^{1/2}}$, and consider the random matching model. Let $x^{(0)}$ be a non-negative load vector $x^{(0)}$ with
$\| x^{(0)} \|_{1} \leq n\cdot \ce^{-(\log n)^{1-\epsilon}}$. Then, it holds that
\[
 \Pro{ \bigcap_{t=t_0}^{t_0+n-1} \mathcal{F}^{(t)}   } \geq 1 - 2 n^{-1}.
\]
\end{lemma}
\begin{proof}
Since $t_0 = \tau_{\cont}(n,n^{-2})$ by definition, the time-interval
$[0,t_0]$ is $(n,n^{-2})$--smoothing with probability at least $1-n^{-1}$. For the rest of the proof, assume that this event happens. Clearly, this implies that for any round $t \geq t_0$, the time-interval $[0,t]$ is also $(n,n^{-2})$--smoothing.
Let us first lower bound
$\Pro{   \mathcal{F}^{(t)}}$ for any fixed round $t \geq t_0$.
Consider any round $t \geq t_0$ and fix a node $u \in V$. Let
\[
  Z := \sum_{v \in N(u)} x_v^{(t)} = \sum_{i=1}^{\left\|x^{(0)}\right\|_1} \chi_{w_i^{(t)} \in N(u)}.
\]
Since $[0,t]$ is $(n,n^{-2})$--smoothing, every token is located at a fixed node in $S$ with probability at most $2/n$ (cf.~\lemref{boundbymixinglemma}), and therefore,
\[
  \Ex{Z} \leq |N(u)| \cdot \|x^{(0)}  \|_1 \cdot \frac{2}{n} \leq 2 d \cdot \ce^{-(\log n)^{1-\eps}}.
\]
By \lemref{chernofftoken},
$
 \Pro{ Z \geq (1+ \delta) \Ex{Z} } \leq \left( \frac{\ce}{\delta} \right)^{\delta \Ex{Z}},
$
and choosing $\delta = d/(4 \Ex{Z})$ yields
\begin{align*}
 \Pro{ Z \geq d/2 } &\leq 	\left( \frac{4\cdot\mathrm{e}\cdot \Ex{Z}}{d} \right)^{\frac{d}{4} } = n^{-\omega(1)}.
\end{align*}
By the union bound and recalling that with probability at least $1-n^{-1}$, all the intervals $[0,t], t \geq t_0$, are $(n,n^{-2})$-smoothing, and
\[
  \Pro{ \bigcap_{t=t_0}^{t_0+n-1}  \mathcal{F}^{(t)}  } = 1 - n^{-\omega(1)} - n^{-1} \geq 1 - 2n^{-1},
\]
which completes the proof.
\end{proof}

The next lemma is similar to \lemref{lowdegree} from the sparse graph case.
\begin{lemma}\label{lem:highdegree}
Let $G$ be any $d$-regular graph with  $d \geq \ce^{(\log n)^{1/2}}$, and consider the random matching model. Assume that $x^{(t)}=x$ is any non-negative load vector that satisfies $\mathcal{E}^{(t)}$. Then, it holds that
\[
 \Ex{ \Lambda^{(t+1)} \, \mid \, x^{(t)}=x } \leq \left( 1-\frac{c_{\min}}{4} \right) \cdot \Lambda^{(t)}.
\]
\end{lemma}
\begin{proof}
%Consider now the potential $\Lambda^{(t+1)}$ conditioned on the event that the load vector $x^{(t)}$ satisfies $\mathcal{E}_t$.
By the second statement of \lemref{potential},
\begin{align*}
 \Ex{ \Lambda^{(t)} - \Lambda^{(t+1)} \, \mid \, x^{(t)}=x}
 &\geq
 \sum_{\substack{u \in V \colon \\ x_u^{(t)} \geq 2}} \sum_{\substack{v \in N(u) \colon \\ x_v^{(t)} = 0}} \Pro{ \{u,v\} \in \M^{(t)} }
 \cdot \left(	\Lambda_u^{(t)} \cdot \left(1 - \frac{2}{\ce^{\frac{1}{8} (\log n)^{1-\eps} }} \right)  \right) \\
 &\geq \frac{1}{2} \cdot \sum_{\substack{u \in V \colon \\ x_u^{(t)} \geq 2}} \sum_{\substack{v \in N(u) \colon \\ x_v^{(t)} = 0}} \frac{c_{\min}}{d}
 \cdot \Lambda_u^{(t)}.
\end{align*}
Since $x^{(t)}$ satisfies $\mathcal{E}^{(t)}$, we know that for every node $u \in V$, at least half of the neighbors $v \in N(u)$ satisfy $x_v^{(t)} = 0$. Therefore,
\begin{align*}
 \Ex{ \Lambda^{(t)} - \Lambda^{(t+1)}  \, \mid \, x^{(t)}=x } &\geq \frac{c_{\min}}{4} \cdot \sum_{\substack{u \in V \colon \\ x_u^{(t)} \geq 2}} \Lambda_u^{(t)}  = \frac{c_{\min}}{4} \cdot \Lambda^{(t)},
\end{align*}
and thus,
$
  \Ex{ \Lambda^{(t+1)} \, \mid \, x^{(t)}=x} \leq (1-\frac{c_{\min}}{4}) \cdot \Lambda^{(t)}
  $.
\end{proof}

Finally, we are able to prove \thmref{final}.

\begin{proof}%[Proof of \thmref{final}]
We give the proof for the random matching model  first (where $d$ denotes the degree of the graph) and consider the balancing circuit model at the end of the proof (where $d$ represents the number  of matchings applied periodically).

\textbf{Case~1: (Random Matching Model) $d \leq \ce^{(\log n)^{1/2}}$ and $\frac{1}{1-\lambda(\mathbf{P})} \leq (\log n)^{1/4}$.}
By the first statement of~\lemref{potential}, it holds with probability $1-2n^{-1}$ that
$ \Lambda^{(t_0)} \leq 9n^{2} $,
where $t_0 := \tau_{\cont}(n,n^{-2})$. Let $\mathcal{E} := \bigcap_{t=t_0}^{t_0+n-1} \mathcal{E}^{(t)}$.
By \lemref{smallt}, it holds that
\[
  \Pro{ \mathcal{E} } \geq 1 - 2n^{-1}.
\]
Let $\beta:=\frac{(\log n)^{2 \epsilon}}{1-\lambda(\mathbf{P})}$ and divide the time-interval $\left[t_0,t_0+\frac{C \log n}{1-\lambda(\mathbf{P})}\right]$ into $C \log n/(\log n)^{2 \epsilon}$ consecutive sections of length $\beta$ each, where $C > 0$ is
 a large constant.
By \lemref{lowdegree}, we have for any section $k \in \left\{1,\ldots,C \log n/(\log n)^{2 \epsilon}\right\}$,
\[
  \Ex{ \Lambda^{(t_0+k\cdot \beta)} \, \mid \, x^{(t_0+(k-1) \cdot \beta)}=x } \leq \ce^{- \frac{c}{4}\cdot (\log n)^{2 \epsilon} } \cdot \Lambda^{(t_0+(k-1) \cdot \beta)},
\]
where $x$ is any load vector that satisfies $\mathcal{E}^{(t+(k-1) \cdot \beta)}$. Hence, by Markov's inequality,
\begin{align}
 \Pro{  \Lambda^{(t_0+k\cdot \beta)} \leq 2 \cdot \ce^{- \frac{c}{4}\cdot (\log n)^{2 \epsilon} } \cdot \Lambda^{(t_0+(k-1) \cdot \beta)} \, \mid \, x^{(t_0+(k-1) \cdot \beta)}=x }  \geq \frac{1}{2}. \label{eq:goodlower}
\end{align}
In the following, we call a section $k \in \left\{1,\ldots,C \log n/(\log n)^{2 \epsilon}\right\}$ good if (i) $\mathcal{E}^{(t_0+(k-1) \cdot \beta)}$ does not occur, or (ii) $\Lambda^{(t_0+k\cdot \beta)} \leq 2 \ce^{- \frac{c}{4}\cdot (\log n)^{2 \epsilon} } \cdot \Lambda^{(t_0+(k-1) \cdot \beta)}$. Then, by \eq{goodlower}, every section $k$ is good with probability at least $1/2$, independently of all previous sections. Let $X$ be the random variable denoting the number of good sections. Then, $\Ex{X} \geq \frac{C \log n}{(\log n)^{2 \epsilon}} \cdot \frac{1}{2}$, and by the Chernoff bound,
\[
 \Pro{X \geq \frac{C \log n}{4 (\log n)^{2 \epsilon}} } \geq \Pro{ X \geq \frac{1}{2} \cdot \Ex{X} } \geq 1 - \ce^{-\frac{\Ex{X}}{8}} \geq 1 - \ce^{-(\log n)^{1-3 \eps}},
\]
and by the union bound,
\[
 \Pro{ \left( X \geq \frac{1}{2} \cdot \Ex{X} \right) \bigcap \mathcal{E} } \geq 1 - 2n^{-1} - \ce^{-(\log n)^{1-3 \eps}} \geq 1 - 2 \ce^{-(\log n)^{1-3 \eps}}.
\]
If $\left( X \geq \frac{1}{2} \cdot \Ex{X} \right) \cap \mathcal{E}$ occurs, then in at least $C \log n/(4 (\log n)^{2 \epsilon})$ sections, the potential $\Lambda$ decreases by a factor of $2\cdot\ce^{- \frac{c}{4}\cdot (\log n)^{2 \epsilon} }$, and the potential at the end of the last section can be bounded from above by
\[
  \Lambda^{\left(t_0 + \frac{C \log n}{(\log n)^{2 \eps}} \cdot \beta\right)} \leq \Lambda^{(t)} \cdot \left(2 \ce^{- \frac{c}{4}\cdot (\log n)^{2 \epsilon}} \right)^{ \frac{C \log n}{4 (\log n)^{2 \epsilon}}} \leq 9n^{2} \cdot 2^{ \frac{C \log n}{4 (\log n)^{2 \epsilon}}  } \cdot \ce^{- \frac{C \cdot c}{16}\cdot \log n},
\]
which is smaller than $1$ for $C:=48/c$. Recalling that with probability at least $1-2n^{-1}$, $\Lambda^{(t_0)} \leq 9n^2$, we conclude by the union bound that
\[
  \Pro{ \Lambda^{\left(t_0 + \frac{ 48/c\cdot \log n}{1-\lambda(\P)}\right)} = 0} \geq 1 - 2 \ce^{-(\log n)^{1-3 \eps}} - 2 n^{-1} \geq 1 - 3 \ce^{-(\log n)^{1-3 \eps}},
\]
which completes the proof of the first case, as $\Lambda^{\left(t_0 + \frac{ 48/c\cdot \log n}{1-\lambda(\P)}\right)} = 0$ implies that the maximum load of $x^{\left(t_0 + \frac{ 48/c\cdot \log n}{1-\lambda(\P)}\right)}$ is $1$.

%
%As in \eq{cruc}, we obtain that
%\[
%  \Ex{ \Lambda^{(t+\beta)} \, \mid \, \mathcal{E} } \leq \ce^{-\Omega( (\log n)^{2 \epsilon}) } \cdot \Lambda^{(t)}.
%\]
%Iterating this inequality $\tau = C \cdot \frac{\log n}{\beta \cdot (1-\lambda)}$ times starting from round $t_0$, and using $\Lambda^{(t_0)} \leq 9n^{2}$ gives
%\[
% \Ex{ \Lambda^{(t_0+\tau \cdot \beta)} \, \mid \, \mathcal{E} } \leq \ce^{-\tau \cdot \Omega( (\log n)^{2 \epsilon})} \cdot \Lambda^{(t_0)} \leq n^{-1},
%\]
%if $C > 0$ is a sufficiently large constant. Since $\tau \cdot \beta = \Oh\left(\frac{\log n}{1-\lambda}\right)$, Markov's inequality implies
%\begin{align*}
% \Pro{  \Lambda^{(t_0+\tau \cdot \beta)  } \geq 1 } &\leq n^{-1}.
%\end{align*}
%Since $\Lambda^{(t_0+\tau \cdot \beta ) } < 1$ implies $\Lambda^{(t_0+\tau \cdot \beta) } = 0$ which is equivalent to a maximum load of one, the proof of Case 1 is complete.
\textbf{Case~2: (Random Matching Model) $d \leq \ce^{(\log n)^{1/2}}$ and $\frac{1}{1-\lambda(\mathbf{P})} > (\log n)^{1/4}$.}
The proof of this case is very similar to Case~1. Here, we choose $\beta:=(\log n)^{48 \eps}$. Divide the time-interval $[t_0,t_0+ \beta \cdot \log n]$ into $\log n$ sections of length $\beta$. By the second statement of \lemref{lowdegree}, it holds for  any section $k \in \{1,\ldots, \log n \}$ that
\[
  \Ex{ \Lambda^{(t_0+k \cdot \beta)} \, \mid \, x^{(t_0+(k-1) \cdot \beta)}=x } \leq (\log n)^{-\eps } \cdot \Lambda^{(t_0+(k-1) \cdot \beta)},
\]
where $x$ is any load vector that satisfies $\mathcal{E}^{(t_0+(k-1) \cdot \beta)}$. Hence, by Markov's inequality,
\begin{align*}
  \Pro{ \Lambda^{(t_0+k \cdot \beta)} \leq 2 \cdot (\log n)^{-\eps} \cdot \Lambda^{(t_0+(k-1) \cdot \beta)} \, \mid \, x^{(t_0+(k-1) \cdot \beta)}=x  } \geq \frac{1}{2}.
\end{align*}
 In the following, we call a section $k \in \{1,\ldots,\log n\}$ good if
(1) $\mathcal{E}^{(t_0+(k-1) \cdot \beta)}$ does not occur or (2) $\Lambda^{(t_0+k\cdot \beta)} \leq 2 (\log n)^{-\eps} \cdot \Lambda^{(t_0+(k-1) \cdot \beta)}$.
 Applying a Chernoff bound to the random variable $X$ counting the number of good sections,
\[
  \Pro{ X \geq \frac{\log n}{4} } \geq \Pro{ X \geq \frac{1}{2} \cdot \Ex{X} } \geq 1 - \ce^{-\frac{\log n}{8}}.
\]
Since $\Pro{ \mathcal{E} } \geq 1 - 2n^{-1}$, we conclude by the union bound that with probability at least $1 - 2 \ce^{-\frac{\log n}{8}}$,
the potential at the end of the last section is bounded from above by
\[
 \Lambda^{(t_0+\beta \cdot \log n)} \leq \Lambda^{(t_0)} \cdot (2 \log n)^{-\eps \cdot \frac{\log n}{4} } = 9 n^{2} \cdot n^{-\omega(1)}.
\]
Since $\epsilon \leq \frac{1}{192}$, we have
$
  \beta \cdot \log n \leq (\log n)^{1/4} \cdot \log n \leq \frac{\log n}{1-\lambda(\mathbf{P})},
$ and the analysis of the second case is complete.

\textbf{Case~3: (Random Matching Model) $d > \ce^{(\log n)^{1/2}}$.}
The proof of this case is the same as Case~1 and 2. In fact, it is even slightly
simpler, because \lemref{highdegree} implies an exponential drop of the potential $\Lambda$ for every single round.

\textbf{Case~4: (Balancing Circuit Model).}
Finally, we consider the balancing circuit model with a sequence of $d=\Oh(1)$ matchings. Since $d$ is a constant, it holds for every node $u \in V$ that at most $d$ neighbors of $u$ in $G$ appear in the matching matrices $\M^{(1)},\ldots,\M^{(d)}$. For this reason, we may assume that the underlying graph $G$ has bounded maximum degree. Moreover,
we can apply \lemref{wellknown} and \lemref{circuitlocalexpansion} to conclude that for all nodes $v \in V$ and any $\beta \in \N$,
\begin{align*}
  \left\| \bM_{v,\cdot}^{\beta} \right\|_{\infty} \leq \frac{1}{n} +  \min \left\{ (\lambda(\M))^{\beta/2},\Oh(\beta^{-1/8}) \right\}, %\label{eq:lastt}
\end{align*}
which corresponds to the inequalities \eq{goodexpansion} and \eq{badexpansion} for the random matching model.
Therefore, the same statements as in \lemref{smallh}, \lemref{smallt}, and \lemref{lowdegree} hold for the balancing circuit model with the only difference that any time-interval containing $x$ rounds in the random matching model is replaced by $x \cdot d$ rounds in the balancing circuit model. Consequently, the analysis of the balancing circuit model is the same as the analysis of the random matching model (Case 1 and Case 2).
\end{proof}

\section{The Diffusion Model}\label{sec:diffusion}

In the diffusion model, the (continuous) load vector $\xi^{(t)} \in \mathbb{R}^n$ in round $t \geq 1$ is given by the recursion
$
\xi^{(t)}=\xi^{(t-1)}\mathbf{P}, $
 where, for any $\gamma\geq 1$, the {\em diffusion matrix} $\mathbf{P}=\P(\gamma)$  of graph $G$ is defined as follows:  For any pair of nodes $u$ and $v$,
$\P_{u,v}=\frac{1}{\gamma\Delta}$ if $\{u,v\}\in E$, $\P_{u,v}=1-\frac{\deg(u)}{\gamma\Delta}$ if $u=v$, and $\P_{u,v}=0$, otherwise. Hence,
\begin{align*}
\xi^{(t)}_u
&=  \xi^{(t-1)}_u \cdot \left(1 - \frac{\deg(u)}{\gamma \Delta} \right) + \sum_{v \colon \{u,v\}\in E}
\xi^{(t-1)}_v \cdot \frac{1}{\gamma \Delta}
\\ &=\xi^{(t-1)}_u+\sum_{v \colon \{u,v\}\in E} \frac{\xi^{(t-1)}_v-\xi^{(t-1)}_u}{\gamma\Delta}.
\end{align*}

Common choices are $\gamma=2$ (resulting in a loop probability at least $1/2$) or $\gamma=1+1/\Delta$, both ensuring convergence also on bipartite graphs.
Since $\P$ is symmetric, it has $n$ real eigenvalues
$\lambda_1(\P)\geq\dots
 \geq\lambda_n(\P)$, and we define $\lambda(\P):=\max \{|\lambda_2(\P)|,|\lambda_n(\P)|\}$. Note that $\lambda(\P)$ depends on $\gamma$.
 %If $\gamma \geq 2$, then all eigenvalues of $\P=\P(\gamma)$ are non-negative.
 Let
 $\P^{t}$ be the $t$-th power of $\P$ and $\P^{0}$ be the $n$ by $n$ identity matrix.

%
%\begin{lem}[{\cite[Lem.~1]{MGS98}}]\label{lem:continuous}
%    Let $G=(V,E)$ be an arbitrary connected graph.
%    Then for any initial load vector~$\xi^{(0)}$ and round $t\geq0$,
%    we have
%   \[
%      \sum_{u\in V} \left( \xi_u^{(t)} - \overline{\xi} \right)^2
%      \leq
%      \lambda(\P)^{2t} \, \sum_{u\in V} \left( \xi_u^{(0)} - \overline{\xi} \right)^2
%   .\]
%\end{lem}
%We will use the following immediate consequence of this lemma.
%%
%\begin{cor}\label{cor:continuous}
%    Let $G=(V,E)$ be an arbitrary connected graph.
%    Then for any round $t\geq0$ and any node $w \in V$, it holds that
%    \[
%      \sum_{u\in V} \left( \bP_{u,w}^{t} - \frac{1}{n} \right)^2
%      \leq
%      \lambda(\P)^{2t}.
%    \]
%\end{cor}

%The following well-known result bounds the discrepancy of~$\xi$.
As for the matching model, there is a natural upper bound on the convergence in terms of the spectral gap of $\bP$.
\begin{theorem}[{\cite[Thm.~1]{RSW98}}]\label{thm:continuous}
    Let $G$ be any graph, and consider the continuous case. Then, for any $\epsilon > 0$, the discrepancy is at most $\epsilon$ after $\frac{ 2}{1-\lambda(\P)}\cdot \log( \frac{K\,n^2}{ \epsilon}  )
    $ rounds for any initial load vector with discrepancy at most $K$.
\end{theorem}

\begin{theorem}\label{thm:continuouslower}
Let $G$ be any $d$-regular graph, and consider the continuous case with $\gamma \geq 2$. Then, for any $\epsilon > 0$ and $K > 0$, there is an initial load vector with discrepancy $K$ so that it takes at least $ \frac{1}{8\cdot(1-\lambda(\P))} \cdot \log( \frac{K}{4\epsilon n})$ rounds to achieve a discrepancy smaller than $\epsilon$.
\end{theorem}
\begin{proof}
The proof is very similar to the proof of \thmref{continuouslowermatching} in the randomized matching model, except that here the transition matrix is $\P$ instead of $\mathbf{Q}$.

Define
\[
\Delta_w(t):=\frac{1}{2}\sum_{v\in V}\left|
\mathbf{P}_{v,w}^t-\frac{1}{n}\right|,
\]
and $\tau_w(\tilde{\varepsilon}):=\min\{
t:\Delta_w(t')\leq \tilde{\varepsilon} \mbox{ for all } t'\geq t
\}$.
By
\cite[Proposition~1, Part~(ii)]{Si92}, it holds that
\[
  \tau := \max_{w \in V} \tau_{w}(\tilde{\epsilon}) \geq \frac{1}{2} \cdot \frac{\lambda(\mathbf{P})}{1-\lambda(\mathbf{P})} \cdot \log \left( \frac{1}{2 \tilde{\epsilon}} \right).
\]
Hence, for iteration $\tau-1$, there exists a vertex $w \in V$ with
\[
\frac{1}{2}\sum_{v\in V}\left|
\mathbf{P}_{v,w}^{\tau-1}-\frac{1}{n}\right|\geq \tilde{\varepsilon}.
\]
By setting $\tilde{\varepsilon}=2\varepsilon\cdot n/K$, we have that $\tau=\frac{1}{2} \cdot \frac{\lambda(\P)}{1-\lambda(\P)} \cdot \log( \frac{K}{4\epsilon n})$, and
\[
\frac{1}{2}\sum_{v\in V}\left|
\mathbf{P}_{v,w}^{\tau-1}-\frac{1}{n}\right|\geq \frac{2\varepsilon\cdot n}{K}.
\]
Hence,  there exists
a vertex $v \in V$ such that
\begin{align*}
  \left| \P^{\tau-1}_{v,w} - \frac{1}{n} \right| > \frac{\epsilon}{K}. %\label{eq:lbdiff}
\end{align*}
Now define a load vector $\xi^{(0)}$ by $\xi^{(0)}_u = K$ if $u=v$, and $\xi^{(0)}_u=0$, otherwise. Then, for any vertex $w \in V$, $\xi^{(\tau-1)}_w = \xi^{(0)}_v  \cdot \P_{v,w}^{\tau-1} = K \cdot \P_{v,w}^{\tau-1}$. Therefore,  the discrepancy in round $\tau-1$ is larger than $K \cdot \frac{\epsilon}{K} = \epsilon$.

Since the sum of all eigenvalues of $\frac{1}{d} \cdot \mathbf{A}$ equals $0$ and $\lambda_{1}(\frac{1}{d} \cdot \mathbf{A})=1$, we have that
\[
\lambda_2(\P) \geq \frac{1}{2} + \frac{1}{2}\cdot \lambda_2\left(\frac{1}{d} \cdot \mathbf{A}\right) \geq \frac{1}{2} - \frac{1}{2\cdot(n-1)},\]
which completes the proof.
\end{proof}

\subsection{The Discrete Case}

We study two natural protocols in the discrete case of the diffusion model. The first protocol is the vertex-based protocol \cite{BCFFS11}, where excess tokens are allocated by vertices. The second is the edge-based protocol \cite{FGS10}, where every edge performs an independent randomized rounding.

The {\em vertex-based protocol} from \cite{BCFFS11} works for $d$-regular graphs as follows. In every round $t$, every node $u$ first sends $\left\lfloor \frac{x_u^{(t-1)}}{d+1}\right \rfloor$ tokens to each neighbor and keeps the same amount of tokens for itself. After that, the remaining $x_u^{(t-1)} - (d+1) \cdot \left\lfloor \frac{x_u^{(t-1)}}{d+1}\right\rfloor$ tokens at node $u$ are randomly distributed (without replacement) among node $u$ and its $d$ neighbors. This corresponds to a diffusion matrix $\bP$ with $\gamma=1+1/d$.

Consider now the {\em edge-based protocol} from \cite{FGS10}, where the load sent along each edge is obtained by randomly rounding the flow that would be sent in the
continuous case to the nearest integer. While the protocol itself is quite natural, it may happen that negative loads occur during the execution. Let $x^{(0)} = \xi^{(0)} \in \mathbb{Z}^n$. Similar to the matching model, we now derive an expression for the deviation between the discrete and continuous models at some node $w$ in round $t$:
\begin{align}
  x_w^{(t)} - \xi_w^{(t)}
  &= \sum_{s=1}^{t} \sum_{u\in V} e_u^{(s)} \P^{t-s}_{u,w} = \sum_{s=1}^{t} \sum_{u\in V}
     \sum_{v \colon \{u,v\} \in E} e_{u,v}^{(s)} \P^{t-s}_{u,w} \notag\\
   &= \sum_{s=1}^{t} \sum_{ [u:v] \in E}
     e_{u,v}^{(s)} \, \left( \P^{t-s}_{u,w} - \P^{t-s}_{v,w} \right),
  \label{eq:diffstd}
\end{align}
where $e_{u,v}^{(s)}$ is the rounding error for each edge $[u:v] \in E$ in round $s$, defined by
\[
e_{u,v}^{(s)}=\left\{ \begin{aligned}
         \left\lceil \frac{ \xi_{v}^{(s-1)} - \xi_{u}^{(s-1)}}{\gamma \Delta}   \right\rceil - \frac{ \xi_{v}^{(s-1)} - \xi_{u}^{(s-1)}}{\gamma \Delta} &\quad\quad \mbox{w.\,p. }\ \frac{ \xi_{v}^{(s-1)} - \xi_{u}^{(s-1)}}{\gamma \Delta} - \left\lfloor \frac{ \xi_{v}^{(s-1)} - \xi_{u}^{(s-1)}}{\gamma \Delta}   \right\rfloor, \\
                  \left\lfloor \frac{ \xi_{v}^{(s-1)} - \xi_{u}^{(s-1)}}{\gamma \Delta}   \right\rfloor - \frac{ \xi_{v}^{(s-1)} - \xi_{u}^{(s-1)}}{\gamma \Delta}&\quad\quad \mbox{w.\,p. }\ \left\lceil \frac{ \xi_{v}^{(s-1)} - \xi_{u}^{(s-1)}}{\gamma \Delta}  \right\rceil - \frac{ \xi_{v}^{(s-1)} - \xi_{u}^{(s-1)}}{\gamma \Delta}.
                          \end{aligned} \right.
\]
%In other words, the flow sent along each edge in each round is obtained by a randomized rounding of the flow that would be sent in the continuous case.
Moreover, let $e_{u,v}^{(s)}=0$ if
$\frac{ \xi_{v}^{(s-1)} - \xi_{u}^{(s-1)}}{\gamma \Delta}$ is an integer.
By definition, we have $\Ex{ e_{u,v}^{(s)} } = 0$.  Further, for any set of different (not necessarily disjoint) edges, the rounding errors within the same round are mutually independent.

\subsection{Local Divergence and Discrepancy}

Based on the deviation between the discrete and continuous cases \eq{diffstd},
we now recall the definition of the (refined) local $p$-divergence for the diffusion model from \cite{FS09,BCFFS11}, which extends the original definition of \cite{RSW98}.

\begin{defi}[Local $p$-Divergence for Diffusion]
For any graph $G$ and $p \in \mathbb{Z}_{+}$,  the local $p$-divergence is
\[
\Psi_p(\P)=\max_{w\in V} \left( \sum_{t=0}^{\infty}\sum_{[u:v]\in E} \left|\P^{t}_{u,w}-\P^{t}_{v,w} \right|^p  \right)^{1/p},
\]
and the refined local $p$-divergence is
\[
\Upsilon_p(\P)=\max_{w\in V} \left( \frac{1}{2} \sum_{t=0}^{\infty} \sum_{u\in V} \max_{v \in N(u)} \left|\P^{t}_{u,w}-\P^{t}_{v,w} \right|^p  \right)^{1/p}.
\]
\end{defi}
Clearly, $\Upsilon_p(\P) \leq \Psi_p(\P)$.
We now present our bounds on the (refined) local $2$-divergence.

%which are independent of the spectral gap and depend only on the maximum degree of the graph (provided that $\gamma$ is a constant larger than $1$).

%Although both local $2$-divergences can be considered for any of the two diffusive algorithms, we will use the relation between $\Upsilon_2(\P)$ and the vertex-based algorithm from \cite{BCFFS11}, and establish a relation between $\Psi_2(\P)$ and the edge-based algorithm.

\begin{theorem}\label{thm:divergencediffusion}
Let $G$ be any graph, and $\gamma>1$, which is not necessarily constant. Then, it holds that
\[
\Upsilon_2(\P) \leq \Psi_2(\P) \leq \sqrt{ \frac{\gamma\cdot \Delta}{2-2/\gamma} }.
\]
Moreover, for any $\gamma \geq 1$, we have $\Psi_2(\P) \geq \sqrt{\Delta}$  and
\[\Upsilon_2(\P) \geq \sqrt{ \frac{1 + \Delta}{2} }.\]
\end{theorem}

The upper bound on $\Psi_2(\P)$ is minimized for $\gamma=2$ and becomes, in that case, $\sqrt{2 \cdot \Delta}$. This result significantly improves over the previous bounds in \cite{BCFFS11}, which all depend on the spectral gap $1-\lambda(\bP)$, or are restricted to special networks. The analysis of the edge-based algorithm in \cite{FGS10} did not use $\Psi_2(\P)$ or $\Upsilon_2(\bP)$, but the bound on the discrepancy depends on the spectral gap.

\begin{proof}%[Proof of \thmref{divergencediffusion}]
The proof of \thmref{divergencediffusion} uses a similar approach based on the same potential function as in \cite[Lemma~1]{BFZ09}.  However, we  perform a  more precise analysis to handle the case where $\gamma$ is very close to $1$, for instance for the vertex-based protocol, we have $\gamma=1+1/d$. By contrast, the proof in \cite[Lemma~1]{BFZ09} is based on a sequential exposure of the edges and only works if $\gamma \geq 4$.

Fix any node $w\in V$.
Define the potential function in round $t$ by
\[
  \Phi^{(t)} := \sum_{u\in V}\left( \bP_{u,w}^{t} - \frac{1}{n} \right)^2,
\]
so $\Phi^{(0)} = 1- \frac{1}{n}$.

We first prove the upper bound on $\Phi^{(t)}$. Let  $y_u := \bP_{u,w}^{t-1}$. By definition of the diffusion model, we have
\begin{align*}
  \Phi^{(t)} &= \sum_{u\in V} \left( \bP_{u,w}^{t} - \frac{1}{n} \right)^2 \\
         &= \sum_{u\in V} \left( \Big(1 - \frac{\deg(u)}{\gamma \Delta} \Big) \cdot y_u + \left( \sum_{v \in N(u)} \frac{1}{\gamma \Delta} \cdot y_v \right) - \frac{1}{n} \right)^2 \\
         &= \sum_{u\in V} \left( \Big(1 - \frac{\deg(u)}{\gamma \Delta} \Big) \cdot \Big( y_u - \frac{1}{n} \Big) + \sum_{v \in N(u)} \frac{1}{\gamma \Delta} \Big(y_v - \frac{1}{n} \Big) \right)^2 \\
         &= \sum_{u\in V} \left[ \sum_{v \in N(u)}  \frac{1}{\deg(u)} \cdot \left( \Big(1 - \frac{\deg(u)}{\gamma \Delta} \Big) \cdot \Big( y_u - \frac{1}{n} \Big) + \frac{\deg(u)}{\gamma \Delta} \Big(y_v - \frac{1}{n} \Big) \right) \right]^2.
\intertext{Using the notation $\EXX{v \in N(u)}{X(v)}$ for the expection of a random variable $X(v)$, where $v \in N(u)$ is chosen uniformly at random, we can rewrite the above expression as}
         \Phi^{(t)}&= \sum_{u\in V} \left( \EXX{v \in N(u)}{   \Big(1 - \frac{\deg(u)}{\gamma \Delta} \Big) \cdot \Big( y_u - \frac{1}{n} \Big) + \frac{\deg(u)}{\gamma \Delta}\cdot \Big(y_v - \frac{1}{n} \Big)     } \right)^2, \\
         \intertext{and applying Jensen's inequality gives}
      \Phi^{(t)}   &\leq \sum_{u\in V}  \EXX{v \in N(u)}{   \left( \Big(1 - \frac{\deg(u)}{\gamma \Delta} \Big) \cdot \Big( y_u - \frac{1}{n} \Big) + \frac{\deg(u)}{\gamma \Delta} \cdot\Big(y_v - \frac{1}{n} \Big) \right)^2     } \\
         &= \sum_{u\in V} \sum_{v \in N(u)} \frac{1}{\deg(u)} \cdot \left( \Big(1 - \frac{\deg(u)}{\gamma \Delta} \Big) \cdot \Big( y_u - \frac{1}{n} \Big) + \frac{\deg(u)}{\gamma \Delta}\cdot \Big(y_v - \frac{1}{n} \Big) \right)^2  \\
         &= \sum_{[u:v] \in E} \Biggl\{ \frac{1}{\deg(u)} \cdot \left( \Big(1 - \frac{\deg(u)}{\gamma \Delta} \Big) \cdot \Big( y_u - \frac{1}{n} \Big) + \frac{\deg(u)}{\gamma \Delta}\cdot \Big(y_v - \frac{1}{n} \Big) \right)^2  \\
         &\quad\quad\quad\quad \, + \frac{1}{\deg(v)} \cdot \left( \Big(1 - \frac{\deg(v)}{\gamma \Delta} \Big) \cdot \Big( y_v - \frac{1}{n} \Big) + \frac{\deg(v)}{\gamma \Delta}\cdot \Big(y_u - \frac{1}{n} \Big) \right)^2 \Biggr\}.
\end{align*}
Note that
\[
  \Phi^{(t-1)} = \sum_{u\in V} \Big( y_u - \frac{1}{n} \Big)^2 = \sum_{[u:v] \in E} \left\{ \frac{1}{\deg(u)}\cdot \Big( y_u - \frac{1}{n} \Big)^2 + \frac{1}{\deg(v)}\cdot \Big( y_v - \frac{1}{n} \Big)^2 \right\},
  \]
and using the upper bound on $\Phi^{(t)}$ from above, we obtain
\begin{align*}
\lefteqn{ \Phi^{(t-1)} - \Phi^{(t)} }  \\ &\geq \sum_{[u:v] \in E} \left\{ \frac{1}{\deg(u)}\cdot \Big( y_u - \frac{1}{n} \Big)^2 + \frac{1}{\deg(v)}\cdot \Big( y_v - \frac{1}{n} \Big)^2 \right\}  \\ & \quad -\sum_{[u:v] \in E} \Biggl\{ \frac{1}{\deg(u)} \cdot \left( \Big(1 - \frac{\deg(u)}{\gamma \Delta} \Big) \cdot \Big( y_u - \frac{1}{n} \Big) + \frac{\deg(u)}{\gamma \Delta}\cdot \Big(y_v - \frac{1}{n} \Big) \right)^2  \\
         &\quad\quad\quad\quad\quad \, + \frac{1}{\deg(v)} \cdot \left( \Big(1 - \frac{\deg(v)}{\gamma \Delta} \Big) \cdot \Big( y_v - \frac{1}{n} \Big) + \frac{\deg(v)}{\gamma \Delta}\cdot \Big(y_u - \frac{1}{n} \Big) \right)^2 \Biggr\} \\
       &= \sum_{[u:v] \in E} \Biggl\{ \underbrace{ \frac{1}{\deg(u)}\cdot \Big( y_u - \frac{1}{n} \Big)^2  -  \frac{1}{\deg(u)} \cdot \left( \Big(1 - \frac{\deg(u)}{\gamma \Delta} \Big) \cdot \Big( y_u - \frac{1}{n} \Big) + \frac{\deg(u)}{\gamma \Delta}\cdot \Big(y_v - \frac{1}{n} \Big) \right)^2 }_{=:A} \\
         &\quad\quad\quad\quad \, + \underbrace{ \frac{1}{\deg(v)}\cdot \Big( y_v - \frac{1}{n} \Big)^2 - \frac{1}{\deg(v)} \cdot \left( \Big(1 - \frac{\deg(v)}{\gamma \Delta} \Big) \cdot \Big( y_v - \frac{1}{n} \Big) + \frac{\deg(v)}{\gamma \Delta}\cdot \Big(y_u - \frac{1}{n} \Big) \right)^2 }_{=:B} \Biggr\} \\
\end{align*}
We first compute $A$.
\begin{align*}
 A &= \frac{1}{\deg(u)} \cdot \Bigg[ \Big( y_u - \frac{1}{n} \Big)^2  -   \left( \Big(1 - \frac{\deg(u)}{\gamma \Delta} \Big) \cdot \Big( y_u - \frac{1}{n} \Big) + \frac{\deg(u)}{\gamma \Delta}\cdot \Big(y_v - \frac{1}{n} \Big) \right)^2 \Bigg] \\
   &= \frac{1}{\deg(u)}\cdot \left( y_u - \frac{1}{n} + \Big(1 - \frac{\deg(u)}{\gamma \Delta} \Big) \cdot \Big( y_u - \frac{1}{n} \Big) + \frac{\deg(u)}{\gamma \Delta} \Big(y_v - \frac{1}{n} \Big)      \right) \\ &\quad\, \cdot \left(y_u - \frac{1}{n} -    \Big(1 - \frac{\deg(u)}{\gamma \Delta} \Big) \cdot \Big( y_u - \frac{1}{n} \Big) - \frac{\deg(u)}{\gamma \Delta} \Big(y_v - \frac{1}{n} \Big)   \right),
   \intertext{where the last equality follows from $p^2 - q^2 = (p+q) \cdot (p-q)$. Further,}
  A  &= \frac{1}{\deg(u)} \left( \Big( 2 - \frac{\deg(u)}{\gamma \Delta} \Big) \cdot \Big( y_u - \frac{1}{n} \Big) + \frac{\deg(u)}{\gamma \Delta}\cdot \Big(y_v - \frac{1}{n} \Big)  \right)  \\ &~~~~~ \cdot \left(  \frac{\deg(u)}{\gamma \Delta} \cdot \Big( y_u - \frac{1}{n} \Big) - \frac{\deg(u)}{\gamma \Delta}\cdot \Big(y_v - \frac{1}{n} \Big)  \right) \\
  % &= \frac{1}{\deg(i)} \cdot \left( 2 x_i - \frac{2}{n} - \frac{\deg(i)}{\gamma \Delta} x_i + \frac{\deg(i)}{\gamma \Delta} x_j \right) \cdot \left( \frac{\deg(i)}{\gamma \Delta} \cdot (x_i-x_j)     \right) \\
   &=  \frac{1}{\deg(u)} \cdot \left( 2 y_u - \frac{2}{n} - \frac{\deg(u)}{\gamma \Delta} (y_u - y_v)     \right) \cdot  \frac{\deg(u)}{\gamma \Delta} \cdot (y_u-y_v) \\
&= \frac{1}{\gamma \Delta} \cdot \Big( y_u - y_v \Big) \cdot \Big(   2 y_u - \frac{2}{n}  - \frac{\deg(u)}{\gamma \Delta} \cdot ( y_u - y_v ) \Big).
%&= \frac{1}{\Delta} \cdot \Big( x_i - x_j \Big) \cdot \Big( x_i  - \frac{\deg(i)}{4 \Delta} \cdot ( x_i - x_j ) \Big) - \frac{1}{\Delta n} \cdot \Big( x_i - x_j \Big).
\end{align*}
Similarly, we get
\begin{align*}
 B &= \frac{1}{\gamma \Delta} \cdot \Big( y_v - y_u \Big) \cdot \Big( 2  y_v - \frac{2}{n} - \frac{\deg(v)}{\gamma \Delta} \cdot ( y_v - y_u ) \Big),
\end{align*}
and thus,
\begin{align*}
 A + B &= \frac{1}{\gamma \Delta} \cdot \Big( y_u - y_v \Big) \cdot \left( 2 y_u - \frac{2}{n} - \frac{\deg(u)}{\gamma \Delta} \cdot \Big( y_u - y_v \Big) - 2 y_v + \frac{2}{n} + \frac{\deg(v)}{\gamma \Delta} \cdot ( y_v - y_u ) \right) \\
&= \frac{1}{\gamma \Delta} \cdot \Big( y_u - y_v \Big) \cdot  \Big(y_u - y_v\Big) \cdot \Big( 2 - \frac{\deg(u)+\deg(v)}{\gamma \Delta} \Big)  \\
&\geq \frac{1}{\gamma \Delta} \cdot \Big( y_u - y_v \Big)^2 \cdot \left( 2 - \frac{2 \Delta}{\gamma \Delta} \right) \\
&= \frac{1}{\gamma \Delta} \cdot \Big( y_u - y_v \Big)^2 \cdot \left( 2 - \frac{2}{\gamma} \right)\ .
\end{align*}
Therefore,
\begin{align*}
 \Phi^{(t-1)} - \Phi^{(t)} &\geq \frac{1}{\gamma \Delta} \cdot \left( 2 - \frac{2}{\gamma} \right)  \cdot \sum_{[u:v] \in E}  \Big( y_u - y_v \Big)^2,
\end{align*}
i.e.,
\[
\sum_{[u:v]\in E}\left(\P^{t-1}_{u,w}- \P^{t-1}_{v,w} \right)^2 \leq
\frac{\gamma\Delta}{2-2/\gamma}\left( \Phi^{(t-1)}-\Phi^{(t)} \right)\ .
\]
Finally, summing over all rounds gives
\[
\sum_{t=1}^{\infty}\sum_{[u:v]\in E}\left( \P^{t-1}_{u,w}- \P^{t-1}_{v,w}\right)^2\leq \frac{\gamma \Delta}{2 - 2/\gamma}\sum_{t=1}^{\infty} \left( \Phi^{(t-1)}-\Phi^{(t)} \right)\leq\frac{\gamma\Delta}{2-2/\gamma}\cdot\Phi^{(0)} =\frac{\gamma \Delta}{2 - 2/\gamma} \left(1-\frac{1}{n}\right),
\] and thus, $\Psi_2(\bP) <
 \sqrt{\frac{\gamma\cdot\Delta}{2-2/\gamma}}$.

For the lower bound on $\Psi_2(\bP)$, we consider a node $w \in V$ with $\deg(w)=\Delta$ to obtain that
\begin{align*}
 \Psi_2(\bP) &\geq \sqrt{ \sum_{[u:v] \in E} \left( \bP_{u,w}^{0} - \bP_{v,w}^{0}      \right)^2 } \geq \sqrt{ \Delta \cdot \left( 1 - 0      \right)^2 } = \sqrt{\Delta}.
\end{align*}
In the same way, we prove the lower bound on $\Upsilon_2(\bP)$:
\begin{align*}
\Upsilon_2(\bP) &\geq \sqrt{ \frac{1}{2} \sum_{u\in V} \max_{v \in N(u)} \left(  \mathbf{P}_{u,w}^0 - \mathbf{P}_{v,w}^0  \right)^2 }
= \sqrt{ \frac{1}{2} (\Delta+1) \cdot 1 }\enspace.%\qedhere
\end{align*}
%The upper bound follows from $\Upsilon_2(\P) \leq \Psi_2(\P)$ by definition.
\end{proof}

\begin{lemma}\label{lem:martingalebounddiffusion}
Consider the edge-based diffusion model.
Fix two rounds $0 \leq t_1 < t_2$ and the load
vector $x^{(t_1)}$ at the end of round $t_1$. For any family of numbers $g^{(s)}_{u,v}\ ([u:v]\in E,~t_1+1 \leq s \leq t_2)$, define the random variable
\[
   Z := \sum_{s=t_1+1}^{t_2} \sum_{[u:v] \in E} g_{u,v}^{(s)}\cdot e_{u,v}^{(s)}.
\]
Then, $\Ex{Z}=0$, and for any $\delta > 0$, it holds that
\begin{align*}
 \Pro{ \left| Z - \Ex{Z}           \right|  \geq \delta       }
 \leq
2 \exp \left(-  \frac{ \delta^2}{8 \sum_{s=t_1+1}^{t_2} \sum_{[u:v] \in E} \left(g^{(s)}_{u,v}\right)^2 }         \right).
\end{align*}
\end{lemma}
The proof of \lemref{martingalebounddiffusion} is the same as \lemref{martingalebound}, except that the inner sum runs over all edges of the graph, and we have to use the slightly weaker inequality $ \left|e_{u,v}^{(s)} \right| < 1$ instead of $ \left|e_{u,v}^{(s)} \right| \leq 1/2$, which results in an extra factor of $4$ in the denominator.

We now use the above machinery to derive upper bounds on the discrepancy for the edge-based and vertex-based protocol.
\begin{theorem}\label{thm:edgebased}
Let $G$ be an arbitrary graph, and consider the edge-based protocol. Then, the following statements hold:
\begin{itemize}
\item For any round $t$, it holds that
\[
  \Pro{ \max_{w \in V} \left| x_w^{(t)} - \xi_w^{(t)} \right| \leq 4  \sqrt{\log n} \cdot \sqrt{\frac{\gamma \cdot \Delta}{2 - 2/\gamma}} } \geq 1 - 2n^{-1}.
\]
\item After $\Oh\big(\frac{\log (Kn)}{1-\lambda(\bP)}\big)$ rounds, the discrepancy is at most $8  \sqrt{\log n} \cdot \sqrt{\frac{\gamma \cdot \Delta}{2 - 2/\gamma}} + 1$  with probability at least $1-n^{-1}$. In particular, for any constant $\gamma > 1$, after $\Oh\big(\frac{\log (Kn)}{1-\lambda(\bP)}\big)$ rounds, the discrepancy is at most $\Oh\big(\sqrt{\log n \cdot \Delta}\big)$  with probability at least $1-n^{-1}$.
    \end{itemize}
Moreover, consider the vertex-based protocol on a $d$-regular graph $G$. Then, the following statements hold:
\begin{itemize}
\item For any round $t$, it holds that
\[
  \Pro{ \max_{w \in V} \left| x_w^{(t)} - \xi_w^{(t)} \right| = \Oh\left( d^2 \sqrt{\log n} \right)} \geq 1 - 2n^{-1}.
\]
\item After $\Oh\big(\frac{\log (Kn)}{1-\lambda(\bP)}\big)$ rounds, the discrepancy is at most $\Oh\left( d^{2} \sqrt{\log n}\right)$ with probability at least $1-n^{-1}$.
    \end{itemize}
\end{theorem}

\begin{proof}
We prove this result in the same way as \thmref{deviationmatching}, but now we invoke
\lemref{martingalebounddiffusion} instead of \lemref{martingalebound}.
Fix any node $w\in V$, round $t$ and define $Z_w:=x_w^{(t)}-\xi_w^{(t)}$. By \eq{diffstd},
\[Z_w = x_w^{(t)} - \xi_w^{(t)} = \sum_{s=1}^{t} \sum_{[u:v] \in E} \left(  \P_{u,w}^{t-s} - \P_{u,v}^{t-s}       \right) \cdot e_{u,v}^{(s)}.\]
 Applying \lemref{martingalebounddiffusion}, we have $\Ex{Z_w}=0$, and for any $\delta > 0$, it holds that\begin{align*}
 \Pro{ | Z_w | \geq \delta } &\leq 2 \exp \left( - \frac{ \delta^2 }{   8 \sum_{s=1}^{t} \sum_{[u:v] \in E} \left( \P_{u,w}^{t-s} - \P_{u,v}^{t-s}    \right)^2   } \right).
\end{align*}
By definition of the local $2$-divergence, the denominator above is upper bounded by $8\cdot \Psi_2(\P)^2$, and we obtain for $\delta = 4 \sqrt{\log n} \cdot \Psi_2(\P)$ that
\begin{align*}
 \Pro{ |Z_w| \leq 4 \sqrt{\log n }\cdot \Psi_2(\P) } \geq 1 - 2 n^{-2},
\end{align*}
and the first statement follows by using the union bound and the upper bound on $\Psi_2(\P)$ from \thmref{divergencediffusion}. The second statement follows directly by applying~\thmref{continuous}.

For the vertex-based algorithm, it was shown in \cite[Proof of Thm.~1.1]{BCFFS11} that
\begin{align*}
 \Pro{ \max_{w \in V} \left| x_w^{(t)} - \xi_w^{(t)} \right| = \Oh\left( \sqrt{ \log n} \cdot d\cdot \Upsilon_2(\bP) \right)} \geq 1 -n^{-1}.
\end{align*}
Using \thmref{divergencediffusion} with $\gamma=1+1/d$ (and $\Delta=d$) gives
$
  \Upsilon_2(\bP) \leq \Psi_2(\bP) = \Oh(d),
$
which yields the third statement.
Finally, the bound on the discrepancy follows immediately from~\thmref{continuous}.
\end{proof}

\paragraph{Acknowledgements:} We are grateful to Hoda Akbari for the comments on an earlier version of this work.

% \NOTE{Thomas: Check paragraph.}
% For the edge-based protocol, it is not difficult to construct graphs and initial load vectors, so that in the first round, the load sent along each edge is rounded up or down with probability $1/2$ independently. As a consequence, there will be at least one vertex in the graph that receives $\Omega( \sqrt{\Delta \cdot \log \Delta})$ too many tokens. Hence the first statement is matched for graphs with a degree of $n^{\eps}$ with $\eps > 0$ being a sufficiently small constant.
%
% A similar instance can be constructed for the vertex-based protocol, where in the first round, every vertex distributes $(d+1)/2$ excess tokens randomly to the $d$ neighbors and itself. Also in this case, there is a vertex which receives $\Omega( \sqrt{\Delta \cdot \log \Delta})$ too many tokens, which differs at most quadratically from the upper bound in the third statement provided that $\Delta=n^{\eps}$.

%\setlength{\itemsep}{-1cm}
\bibliographystyle{abbrv}
\bibliography{antrag}

\appendix

\section{Concentration Inequalities}

The following concentration inequality is known as ``Azuma's Inequality''.
\begin{theorem}[{\cite[page~92]{MR98}}]\label{thm:martingale}
    Let $X_1,\ldots,X_n$ be a martingale sequence such that for each~$\ell
    \in \{1,\ldots,n\}$, there is a non-negative~$c_{\ell}$ such that
    \[
        \big| X_{\ell} - X_{\ell-1} \big| \leq c_\ell.
    \]
    Then for any~$\delta > 0 $,
    \[
      \Pro{ |X_n - X_0 | \geq \delta } \leq 2 \,
      \exp \bigg( - \frac{\delta^2}{2 \sum_{\ell=1}^n c_\ell^2} \bigg).
    \]
\end{theorem}

\begin{lemma}[{\cite[Theorem~3.4]{PS97}}]\label{lem:panconesi}
Let $X_1,X_2,\ldots,X_n$ be $0/1$-random variables with $\Pro{X_i=1}=p_i$.
Let $X:= \sum_{i=1}^n X_i$ and $\mu=\Ex{X}=\sum_{i=1}^n p_i$.
If for all subsets $S \subseteq \{1,\ldots,n\}$,
\[
  \Pro{ \bigcap_{i \in S} \left\{ X_i = 1 \right\} } \leq \prod_{i \in S} \Pro{ X_i = 1},
\]
then it holds for all $\delta > 0$ that
\begin{align*}
 \Pro{ X \geq (1+\delta) \mu } &\leq \left(   \frac{\ce^{\delta}}{(1+\delta)^{1+\delta}}   \right)^{\mu}.
\end{align*}
\end{lemma}

The following lemma is a standard Chernoff bound for the sum of independent, identically distributed geometric random variables.
\begin{lemma}\label{lem:chernoffgeo}
Consider some fixed $0< p <1$. Suppose that $X_1,\ldots,X_n$ are
independent geometric random variables on $\mathbb{N}$ with $\Pro{X_i = k} =
(1-p)^{k-1} p$ for every $k \in \mathbb{N}$. Let $X = \sum_{i=1}^n
X_i$, $\mu=\Ex{X}$. Then, it holds for all $\beta > 0$ that
\[
  \Pro { X \ge  (1+\beta) \mu } \le \exp\left(-\frac{\beta^2\cdot n}{2 \, (1+\beta) }\right). \]
\end{lemma}

We continue to define the notion of negative regression.
\begin{defi}[{\cite[Definition~21]{DR98}}]\label{def:regression}
A random vector $X=(X_1,\ldots,X_n) \in \{0,1\}^{n}$ is said
to satisfy the {\em negative regression condition} if for any two disjoint subsets $\mathcal{I}$ and $\mathcal{J}$ of $\{1,\ldots,n\}$ and any non-decreasing function $f \colon \{0,1\}^{|\mathcal{I}|}  \to \mathbb{R}$,
\[
  \Ex{ f(X_i, i \in \mathcal{I}) \, \mid \, X_j = \sigma_j, j \in \mathcal{J}}
\]
is non-increasing in each $\sigma_j \in \{0,1\}, j \in \mathcal{J}$.
\end{defi}

\begin{lemma}[{\cite[Lemma~26]{DR98}}] \label{lem:lemmaA8}
Consider a random vector $X=(X_1,\ldots,X_n) \in \{0,1\}^{\gamma}$  that satisfies the negative regression condition. Then for any index set $\mathcal{I} \subseteq \{1,\ldots,n\}$ and any non-decreasing functions $f_i, i \in \mathcal{I}$,
\[
 \Ex{ \prod_{i \in \mathcal{I}} f_i(X_i) } \leq \prod_{i \in \mathcal{I}} \Ex{ f_i(X_i) }.
\]
\end{lemma}

\end{document}